\documentclass[11pt,a4paper]{article}
\usepackage{amsfonts}
\usepackage{amsmath}
\usepackage{amssymb}
\usepackage{amsthm}
\usepackage[margin=1in,a4paper]{geometry}
\usepackage[sc,center,compact,noindentafter,medium]{titlesec}
\usepackage[onehalfspacing,nodisplayskipstretch]{setspace}
\usepackage{graphicx}
\usepackage{url}
\usepackage{array}
\usepackage{float}
\usepackage{color}
\usepackage{xcolor}

\numberwithin{equation}{section}

\allowdisplaybreaks

\newcommand{\CI}{\operatorname{CI}}

\providecommand{\limfunc}[1]{\operatorname{#1}}
\providecommand{\func}[1]{\operatorname{#1}}
\definecolor{darkblue}{rgb}{.2, 0.2,.8}
\definecolor{xdarkblue}{rgb}{0,0,0.5}
\definecolor{darkgreen}{rgb}{0,0.5,0.3}
\definecolor{darkred}{rgb}{.8, .1,.1}

\theoremstyle{plain}
\newtheorem{theorem}{\normalfont\scshape Theorem}[section]
\newtheorem{corollary}{\normalfont\scshape Corollary}[section]
\newtheorem{lemma}{\normalfont\scshape Lemma}[section]
\newtheorem{assumption}{\normalfont\scshape Assumption}[section]
\theoremstyle{remark}
\newtheorem{remark}{\normalfont\scshape Remark}[section]
\AtBeginDocument{\setlength{\abovedisplayskip}{10pt plus 2pt minus 7pt}\setlength{\abovedisplayshortskip}{0pt plus 2pt}\setlength{\belowdisplayskip}{10pt plus 2pt minus 7pt}\setlength{\belowdisplayshortskip}{5pt plus 2pt minus 3pt}}

\begin{document}

\title{ }

\begin{center}
{\LARGE \textsc{Bootstrap Inference in Autoregressive Duration Models}}

\renewcommand{\thefootnote}{} \footnote{\hspace{-7.2mm} $^{a}$Department of
Economics, University of Bologna, Italy and Department of Economics,
University of Exeter, UK. \newline
$^{b}$Department of Mathematical Sciences, University of Copenhagen,
Denmark. \newline
$^{c}$Department of Economics, University of Copenhagen, Denmark. \newline
This work was presented at Aarhus University (2025), James MacKinnon 75th
birthday conference; we thank participants for their feedback. We also thank
seminar and conference participants at Toulouse School of Economics and
Oxford University. G.~Cavaliere and A.~Rahbek gratefully acknowledge support
from the Independent Research Fund Denmark (DFF Grant 7015-00028) and from
the Italian Ministry of University and Research (PRIN 2020 Grant
2020B2AKFW). Correspondence to: Giuseppe Cavaliere, Department of Economics,
University of Bologna, email: \textsf{giuseppe.cavaliere@unibo.it}.} %
\addtocounter{footnote}{-1} \renewcommand{\thefootnote}{\arabic{footnote}}

{\normalsize \vspace{0.1cm} }

{\large \textsc{Giuseppe Cavaliere}}$^{a}$, {\large \textsc{Thomas Mikosch}}$%
^{b}$, {\large \textsc{Anders Rahbek}}$^{c}$

{\large \textsc{and Frederik Vilandt}}$^{c}${\normalsize \vspace{0.2cm}%
\vspace{0.2cm}}

July 11, 2026{\normalsize \vspace{0.2cm}\vspace{0.2cm} }

\bigskip

\textsc{Abstract}\vspace{-0.15cm}
\end{center}

{\small This paper develops bootstrap methods for likelihood-based inference
in autoregressive conditional duration (ACD) models, where the sample size
is endogenously determined by durations observed over a fixed time span.
This feature fundamentally shapes the asymptotic framework, particularly so
when the durations do not have finite expectation. Building on recent limit
theory for heavy-tailed and integrated ACD processes, we analyze recursive
bootstrap schemes that either fix the time span (yielding a random sample
size) or fix the number of durations (yielding a random time span). We
establish a bootstrap theory for ACD models that links naturally to renewal
theory with random sample sizes. For the fixed-count bootstrap, we prove
first-order validity in the finite-mean and boundary cases and characterize
the random limiting bootstrap distribution in the infinite-mean case.
Although classical bootstrap consistency can fail when the durations have
infinite expectation, we argue that the bootstrap remains valid for
percentile and reverse-percentile inference and yields asymptotically normal
t-statistics. Monte Carlo evidence shows that the proposed methods have good
finite sample properties in both finite- and infinite-mean settings, and are
robust to distributional misspecification relative to the exponential
likelihood. We conclude with an empirical application to cryptocurrency ETFs.%
}\bigskip

\bigskip

\medskip\noindent\textsc{Keywords}: Autoregressive conditional duration;
Bootstrap; Random sample size; Heavy tails; Mixed normality; Point processes.

\medskip

\noindent \textsc{JEL classification}: C22; C32; C52.

\bigskip

\newpage

\section{Introduction}

\label{sec:introduction}

Autoregressive conditional duration (ACD) models are designed for
irregularly spaced event data. Their central object is the waiting time
between consecutive events, for example transaction durations, quote
durations, intervention durations or other inter-arrival times in economics
and finance. In these applications the econometrician typically observes all
events occurring during a given calendar-time window. The number of
observations is therefore not chosen directly: it is generated by the
stochastic duration process itself.

The ACD model was introduced by Engle and Russell (1998) for durations
between financial transactions, and has since become a standard tool for
high-frequency durations and other irregularly spaced event data. Important
extensions and surveys include Engle (2000), Pacurar (2008), Hautsch (2012),
Fernandes, Medeiros and Veiga (2016), Bhogal and Variyam (2019), and Saulo
et al. (2025). This literature typically emphasizes the multiplicative
conditional-mean structure and its close analogy with autoregressive
conditional heteroskedastic (ARCH, GARCH) and multiplicative error model
(MEM) specifications. This analogy is useful but, as demonstrated here, it
can be misleading for asymptotic inference, and the bootstrap theory. In
particular, in ACD models the likelihood is evaluated over a random number
of durations observed during a fixed calendar-time span, whereas in ARCH,
GARCH and standard MEM models the sample size is treated as deterministic
and given ex ante.

To fix ideas, let $\{t_{i}\}_{i\geq 0}$ denote event times, with $%
0=t_{0}<t_{1}<t_{2}<\cdots $, and define the durations by $%
x_{i}=t_{i}-t_{i-1}>0$. Over a fixed observation window $[0,T]$, $T>0$, the
observed number of events is 
\begin{equation}
n(T)=\max \{k\geq 0:\sum\nolimits_{i=1}^{k}x_{i}\leq T\}.  \label{eq:defnT}
\end{equation}%
The data are the durations $x_{1},\ldots ,x_{n(T)}$. Thus the sample size $%
n(T)$ is random and jointly determined with the same durations that enter
the likelihood. As mentioned, this differs from the usual ARCH and MEM
settings, where the sample size is treated as deterministic.

The randomness of $n(T)$ is crucial. Recent limit theory for ACD models in
Cavaliere, Mikosch, Rahbek and Vilandt (2024, 2025, 2026) shows that, under
strict stationarity of the duration process $\{x_{i}\}$, the rate and the
limiting distribution of likelihood estimators depend on the tail index $%
\kappa $ of the stationary distribution of $x_{i}$, which satisfies%
\begin{equation}
\mathbb{P}(x_{i}>x)\sim c_{\kappa }x^{-\kappa },\qquad x\rightarrow \infty ,
\label{eq:tailindex}
\end{equation}%
for some constant $c_{\kappa }>0$. When $\kappa >1$, durations have finite
mean $\mathbb{E}[x_{i}]<\infty $, and the maximum likelihood estimator
converges at the calendar-time rate $\sqrt{T}$. At the boundary $\kappa =1$,
where $\mathbb{E}[x_{i}]=\infty $, the rate is slower, namely $\sqrt{T/\log T%
}$. Finally, when $0<\kappa <1$, durations have infinite mean $\mathbb{E}%
[x_{i}]=\infty $ and the estimator converges at rate $\sqrt{T^{\kappa }}$
and has a mixed normal limiting distribution. These regimes are driven by
the large-sample behavior of the random count $n(T)$, which depends on $%
\kappa $ being lower, equal, or greater than one.

This paper studies the non-standard consequences of the randomness of the
number of durations for bootstrap inference in ACD\ models. Any bootstrap
algorithm must decide whether to reproduce the original calendar-time window
or the realized number of durations.

A first natural bootstrap scheme preserves calendar time. It recursively
generates bootstrap durations $\{x_{i}^{\ast }\}$ until their cumulative sum
reaches the original time span $T$, and sets 
\begin{equation}
n^{\ast }(T)=\max \{k\geq 0:\sum\nolimits_{i=1}^{k}x_{i}^{\ast }\leq T\}.
\label{eq:bootnT}
\end{equation}%
We label this the \emph{random-count} bootstrap. It mimics how the original
data are collected, but its bootstrap count can be very different from the
observed $n(T)$, particularly when $\kappa $ approaches or falls below one.

A second scheme preserves the realized count and sets the number of events
in the bootstrap world as $n^{\ast }=n(T)$. We label this the \emph{%
fixed-count} bootstrap. The implied bootstrap time span, $T^{\ast
}=\sum\nolimits_{i=1}^{n(T)}x_{i}^{\ast }$, need not equal the original $T$,
which may seem counterintuitive for point-process data. Nevertheless, it is
close to what is often implemented in empirical MEM and ACD applications, it
is computationally stable, and as we will argue in the paper it turns out to
deliver valid inference without additional conditions other than
guaranteeing that the non-bootstrap estimator has a well-defined asymptotic
distribution.

The main contribution of the paper is to provide an asymptotic theory for
the fixed-count bootstrap. We show that the bootstrap consistently estimates
the distribution of the original likelihood estimator when $\kappa \geq 1$.
When $0<\kappa <1$, it does not consistently estimate the unconditional
mixed-normal distribution, but instead a component thereof. We show that
despite this, naive percentile bootstrap intervals remain first-order valid,
and the bootstrap $t$ statistic is asymptotically standard normal for all $%
\kappa >0$. Moreover, as we document in a simulation study, the bootstrap
works well in terms of coverage and average length of the implied confidence
intervals, also compared with the random-count bootstrap. This holds for all
values of $\kappa $, and also when the innovations of the duration model are
not exponentially distributed.

The results in the present paper relate to bootstrap inference for point
processes. Cavaliere, Lu, Rahbek and Staerk-Ostergaard (2023) develop
bootstrap inference for Hawkes and general point processes under finite-mean
conditions. While their parametric recursive bootstrap corresponds to the
parametric random count bootstrap (see Section 6 in Cavaliere et al., 2023),
their theory does not apply unless $\mathbb{E}[x_{i}]<\infty $. Hence, our
problem is different because the ACD recursion can generate durations with
infinite expectation. In this case, the counting process grows more slowly
than calendar time, and the random sample size induces additional randomness
to the asymptotic distribution of the likelihood estimator. This is
precisely the feature highlighted in the ACD (non-bootstrap) limit theory of
Cavaliere, Mikosch, Rahbek and Vilandt (2024, 2025, 2026). In particular,
Cavaliere, Mikosch, Rahbek and Vilandt (2026) show that, for any $\kappa
\leq 1$, the randomness of $n(T)$ breaks the conventional `deterministic
sample size' asymptotics. The present paper takes the next step by asking
which bootstrap schemes remain valid also for infinite-mean durations.

Extant bootstrap literature for the related MEM class is substantial. For
instance, Hidalgo and Zaffaroni (2007) study goodness-of-fit testing for ARCH%
$(\infty )$ models, while Perera, Hidalgo and Silvapulle (2016) develop a
bootstrap goodness-of-fit test for a class of ACD models. Perera and
Silvapulle (2023) provide bootstrap specification tests for dynamic
conditional distribution models. These contributions are directly relevant
to implementation, and their algorithms are close in spirit to the
fixed-count (residual) bootstrap procedures considered below. However, their
asymptotic arguments are formulated for a deterministic number of
observations, and they therefore do not address the renewal component
generated by \eqref{eq:defnT}, nor the change in rate and limiting law that
appears when durations have different tail indexes $\kappa $.

The paper is organized as follows. Section~\ref{sec:model} introduces the
model, assumptions and non-bootstrap limit theory. Section~\ref%
{sec:bootstrap} defines the bootstrap schemes and states the bootstrap
validity results. Section~\ref{sec:mc} presents results from a Monte Carlo
study. An empirical illustration based on cryptocurrency ETFs is provided in
Section~\ref{sec:emp}. Section~\ref{sec:conclusion} concludes. The Appendix
contains the full proofs and the required auxiliary bootstrap renewal
results.

\medskip

\noindent \textsc{Notation}. We write `$\overset{d}{\rightarrow }$', `$%
\overset{p}{\rightarrow }$' and `$\overset{a.s.}{\rightarrow }$' for
convergence in distribution, probability and almost surely. Conditional
bootstrap probability and expectation are denoted by $\mathbb{P}^{\ast }$
and $\mathbb{E}^{\ast }$. Bootstrap convergence is written as `$\overset{%
d^{\ast }}{\rightarrow }_{a.s.}$', `$\overset{d^{\ast }}{\rightarrow }_{p}$'
or `$\overset{d^{\ast }}{\rightarrow }_{d}$', according to whether the
conditional distribution converges almost surely, in probability or weakly
as a random probability measure (see Cavaliere and Georgiev, 2020). The
notation $\mathcal{MN}$ denotes a mixed-normal distribution.

\section{Model and non-bootstrap asymptotics}

\label{sec:model}

In this section we summarize the asymptotic (non-bootstrap) limit theory for
ACD\ as given in Cavaliere, Mikosch, Rahbek and Vilandt (2024, 2025, 2026).
Consider the simple exponential ACD model of order one, 
\begin{align}
x_{i}& =\psi _{i}(\theta )\varepsilon _{i},\qquad i=1,2,\ldots ,n(T)
\label{eq-ACD_1} \\
\psi _{i}(\theta )& =\omega +\alpha x_{i-1},  \label{eq-ACD_2}
\end{align}%
where $\theta =(\omega ,\alpha )^{\prime }$, $\omega >0$, $\alpha >0$, and $%
\{\varepsilon _{i}\}$ is i.i.d. exponential with unit mean. The exponential
likelihood is the reference likelihood throughout the paper. In the Monte
Carlo designs below we allow for non-exponential innovations as a robustness
exercise; in that case the same estimator is interpreted as a quasi-maximum
likelihood estimator. We let $\theta _{0}=\left( \omega _{0},\alpha
_{0}\right) ^{\prime }$ denote the true value of $\theta $, $\omega _{0}>0$, 
$\alpha _{0}>0$ and make the following assumption throughout:

\begin{assumption}
\label{ass:model} The parameter space $\Theta \subset (0,\infty )^{2}$ is
compact, $\theta _{0}$ is an interior point of $\Theta $, and such that the
stationarity condition $\mathbb{E}[\log (\alpha _{0}\varepsilon _{i})]<0$
holds.
\end{assumption}

The MLE\ $\hat{\theta}_{n(T)}=\limfunc{arg\,max}_{\theta \in \Theta }%
\mathcal{L}_{n(T)}(\theta )$, with log-likelihood function given by 
\begin{equation}
\mathcal{L}_{n(T)}(\theta )=\sum_{i=1}^{n(T)}\ell _{i}(\theta ),\qquad \ell
_{i}(\theta )=-\left( \log \psi _{i}(\theta )+\frac{x_{i}}{\psi _{i}(\theta )%
}\right) \text{.}  \label{eq:-log-Lik}
\end{equation}%
For the scalar null hypothesis $\mathsf{H}_{0}:\alpha =\alpha _{0}$, we also
define the studentized statistic%
\begin{equation}
\tau _{n(T)}=\frac{\hat{\alpha}_{n(T)}-\alpha _{0}}{\hat{\sigma}(\hat{\alpha}%
_{n(T)})},\qquad \hat{\sigma}^{2}(\hat{\alpha}_{n(T)})=\iota _{2}^{\prime }%
\mathcal{I}_{n(T)}(\hat{\theta}_{n(T)})^{-1}\iota _{2},  \label{eq:-tstat}
\end{equation}%
where $\iota _{2}=(0,1)^{\prime }$ and the observed information matrix $%
\mathcal{I}_{n(T)}(\hat{\theta}_{n(T)})$ is given by 
\begin{equation}
\mathcal{I}_{n(T)}(\hat{\theta}_{n(T)})=-\left. \frac{\partial ^{2}\mathcal{L%
}_{n(T)}(\theta )}{\partial \theta \partial \theta ^{\prime }}\right\vert
_{\theta =\hat{\theta}_{n(T)}}\text{.}  \label{eq:info}
\end{equation}%
The stationarity condition $\mathbb{E}[\log (\alpha _{0}\varepsilon _{i})]<0$
in Assumption (\ref{ass:model}) corresponds, for exponentially distributed
innovations, to $\alpha _{0}<\exp (\gamma )\simeq 1.78$, where $\gamma $ is
Euler's constant. The tail index $\kappa $ of the stationary solution is the
unique positive solution to 
\begin{equation}
\mathbb{E}[(\alpha _{0}\varepsilon _{i})^{\kappa }]=1,\qquad \text{or
equivalently}\qquad \alpha _{0}=[\Gamma (\kappa +1)]^{-1/\kappa }.
\label{eq:kappa-alpha}
\end{equation}%
Thus $\kappa >1$ if $\alpha _{0}<1$; $\kappa =1$ if $\alpha _{0}=1$; and $%
0<\kappa <1$ if $\alpha _{0}>1$; see Cavaliere, Mikosch, Rahbek and Vilandt
(2024). Moreover, let 
\begin{equation}
g_{\kappa }(T)=%
\begin{cases}
T, & \kappa >1, \\ 
T/\log T, & \kappa =1, \\ 
T^{\kappa }, & 0<\kappa <1,%
\end{cases}
\label{eq:gT}
\end{equation}%
and define 
\begin{equation}
\Omega =\mathbb{E}\left[ -\frac{\partial ^{2}\ell _{i}(\theta _{0})}{%
\partial \theta \partial \theta ^{\prime }}\right] =\mathbb{E}\left[ \frac{%
v_{i}v_{i}^{\prime }}{\psi _{i}(\theta _{0})^{2}}\right] ,\qquad
v_{i}=(1,x_{i-1})^{\prime }.  \label{eq:Omega}
\end{equation}%
For later use, the associated counting process theory in Cavaliere et al.
(2026) shows that%
\begin{eqnarray}
\frac{n(T)}{g_{\kappa }(T)} &=&\frac{n(T)}{T}\overset{\text{a.s.}}{%
\rightarrow }\mu _{0}^{-1}\text{ for }\kappa >1  \notag \\
\frac{n(T)}{g_{\kappa }(T)} &=&\frac{n(T)}{T/\log T}\overset{p}{\rightarrow }%
c_{0}^{-1}\text{ for }\kappa =1  \label{eq g(T) def} \\
\frac{n(T)}{g_{\kappa }(T)} &=&\frac{n(T)}{T^{\kappa }}\overset{d}{%
\rightarrow }\lambda _{\kappa }\text{ for }\kappa <1  \notag
\end{eqnarray}%
where $\mu _{0}=\mathbb{E}[x_{i}]=\omega _{0}/(1-\alpha _{0})$, $%
c_{0}=\omega _{0}/\mathbb{E}[\varepsilon _{i}\log \varepsilon _{i}]$ and $%
\lambda _{\kappa }^{-1/\kappa }$ is a strictly positive $\kappa $-stable
random variable.

The asymptotic properties of the MLE\ are presented in the following theorem.

\begin{theorem}[Non-bootstrap limit theory]
\label{thm-main} Under Assumption~\ref{ass:model}, $\hat{\theta}_{n(T)}%
\overset{a.s.}{\rightarrow }\theta _{0}$ as $T\rightarrow \infty $. Moreover:

\begin{enumerate}
\item If $\kappa >1$, then 
\begin{equation*}
\sqrt{T}(\hat{\theta}_{n(T)}-\theta _{0})\overset{d}{\rightarrow }\mathcal{N}%
(0,\mu _{0}\Omega ^{-1}).
\end{equation*}

\item If $\kappa =1$, then 
\begin{equation*}
\sqrt{T/\log T}(\hat{\theta}_{n(T)}-\theta _{0})\overset{d}{\rightarrow }%
\mathcal{N}(0,c_{0}\Omega ^{-1}).
\end{equation*}

\item If $0<\kappa <1$, then 
\begin{equation*}
\sqrt{T^{\kappa }}(\hat{\theta}_{n(T)}-\theta _{0})\overset{d}{\rightarrow }%
\mathcal{MN}(0,\lambda _{\kappa }^{-1}\Omega ^{-1}).
\end{equation*}
\end{enumerate}

\noindent In all three cases, $\tau _{n(T)}\overset{d}{\rightarrow }\mathcal{%
N}(0,1)$.
\end{theorem}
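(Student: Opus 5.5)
The plan is the standard route for extremum estimators: treat the random sample size $n(T)$ through a deterministic-index argument that holds uniformly in $n$, and then substitute $n=n(T)$. Everything rests on the counting-process limits \eqref{eq g(T) def}.

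\textbf{Consistency.} Under Assumption~\ref{ass:model} the recursion $x_i=(\omega_0+\alpha_0x_{i-1})\varepsilon_i$ has a strictly stationary ergodic solution with tail index $\kappa$. The first task is a uniform law of large numbers
\[
\sup_{\theta\in\Theta}\Bigl|n^{-1}\mathcal L_n(\theta)-L(\theta)\Bigr|\overset{a.s.}{\rightarrow}0,\qquad L(\theta)=\mathbb E[\ell_i(\theta)],
\]
as $n\to\infty$ along the deterministic sequence. This must also cover $\kappa\le1$, where $\mathbb E x_i=\infty$.

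\begin{itemize}
\item The log term satisfies $|\log\psi_i(\theta)|\le C(1+\log(1+x_{i-1}))$, which is integrable for every $\kappa>0$.
\item The ratio satisfies $x_i/\psi_i(\theta)\le C\,\psi_i(\theta_0)\varepsilon_i/\psi_i(\theta)$, and $\psi_i(\theta_0)/\psi_i(\theta)\le\max(\omega_0/\omega,\alpha_0/\alpha)$ is bounded on compact $\Theta$. So $\sup_\theta x_i/\psi_i(\theta)$ is integrable.
\end{itemize}

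The ergodic theorem applied in the separable space $C(\Theta)$ then gives the uniform limit. Identification is standard: $L(\theta)$ is uniquely maximized at $\theta_0$ because $\log u-u$ is maximized at $u=1$ and $v_i$ is non-degenerate. Hence $\hat\theta_n\overset{a.s.}{\rightarrow}\theta_0$. Since $n(T)\to\infty$ a.s. and this is a pathwise statement along the integers, it transfers directly to $\hat\theta_{n(T)}$.

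\textbf{Score and Hessian.} A Taylor expansion around $\theta_0$ gives
\[
0=S_{n(T)}(\theta_0)-\mathcal I_{n(T)}(\bar\theta)(\hat\theta_{n(T)}-\theta_0),\qquad
S_n(\theta_0)=\sum_{i\le n}\frac{v_i}{\psi_i(\theta_0)}(\varepsilon_i-1).
\]
The score summands are martingale differences with bounded conditional variance $v_iv_i'/\psi_i^2$. Note that $x_{i-1}/\psi_i\le1/\alpha_0$, so all moments exist regardless of $\kappa$.

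\begin{itemize}
\item The martingale CLT together with ergodicity gives $n^{-1}\sum v_iv_i'/\psi_i^2\to\Omega$ a.s.
\item A local uniform bound on third derivatives (again bounded ratios) gives $n^{-1}\mathcal I_n(\bar\theta)\to\Omega$ whenever $\bar\theta\to\theta_0$.
\end{itemize}

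The key step is to strengthen the score CLT to a functional one:
\[
\bigl\{n^{-1/2}S_{\lfloor nu\rfloor}(\theta_0)\bigr\}_{u\ge0}\Rightarrow\Omega^{1/2}W(u),
\]
jointly with the normalized count process. I expect handling the random index $n(T)$ to be the main obstacle, because $n(T)$ depends on the same $\varepsilon_i$ that drive the score.

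\begin{itemize}
\item \emph{Case $\kappa\ge1$.} Here $n(T)/g_\kappa(T)$ converges to a constant, so Anscombe's theorem applies: the score is tight under random time change with a deterministic limit ratio. This gives $n(T)^{-1/2}S_{n(T)}\overset{d}{\rightarrow}\mathcal N(0,\Omega)$. Writing
\[
\sqrt{g_\kappa(T)}\,(\hat\theta-\theta_0)=\sqrt{g_\kappa(T)/n(T)}\,\bigl(n^{-1}\mathcal I\bigr)^{-1}n^{-1/2}S,
\]
the limit variances $\mu_0\Omega^{-1}$ and $c_0\Omega^{-1}$ follow.
\item \emph{Case $\kappa<1$.} The limit $\lambda_\kappa$ is random, so I would prove joint convergence of $\bigl(n^{-1/2}S_{\lfloor nu\rfloor},\,n^{-1/\kappa}t_{\lfloor nu\rfloor}\bigr)$ to $(\Omega^{1/2}W,Z_\kappa)$, with $W$ independent of the stable subordinator $Z_\kappa$. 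The first try is to show the cross-covariation vanishes: the score has light-tailed increments while $t_n$ is dominated by its large jumps, so they are asymptotically independent. This is the same mechanism behind mixed normality of stable-driven martingales, a form of stable convergence. Inverting the subordinator gives $n(T)/T^\kappa\to\lambda_\kappa$ jointly with $W$, and the continuous mapping theorem evaluated at the first-passage time gives
\[
T^{-\kappa/2}S_{n(T)}\overset{d}{\rightarrow}\Omega^{1/2}W(\lambda_\kappa)\overset{d}{=}\lambda_\kappa^{1/2}\Omega^{1/2}N
\]
by independence. Combining with the Hessian limit yields the stated $\mathcal{MN}(0,\lambda_\kappa^{-1}\Omega^{-1})$.
\end{itemize}

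\textbf{The $t$-statistic.} In every case, $\mathcal I_{n(T)}(\hat\theta)/n(T)\to\Omega$. Therefore
\[
\tau_{n(T)}=\frac{\iota_2'\Omega^{-1}n(T)^{-1/2}S_{n(T)}}{(\iota_2'\Omega^{-1}\iota_2)^{1/2}}+o_p(1).
\]
In each regime the self-normalized score $n(T)^{-1/2}S_{n(T)}$ tends to $\mathcal N(0,\Omega)$. For $\kappa<1$ this holds because $W(\lambda_\kappa)/\sqrt{\lambda_\kappa}$ is standard normal independently of $\lambda_\kappa$. Hence $\tau_{n(T)}\overset{d}{\rightarrow}\mathcal N(0,1)$ in all three cases.
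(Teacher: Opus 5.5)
\textbf{Comparison with the paper's proof.} Your consistency argument is the paper's. You obtain a.s.\ consistency along deterministic $n$ from the ergodic theorem in $C(\Theta)$ plus identification, where the paper quotes Kristensen and Rahbek (2005). Both then transfer it pathwise to $\hat\theta_{n(T)}$ because $n(T)\to\infty$ a.s.\ (Gut, 2009, Theorem 2.1, with Lemma~\ref{lem-properties-of-NT}).

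For (i)--(iii) and $\tau_{n(T)}$ the paper gives no argument of its own; it imports the results from Cavaliere, Mikosch, Rahbek and Vilandt (2024, 2025, 2026). You reconstruct them instead.
\begin{itemize}
\item For $\kappa\geq 1$ your Anscombe route is sound. The limits of $n(T)/g_\kappa(T)$ are deterministic, and the martingale FCLT supplies Anscombe's condition even though $n(T)$ is built from the same $\varepsilon_i$.
\item For $\kappa<1$, a joint functional limit followed by first-passage inversion and random time change is the right architecture. Your algebra for the mixed normal limit and for $\tau_{n(T)}$ is correct.
\end{itemize}
Your version explains why the data dependence of $n(T)$ is harmless, whereas the paper's proof is short but outsourced.

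\textbf{The step left open.} The price is that you must prove what the paper takes on citation, and that is exactly the step you leave as a ``first try''. The heuristic you give there does not close it.
\begin{itemize}
\item R\'enyi-mixing of the score CLT is not enough. Anscombe--R\'enyi arguments need $n(T)/T^\kappa$ to converge in probability, but here it converges only in distribution.
\item Vanishing cross-covariation is vacuous. A continuous $W$ and a pure-jump $Z_\kappa$ have zero covariation whether or not they are independent.
\end{itemize}

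What delivers independence is that the joint limit of $(n^{-1/2}S_{\lfloor nu\rfloor},\,n^{-1/\kappa}t_{\lfloor nu\rfloor})$ is a bivariate L\'evy process. Block increments are asymptotically independent by the geometric $\beta$-mixing (or regenerative structure) of the ACD chain. The limit's Gaussian covariance has no cross term, because the $Z_\kappa$-marginal has no Gaussian part. Its L\'evy measure lives on $\{0\}\times(0,\infty)$, because the score has no jumps in the limit. The L\'evy--It\^o decomposition then forces $W$ and $Z_\kappa$ to be independent.

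You also need to fix the topology. Large durations cluster, so $n^{-1/\kappa}t_{\lfloor n\cdot\rfloor}$ converges in $M_1$ but not in $J_1$. The first-passage map is $M_1$-continuous at the strictly increasing limit $Z_\kappa$. Evaluation at $\lambda_\kappa$ is continuous because $W$ is continuous. With these ingredients your outline becomes a complete proof; without them, the $\kappa<1$ case rests on the same external result the paper cites.
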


In compact form, the results in Theorem~\ref{thm-main} can be summarized as 
\begin{equation*}
\sqrt{g_{\kappa }(T)}(\hat{\theta}_{n(T)}-\theta _{0})\overset{d}{%
\rightarrow }Y_{\kappa }=\sqrt{A_{\kappa }}Z_{\alpha }
\end{equation*}%
with $Z_{\alpha }\sim \mathcal{N}(0,V_{\alpha })$ and $A_{\kappa }=\mu _{0}$
for $\kappa >1$, $A_{\kappa }=c_{0}$ for $\kappa =1$, and $A_{\kappa
}=\lambda _{\kappa }^{-1}$ for $0<\kappa <1$, with $\lambda _{\kappa }$
independent of $Z_{\alpha }$ (this notation will be used below).

Theorem~\ref{thm-main} makes clear why bootstrap inference for ACD models is
delicate. The convergence rate of the estimator depends on the tail index,
while the studentized statistic has a standard normal limit in all regimes.
We therefore distinguish between confidence intervals based on studentized
and non-studentized statistics. The result also clarifies to what extent the
ACD limit theory differs from the more familiar GARCH and MEM asymptotics.
For GARCH and MEM likelihoods, the stochastic recurrence generating the data
may be similar, but the estimator is normalized by a deterministic sample
size. Here the rate is inherited from the renewal limit for $n(T)$. This is
why the finite-mean, boundary and infinite-mean cases must be treated
separately even though the likelihood contributions have the same formal
expression.

\begin{remark}[Connection with the ACD$(1,1)$ theory]
\label{rem:acd11-literature} The simple ACD specification in \eqref{eq-ACD_1}%
-\eqref{eq-ACD_2} is used here to keep the bootstrap arguments transparent.
The same random-count issues discussed in the paper appear in the ACD$(1,1)$
model with $\psi _{i}=\omega +\alpha x_{i-1}+\beta \psi _{i-1}$. In that
model the tail index $\kappa $ is determined by $\mathbb{E}[(\alpha
\varepsilon _{i}+\beta )^{\kappa }]=1$, with $\alpha +\beta =1$
corresponding to the integrated ACD boundary ($\kappa =1$). Cavaliere,
Mikosch, Rahbek and Vilandt (2026) show that this boundary has rate $(T/\log
T)^{1/2}$ and a Gaussian limit. Our bootstrap results should be read as the
fixed-count bootstrap counterpart to the non-bootstrap asymptotic theory in
the analytically simplest case.
\end{remark}

\subsection{Confidence intervals}

\label{sec:-CIs}

A confidence band $\limfunc{CI}_{\tau }$ with nominal coverage $100(1-p)\%$ (%
$p\in \left( 0,1\right) $) can be constructed using the studentized $\tau
_{n(T)}$ in (\ref{eq:-tstat}) as%
\begin{equation}
\func{CI}_{\mathrm{\tau }}=\left[ \hat{\alpha}_{n(T)}-z_{1-p/2}\hat{\sigma}(%
\hat{\alpha}_{n(T)}),\,\hat{\alpha}_{n(T)}-z_{p/2}\hat{\sigma}(\hat{\alpha}%
_{n(T)})\right] ,  \label{eq:CI-II}
\end{equation}%
where $z_{p}$ is the $p$th quantile of the standard normal distribution.

Using the results in Theorem \ref{thm-main} to construct confidence
intervals based on the asymptotic distribution of $\hat{\theta}_{n(T)}$ is
infeasible in practice due to the presence of nuisance parameters and
uncertainty about the true value of $\kappa $. To see this, define the
confidence interval $\func{CI}$ as 
\begin{equation}
\func{CI}=\left[ \hat{\alpha}_{n(T)}-g_{\kappa }(T)^{-1/2}q_{\kappa
}(1-p/2),\,\hat{\alpha}_{n(T)}-g_{\kappa }(T)^{-1/2}q_{\kappa }(p/2)\right] .
\label{eq:CI-I}
\end{equation}%
where $q_{\kappa }(p)$ denotes the $p$ quantile of the asymptotic
distribution of $g_{\kappa }(T)^{1/2}(\hat{\alpha}_{n(T)}-\alpha _{0})$.
This is clearly infeasible; however, its bootstrap counterpart can easily be
constructed, as exemplified in the next section.

\section{Bootstrap theory for ACD models}

\label{sec:bootstrap}

A simple model-based bootstrap algorithm can be constructed by generating
the duration in the bootstrap world, say $\left\{ x_{i}^{\ast }\right\} $,
recursively as 
\begin{equation}
x_{i}^{\ast }=\psi _{i}^{\ast }(\theta ^{\ast })\varepsilon _{i}^{\ast
},\qquad \psi _{i}^{\ast }(\theta ^{\ast })=\omega ^{\ast }+\alpha ^{\ast
}x_{i-1}^{\ast },\qquad i=1,2,\ldots ,n^{\ast },  \label{eq-RCD}
\end{equation}%
with $x_{0}^{\ast }=x_{0}$ and $n^{\ast }$ the bootstrap sample size. To
construct bootstrap confidence intervals it is convenient to refer to the
unrestricted bootstrap, which sets $\theta ^{\ast }=\hat{\theta}_{n(T)}$.
The bootstrap innovations can be generated either parametrically, setting $%
\varepsilon _{i}^{\ast }\sim \func{Exp}(1)$ and i.i.d. conditionally on the
data, or non-parametrically, by resampling the scaled residuals 
\begin{equation}
\hat{\varepsilon}_{i}^{s}=\frac{\hat{\varepsilon}_{i}}{n(T)^{-1}%
\sum_{j=1}^{n(T)}\hat{\varepsilon}_{j}},\qquad \hat{\varepsilon}_{i}=\frac{%
x_{i}}{\hat{\psi}_{i}},\qquad \hat{\psi}_{i}=\hat{\omega}_{n(T)}+\hat{\alpha}%
_{n(T)}x_{i-1}.  \label{eq:scaled-residuals}
\end{equation}%
Notice that the scaling enforces 
\begin{equation*}
\mathbb{E}^{\ast }[\varepsilon _{i}^{\ast }] =n(T)^{-1}\sum_{i=1}^{n(T)}\hat{%
\varepsilon}_{i}^{s}=1
\end{equation*}%
for the bootstrap shocks, while preserving the non-negativity condition $%
\varepsilon _{i}^{\ast }>0$ (a.s.).

\subsection{Random-count and fixed-count bootstrap schemes}

\label{subsec:schemes}

In the context of the ACD model, the sample size $n^{\ast }$ in the
bootstrap world can be defined according to two different schemes. First,
the \emph{random-count} bootstrap keeps the original calendar-time span
fixed and sets 
\begin{equation}
n^{\ast }=n^{\ast }(T)=\max \{k\geq 0:\sum\nolimits_{i=1}^{k}x_{i}^{\ast
}\leq T\}.  \label{eq:random-n-star}
\end{equation}%
Second, the \emph{fixed-count} bootstrap keeps the realized number of
durations fixed and sets 
\begin{equation}
n^{\ast }=n(T).  \label{eq:fixed-n-star}
\end{equation}%
The fixed-count scheme is the focus of the theory below. This choice is
close to the practice in the ACD and MEM bootstrap literature, where the
observed number of durations is typically held fixed. The main difference is
that here $n(T)$ is not a user-chosen deterministic quantity; rather, it is
itself a statistic based on the original durations. Consequently, existing
proofs of bootstrap validity that treat the sample size in the bootstrap
world as deterministic cannot be employed in the current setting; one also
needs to show that replacing $n$ by the random count $n(T)$ preserves the
relevant likelihood expansions and asymptotic properties. The auxiliary
renewal lemmas in the Appendix are used exactly for this purpose.

Define the bootstrap log-likelihood function 
\begin{equation}
\mathcal{L}_{n^{\ast }}^{\ast }(\theta )=\sum_{i=1}^{n^{\ast }}\ell
_{i}^{\ast }(\theta ),\qquad \ell _{i}^{\ast }(\theta )=-\left( \log \psi
_{i}^{\ast }(\theta )+\frac{x_{i}^{\ast }}{\psi _{i}^{\ast }(\theta )}%
\right) ,  \label{eq:boot-likelihood}
\end{equation}%
where $\psi _{i}^{\ast }(\theta )=\omega +\alpha x_{i-1}^{\ast }$. The
bootstrap MLE is 
\begin{equation}
\hat{\theta}_{n^{\ast }}^{\ast }=\limfunc{arg\,max}_{\theta \in \Theta }%
\mathcal{L}_{n^{\ast }}^{\ast }(\theta ).  \label{eq:boot-mle}
\end{equation}%
The associated bootstrap observed information and bootstrap $t$ statistic
are 
\begin{align}
\mathcal{I}_{n^{\ast }}^{\ast }(\hat{\theta}_{n^{\ast }}^{\ast })& =-\left. 
\frac{\partial ^{2}\mathcal{L}_{n^{\ast }}^{\ast }(\theta )}{\partial \theta
\partial \theta ^{\prime }}\right\vert _{\theta =\hat{\theta}_{n^{\ast
}}^{\ast }},  \label{eq:-bootstrap-info} \\
\tau _{n^{\ast }}^{\ast }& =\frac{\hat{\alpha}_{n^{\ast }}^{\ast }-\hat{%
\alpha}_{n(T)}}{\hat{\sigma}(\hat{\alpha}_{n^{\ast }}^{\ast })},\qquad \hat{%
\sigma}^{2}(\hat{\alpha}_{n^{\ast }}^{\ast })=\iota _{2}^{\prime }\mathcal{I}%
_{n^{\ast }}^{\ast }(\hat{\theta}_{n^{\ast }}^{\ast })^{-1}\iota _{2}.
\label{eq:boot-tstat}
\end{align}

\subsection{Validity of the fixed-count bootstrap}

\label{subsec:validity}

A key difference between the mechanics in the fixed-count bootstrap world
and the original world is that in the former, the time span covered by the
bootstrap durations is random and different from the original time span $%
[0,T]$. Precisely, the time span in the bootstrap world is $[0,T^{\ast }]$
with $T^{\ast }:=\sum\nolimits_{i=1}^{n(T)}x_{i}^{\ast }$. The time span $%
T^{\ast }$ is not measurable with respect to the original data and in
general not expected to be close to $T$, in particular when $\kappa <1$.
Indeed, when $\kappa <1$ this bootstrap is unable to replicate the
asymptotic distribution of the original estimator, as shown in the following
theorem. Despite this, as we argue below, this bootstrap still delivers
valid confidence intervals and hypothesis tests.

\begin{theorem}[Fixed-count bootstrap]
\label{thm-bootstrap-main} Under Assumption~\ref{ass:model}, consider the
non-parametric fixed-count bootstrap $n^*=n(T)$ with $\theta^*=\hat\theta
_{n(T)}$, where the bootstrap innovations are obtained by resampling the
scaled residuals in \eqref{eq:scaled-residuals}. Then, as $T\to\infty$:

\begin{enumerate}
\item If $\kappa >1$, then 
\begin{equation*}
\sqrt{T}(\hat{\theta}_{n(T)}^{\ast }-\hat{\theta}_{n(T)})\overset{d^{\ast }}{%
\rightarrow }_{a.s.}\mathcal{N}(0,\mu _{0}\Omega ^{-1}).
\end{equation*}

\item If $\kappa =1$, then 
\begin{equation*}
\sqrt{T/\log T}(\hat{\theta}_{n(T)}^{\ast }-\hat{\theta}_{n(T)})\overset{%
d^{\ast }}{\rightarrow }_{p}\mathcal{N}(0,c_{0}\Omega ^{-1}).
\end{equation*}

\item If $0<\kappa <1$, then 
\begin{equation*}
\sqrt{T^{\kappa }}(\hat{\theta}_{n(T)}^{\ast }-\hat{\theta}_{n(T)})\overset{%
d^{\ast }}{\rightarrow }_{d}\mathcal{N}(0,\lambda _{\kappa }^{-1}\Omega
^{-1})\mid \lambda _{\kappa },
\end{equation*}%
where $\mathcal{N}\left( 0,\lambda _{\kappa }^{-1}\Omega ^{-1}\right) $ $|$ $%
\lambda _{\kappa }$, denotes the Gaussian distribution $\mathcal{N}\left(
0,\lambda _{\kappa }^{-1}\Omega ^{-1}\right) $ for a given realization of $%
\lambda _{\kappa }$.
\end{enumerate}

Moreover, for any $\kappa >0$, and with $\tau _{n(T)}^{\ast }$ denoting the
bootstrap studentized t statistic%
\begin{equation}
\tau _{n(T)}^{\ast }=\frac{\hat{\alpha}_{n(T)}^{\ast }-\hat{\alpha}_{n(T)}}{%
\hat{\sigma}(\hat{\alpha}_{n(T)}^{\ast })},\qquad \hat{\sigma}^{2}(\hat{%
\alpha}_{n(T)}^{\ast })=\iota _{2}^{\prime }\mathcal{I}_{n(T)}^{\ast }(\hat{%
\theta}_{n(T)}^{\ast })^{-1}\iota _{2},  \label{eq-boot-taustarnT}
\end{equation}%
it holds that $\tau _{n(T)}^{\ast }\overset{d^{\ast }}{\rightarrow }_{p}%
\mathcal{N}(0,1)$.
\end{theorem}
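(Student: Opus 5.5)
The plan is to condition on the data and exploit that, given the data, the fixed-count bootstrap is a standard recursive bootstrap with \emph{deterministic} sample size $n=n(T)$ and true parameter $\hat\theta_{n(T)}$. Write $\mathcal{D}_T$ for the original sample. The proof splits into two parts. First, I will establish a conditional central limit theorem for the bootstrap MLE of the form $\sqrt{n(T)}(\hat\theta^*_{n(T)}-\hat\theta_{n(T)})\overset{d^*}{\rightarrow}\mathcal N(0,\Omega^{-1})$, in the mode ``a.s.'' or ``in probability'' as appropriate. Second, I will combine this with the renewal limits \eqref{eq g(T) def} for $n(T)/g_\kappa(T)$, since
\begin{equation*}
\sqrt{g_\kappa(T)}(\hat\theta^*_{n(T)}-\hat\theta_{n(T)})=\bigl(g_\kappa(T)/n(T)\bigr)^{1/2}\sqrt{n(T)}(\hat\theta^*_{n(T)}-\hat\theta_{n(T)}),
\end{equation*}
and the prefactor is $\mathcal D_T$-measurable.

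\textbf{Bootstrap CLT.} The key steps, in order, are as follows.

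(i) Show that the bootstrap process $x_i^*=(\hat\omega+\hat\alpha x^*_{i-1})\varepsilon_i^*$ is geometrically ergodic uniformly in a neighbourhood of $\theta_0$. This uses $\hat\theta_{n(T)}\to\theta_0$ a.s. (Theorem~\ref{thm-main}) and the convergence of the empirical distribution of the scaled residuals $\hat\varepsilon_i^s$ to $\mathrm{Exp}(1)$, together with convergence of the moments $\mathbb E^*[\varepsilon^{*}]$ and $\mathbb E^*[(\varepsilon^{*})^2]$ and of $\mathbb E^*\log(\hat\alpha\varepsilon^*)$. The residual convergence follows from a uniform law of large numbers for $n(T)^{-1}\sum_{i\le n(T)} f(x_i/\psi_i(\theta))$; because $n(T)\to\infty$ a.s., the usual stationary-ergodic ULLN along the deterministic index applies along the random subsequence.

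(ii) Apply bootstrap laws of large numbers to $n^{-1}\sum_{i\le n}\ell^*_i$ and its derivatives. Note that the score and information of the ACD(1) exponential likelihood only involve $x_{i-1}^*/\psi_i^*$ and $\varepsilon_i^*$, which are bounded or have finite moments even when $\kappa<1$. This is the same device as in the non-bootstrap theory and lets us avoid any moment condition on $x^*$. I will then show consistency of $\hat\theta^*$ by the standard argument, using the compactness in Assumption~\ref{ass:model}.

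(iii) Establish a bootstrap martingale CLT for $n^{-1/2}\sum_{i\le n}\partial\ell_i^*(\hat\theta)/\partial\theta=n^{-1/2}\sum_i(\varepsilon_i^*-1)v_i^*/\psi_i^*$. Its conditional variance is $\mathbb E^*(\varepsilon^*-1)^2\cdot n^{-1}\sum v_i^*v_i^{*\prime}/\psi_i^{*2}\to\Omega$. The Lindeberg condition holds because $v_i^*/\psi_i^*$ is bounded by $\max(1/\omega,1/\alpha)$.

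(iv) Carry out a Taylor expansion, with third derivatives uniformly bounded on a neighbourhood of $\theta_0$, to conclude the CLT.

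Each of these results holds along any deterministic $n\to\infty$ for data sequences in a set of probability one (respectively, in probability). Since $n(T)$ is $\mathcal D_T$-measurable and tends to infinity a.s., the results transfer to $n=n(T)$. This transfer is exactly the role of the auxiliary renewal lemmas.

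\textbf{Combining with the renewal limits.} For $\kappa>1$, $T/n(T)\to\mu_0$ a.s., which gives part~1 with a.s. convergence. For $\kappa=1$, the convergence $(T/\log T)/n(T)\to c_0$ holds only in probability, which gives part~2 in probability. For $\kappa<1$, $T^\kappa/n(T)\overset{d}{\to}\lambda_\kappa^{-1}$ only in distribution, and it is $\mathcal D_T$-measurable. The conditional law is therefore $\mathcal N(0,(T^\kappa/n(T))\Omega^{-1})$ plus an $o_p(1)$ error in the bounded-Lipschitz metric. By the continuous mapping theorem applied to the random measure, this converges weakly to $\mathcal N(0,\lambda_\kappa^{-1}\Omega^{-1})\mid\lambda_\kappa$, in the sense of Cavaliere and Georgiev (2020).

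\textbf{Studentized statistic.} Step (ii) gives $n(T)^{-1}\mathcal I^*_{n(T)}(\hat\theta^*)\overset{p^*}{\to}\Omega$. Hence $\tau^*=\sqrt{n(T)}(\hat\alpha^*-\hat\alpha)/(\iota_2'(n(T)^{-1}\mathcal I^*)^{-1}\iota_2)^{1/2}$. In this expression the random normalisation $n(T)$ cancels, leaving a pivotal $\mathcal N(0,1)$ limit in probability for every $\kappa>0$.

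\textbf{Main obstacle.} I expect the hardest part to be steps (i) and (ii) in the case $\kappa\le1$. There the bootstrap innovations are residuals from a process with infinite mean, and the bootstrap $\hat\alpha$ may exceed $1$. Uniform-in-$\theta$ ergodicity and ULLN bounds must therefore rely only on $\log$-moment contraction and on boundedness of $x_{i-1}/\psi_i$, never on moments of $x$. I would first try a drift condition with $V(x)=1+x^{s}$ for small $s>0$ chosen so that $\mathbb E^*(\hat\alpha\varepsilon^*)^s<1$ uniformly. This is possible because $\mathbb E(\alpha_0\varepsilon)^s<1$ for $s<\kappa$, and the residual moments converge.
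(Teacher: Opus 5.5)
Your architecture is the paper's. You prove a conditional CLT $\sqrt{n(T)}(\hat\theta^*_{n(T)}-\hat\theta_{n(T)})\overset{d^*}{\rightarrow}_{a.s.}\mathcal N(0,\Omega^{-1})$ along deterministic $n$, using a bootstrap ULLN for consistency, a martingale CLT for the score, a Hessian LLN and a third-derivative bound. You transfer it to $n=n(T)$ because $n(T)\to\infty$ a.s. You then rescale by the data-measurable factor $(g_\kappa(T)/n(T))^{1/2}$, whose mode of convergence dictates the bootstrap mode in parts 1--3; the $t$-ratio is handled the same way.

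However, the step you single out as hardest fails in the way you propose to carry it out. A Foster--Lyapunov drift with $V(x)=1+x^s$ cannot give geometric ergodicity of the \emph{nonparametric} bootstrap chain. The resampled innovations have a discrete law with at most $n(T)$ atoms, so $x_k^*$ started at $x_0$ takes finitely many values. The conditional stationary (perpetuity) law, by contrast, is non-atomic. Hence the chain is not $\psi$-irreducible, the level sets of $V$ are not small, and the law of $x_k^*$ stays at total-variation distance one from the stationary law for every $k$. This is the same phenomenon that makes an AR(1) with Bernoulli innovations fail to be strong mixing. The drift inequality therefore only yields uniform $s$-moment bounds. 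That route would work for the parametric $\mathrm{Exp}(1)$ bootstrap, but not for resampled residuals.

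What replaces it, and what the paper does in its bootstrap LLN, is a coupling/contraction argument that never uses irreducibility.
\begin{enumerate}
\item Drive $x_i^*$, $x_i^*(\theta_0;x_0)$ and the conditionally stationary perpetuity $x^*_{i,\infty}=\omega_0\sum_{j\ge0}\alpha_0^j\prod_{m=0}^j\varepsilon^*_{i-m}$ by the \emph{same} resampled innovations.
\item Fix $s\in(0,1)$ with $\mathbb E^*[(\alpha_0\varepsilon^*)^s]<\rho<1$ and $\mathbb E^*[(\hat\alpha_{n(T)}\varepsilon^*)^s]<\rho$ for all large $T$, a.s. Subadditivity of $|\cdot|^s$ then gives $\mathbb E^*|x_i^*-x_i^*(\theta_0;x_0)|^s\le C_T\|\hat\theta_{n(T)}-\theta_0\|^s$ and $\mathbb E^*|x_i^*(\theta_0;x_0)-x^*_{i,\infty}|^s\le C_T\rho^i$.
\item Use $|f(x)-f(y)|^k\le C(x^{s/2}+y^{s/2}+C)|x-y|^{s/2}$, valid for Lipschitz $f$ with $|f(x)|\le x^p+c(p)$; this covers $\log\psi_i^*(\theta)$ as well as the bounded ratios. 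It converts the $s$-moment bounds into $L^1$ bounds and into covariance bounds of order $\rho^{|i-j|/4}$, so conditional Chebyshev gives the LLN.
\item Identify the limit, $\mathbb E^*f(x^*_{i,\infty})\to\mathbb Ef(x_i)$ a.s. The paper truncates the perpetuity at a fixed lag $m$, uses weak convergence of the resampled innovation law (your residual step) for the finite-lag part, and controls the rest by a second-moment truncation bound.
\end{enumerate}

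A smaller correction concerns the transfer from deterministic $n$ to $n(T)$. It is valid only for a.s.\ bootstrap modes: if $Z_n\to0$ a.s.\ and $n(T)\to\infty$ a.s., then $Z_{n(T)}\to0$ a.s. Convergence in probability along deterministic $n$ does not in general survive substituting the data-dependent index $n(T)$. Drop the hedge ``(respectively, in probability)''; your ingredients deliver every lemma in the a.s.\ mode, which part~1 requires anyway.
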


For $\kappa \geq 1$ the fixed-count bootstrap consistently estimates the
limiting distribution of the estimator. For $0<\kappa <1$, the limiting
bootstrap measure is not mixed Gaussian as in Theorem \ref{thm-main};
rather, it is random in the limit and reproduces a (Gaussian) component of
the limiting mixed normal distribution. As a consequence, classical
bootstrap consistency fails in the infinite-mean case. The failure is
nevertheless benign for the validity of basic percentile bootstrap
inference. To see this, let 
\begin{equation*}
\mathcal{T}_{n(T)}=g_{\kappa }(T)^{1/2}(\hat{\alpha}_{n(T)}-\alpha
_{0}),\qquad \mathcal{T}_{n(T)}^{\ast }=g_{\kappa }(T)^{1/2}(\hat{\alpha}%
_{n(T)}^{\ast }-\hat{\alpha}_{n(T)}),
\end{equation*}%
denote the normalized original and bootstrap estimators, and let $\hat{F}%
_{n(T)}^{\ast }(u)=\mathbb{P}^{\ast }(\mathcal{T}_{n(T)}^{\ast }\leq u)$ be
the distribution function of $\mathcal{T}_{n(T)}^{\ast }$, conditionally on
the data. The one-sided bootstrap p-value is 
\begin{equation}
\hat{p}_{n(T)}^{\ast }=\hat{F}_{n(T)}^{\ast }(\mathcal{T}_{n(T)})\text{,}
\label{eq:boot-pvalue}
\end{equation}%
and we have the following corollary; see also the general bootstrap theory
for similar considerations in Cavaliere and Georgiev (2020, Theorem 3.1).

\begin{corollary}
\label{cor:boot-pvalue} Under the assumptions of Theorem~\ref%
{thm-bootstrap-main}, $\hat{p}_{n(T)}^{\ast }\overset{d}{\rightarrow }U[0,1]$
as $T\rightarrow \infty $.
\end{corollary}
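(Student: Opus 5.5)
For $\kappa\ge 1$, Theorem~\ref{thm-bootstrap-main} gives bootstrap consistency for a continuous (Gaussian) limit. Theorem~\ref{thm-main} gives $\mathcal{T}_{n(T)}\overset{d}{\rightarrow}Y_\kappa$, where $Y_\kappa$ has a continuous distribution function $F$ with $F(Y_\kappa)\sim U[0,1]$. By P\'olya's theorem, the convergence of $\hat F^*_{n(T)}$ to $F$ (a.s. or in probability) is uniform in $u$. Hence
\[
\hat p^*_{n(T)}=F(\mathcal{T}_{n(T)})+o_p(1),
\]
and the continuous mapping theorem gives $\hat p^*_{n(T)}\overset{d}{\rightarrow}F(Y_\kappa)\sim U[0,1]$.

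The substantive case is $0<\kappa<1$, where the bootstrap limit is random. Write $\Phi$ for the standard normal distribution function and $V_\alpha=\iota_2'\Omega^{-1}\iota_2$. Theorem~\ref{thm-bootstrap-main}(3) says $\hat F^*_{n(T)}(u)$ converges weakly, as a random measure, to $\Phi(u/\sqrt{\lambda_\kappa^{-1}V_\alpha})$. The plan is to follow Cavaliere and Georgiev (2020, Theorem 3.1), which requires \emph{joint} weak convergence
\[
\bigl(\mathcal{T}_{n(T)},\hat F^*_{n(T)}\bigr)\overset{d}{\rightarrow}\bigl(\sqrt{\lambda_\kappa^{-1}}Z_\alpha,\ \Phi(\cdot/\sqrt{\lambda_\kappa^{-1}V_\alpha})\bigr),
\]
with the \emph{same} $\lambda_\kappa$ in both components and with $Z_\alpha\sim\mathcal N(0,V_\alpha)$ independent of $\lambda_\kappa$. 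Given this, continuity of the limiting conditional cdf lets P\'olya's theorem upgrade the convergence to uniform convergence in distribution. The map $(t,G)\mapsto G(t)$ is continuous at continuous $G$, so
\[
\hat p^*_{n(T)}\overset{d}{\rightarrow}\Phi\bigl(Z_\alpha/\sqrt{V_\alpha}\bigr)\sim U[0,1].
\]
Conditional on $\lambda_\kappa$, this is exactly $U[0,1]$; integrating over $\lambda_\kappa$ preserves uniformity.

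The main obstacle is establishing this joint convergence, since Theorems~\ref{thm-main} and~\ref{thm-bootstrap-main} are stated marginally. I would go back to the proofs. The original statistic has the expansion
\[
\mathcal{T}_{n(T)}=\bigl(g_\kappa(T)/n(T)\bigr)^{1/2}\,\iota_2'\Omega^{-1}n(T)^{-1/2}\textstyle\sum_{i\le n(T)} s_i+o_p(1),
\]
where $s_i$ is the score. In the bootstrap, conditional on the data, the fixed-count structure gives
\[
\mathcal{T}^*_{n(T)}=\bigl(g_\kappa(T)/n(T)\bigr)^{1/2}\,\iota_2'\Omega^{-1}n(T)^{-1/2}\textstyle\sum_{i\le n(T)} s_i^*+o_{p^*}(1).
\]
In the bootstrap, $n(T)^{-1/2}\sum s_i^*$ converges to $\mathcal N(0,\Omega)$ in probability conditionally, because $n(T)\to\infty$ a.s. and the recursive bootstrap CLT holds along deterministic sequences, uniformly enough. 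So $\hat F^*_{n(T)}(u)=\Phi\bigl(u\,(n(T)/g_\kappa(T))^{1/2}/\sqrt{V_\alpha}\bigr)+o_p(1)$ uniformly in $u$, a measurable function of $n(T)/g_\kappa(T)$. Joint convergence then reduces to
\[
\Bigl(n(T)^{-1/2}\textstyle\sum_{i\le n(T)}s_i,\ n(T)/T^\kappa\Bigr)\overset{d}{\rightarrow}(\mathcal N(0,\Omega),\lambda_\kappa)\quad\text{with independent components.}
\]
This mixing (R\'enyi-type) stability of the martingale CLT with respect to the renewal count is presumably already used in Cavaliere et al.\ (2026) to obtain the mixed normal limit in Theorem~\ref{thm-main}(3). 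I would extract it from there, or prove it via stable convergence of the score martingale combined with the renewal limit in~\eqref{eq g(T) def}.
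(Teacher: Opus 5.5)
Your treatment of $\kappa\ge 1$ is the paper's argument: uniform convergence of $\hat F^*_{n(T)}$ to a continuous limit, then continuous mapping. For $0<\kappa<1$ your plan is also what the paper does. It invokes Cavaliere and Georgiev (2020), passes to a Skorokhod coupling, and simply asserts that the original statistic and the limiting bootstrap measure ``have the same conditional Gaussian law given $\lambda_\kappa$''. You are right that this joint convergence is not contained in the marginal statements of Theorems~\ref{thm-main} and~\ref{thm-bootstrap-main}. However, you then leave it as a presumption to be extracted from the earlier papers, so as written it is a genuine gap, and not a routine one. Anscombe--R\'enyi-type results for randomly indexed sums need the normalized index to converge in probability. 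Here $n(T)/T^{\kappa}$ converges only in distribution and is built from the same durations that drive the score. You would need something like a joint functional limit of the score martingale and the duration partial sums, with independent Gaussian and stable parts, followed by a first-passage argument.

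The step is, however, unnecessary, because the $p$-value does not depend on the normalization: $g_\kappa(T)^{1/2}$ multiplies both sides of the inequality inside $\mathbb{P}^*$, so
\[
\hat p^*_{n(T)}=\mathbb{P}^*\bigl(\sqrt{n(T)}\,(\hat\alpha^*_{n(T)}-\hat\alpha_{n(T)})\le \sqrt{n(T)}\,(\hat\alpha_{n(T)}-\alpha_0)\bigr).
\]
By Lemma~\ref{lem-asym-normality-boot-of-estimator}, i.e.\ \eqref{app:-next}, which holds for every $\kappa>0$, the bootstrap law of $\sqrt{n(T)}(\hat\alpha^*_{n(T)}-\hat\alpha_{n(T)})$ converges a.s.\ to the \emph{nonrandom} $\mathcal{N}(0,V_\alpha)$. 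By P\'olya this is uniform in $u$, so $\hat p^*_{n(T)}=\Phi\bigl(\sqrt{n(T)}(\hat\alpha_{n(T)}-\alpha_0)/\sqrt{V_\alpha}\bigr)+o_{a.s.}(1)$. Moreover $\sqrt{n(T)}(\hat\alpha_{n(T)}-\alpha_0)/\sqrt{V_\alpha}=\tau_{n(T)}\bigl(n(T)\hat\sigma^2(\hat\alpha_{n(T)})/V_\alpha\bigr)^{1/2}\overset{d}{\rightarrow}\mathcal{N}(0,1)$. This uses the $t$-statistic part of Theorem~\ref{thm-main} and $n(T)^{-1}\mathcal{I}_{n(T)}(\hat\theta_{n(T)})\rightarrow\Omega$ a.s., the deterministic-$n$ result transferred by Gut's theorem as elsewhere in the paper.

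This handles all $\kappa>0$ exactly like your $\kappa\ge1$ case, with no joint convergence and no independence between $Z_\alpha$ and $\lambda_\kappa$. Your own displays show the same thing. Insert $\mathcal{T}_{n(T)}=R_{n(T)}^{-1}\iota_2'\Omega^{-1}n(T)^{-1/2}\sum_{i\le n(T)}s_i+o_p(1)$ into $\hat F^*_{n(T)}(u)=\Phi(uR_{n(T)}/\sqrt{V_\alpha})+o_p(1)$: the factor $R_{n(T)}=(n(T)/g_\kappa(T))^{1/2}$ cancels. That is also why $\lambda_\kappa$ disappears from your limit $\Phi(Z_\alpha/\sqrt{V_\alpha})$. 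The only ingredient ever used is the marginal randomly indexed score CLT, which already underlies $\tau_{n(T)}\overset{d}{\rightarrow}\mathcal{N}(0,1)$.
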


This result ensures validity of bootstrap tests based on $\hat{p}%
_{n(T)}^{\ast }$ and bootstrap percentile confidence intervals based on $%
\hat{F}_{n(T)}^{\ast }$; see, e.g., Remark 3.3 in Cavaliere and Georgiev
(2020).

\subsection{Bootstrap confidence intervals and tests}

\label{sec:-bootstrap-CIs}

The studentized bootstrap confidence interval equivalent of $\limfunc{CI}%
_{\tau }$ is given by 
\begin{equation}
\func{CI}_{\tau }^{\ast }=\left[ \hat{\alpha}_{n(T)}-q_{\tau ^{\ast }}^{\ast
}(1-p/2)\hat{\sigma}(\hat{\alpha}_{n(T)}),\,\hat{\alpha}_{n(T)}-q_{\tau
^{\ast }}^{\ast }(p/2)\hat{\sigma}(\hat{\alpha}_{n(T)})\right] ,
\label{eq:CI-IIstar}
\end{equation}%
where $q_{\tau ^{\ast }}^{\ast }(p)$ is the empirical $p$ quantile of the
bootstrap studentized statistics $\tau _{n(T)}^{\ast }$.

The bootstrap equivalent $\limfunc{CI}^{\ast }\,$of the infeasible
non-studentized, naive interval $\limfunc{CI}$ is given by%
\begin{equation}
\func{CI}^{\ast }=\left[ 2\hat{\alpha}_{n(T)}-q_{\hat{\alpha}^{\ast }}^{\ast
}(1-p/2),\,2\hat{\alpha}_{n(T)}-q_{\hat{\alpha}^{\ast }}^{\ast }(p/2)\right]
,  \label{eq:CI-Istar}
\end{equation}%
where $q_{\hat{\alpha}^{\ast }}^{\ast }(p)$ is the empirical $p$ quantile of 
$\hat{\alpha}_{n(T)}^{\ast }$. The formula is equivalently obtained from the
quantiles of $g_{\kappa }(T)^{1/2}(\hat{\alpha}_{n(T)}^{\ast }-\hat{\alpha}%
_{n(T)})$; the factor $g_{\kappa }(T)$ cancels in the endpoint expression.

In terms of testing, if interest is in the scalar null hypothesis $\mathsf{H}%
_{0}:\alpha =\alpha _{0}$, inference can be based on the bootstrap statistic 
$\tau _{n(T)}^{\ast }$, see (\ref{eq-boot-taustarnT}). Alternatively, a
restricted bootstrap algorithm and test can be employed, where the bootstrap
sample is generated recursively as 
\begin{equation*}
x_{i}^{\star }=\psi _{i}^{\star }(\tilde{\theta}_{n(T)})\varepsilon
_{i}^{\star },\qquad \psi _{i}^{\star }(\tilde{\theta}_{n(T)})=\tilde{\omega}%
_{n(T)}+\alpha _{0}x_{i-1}^{\star },\qquad i=1,2,\ldots ,n(T),
\end{equation*}%
with $x_{0}^{\star }=x_{0}$ and $\tilde{\theta}_{n(T)}=(\tilde{\omega}%
_{n(T)},\alpha _{0})^{\prime }$ the restricted estimator of $\theta $,
obtained with the null hypothesis imposed. The restricted bootstrap test
statistic has the form%
\begin{equation*}
\tau _{n(T)}^{\star }=\frac{\hat{\alpha}_{n(T)}^{\star }-\alpha _{0}}{\hat{%
\sigma}(\hat{\alpha}_{n(T)}^{\star })}
\end{equation*}%
where $\hat{\alpha}_{n(T)}^{\star }$ and $\hat{\sigma}(\hat{\alpha}%
_{n(T)}^{\star })$ are the MLE obtained on the bootstrap sample and its
standard error, respectively. Using the arguments in the proof of Theorem %
\ref{thm-bootstrap-main}, it is straightforward to verify that, under the
null hypothesis, $\tau _{n(T)}^{\star }\overset{d^{\star }}{\rightarrow }_{p}%
\mathcal{N}(0,1)$.

\begin{remark}[Existing bootstrap approaches]
\label{rem:relative-to-perera} Perera, Hidalgo and Silvapulle (2016) and
Perera and Silvapulle (2023) use bootstrap ideas to approximate the
finite-sample distribution of specification and goodness-of-fit statistics
in dynamic duration or conditional distribution models. Their contribution
is complementary to ours. They focus on test statistics and model
diagnostics under a deterministic event-time asymptotic framework. We focus
on confidence intervals and likelihood estimators under calendar-time
asymptotics, where the event count is random and may grow at rate $T$, $%
T/\log T$ or $T^{\kappa }$. Thus the present bootstrap theory is not a
replacement for those specification tests; it supplies the additional
random-count arguments needed when the object of inference is the ACD
parameter and the data are observed over a fixed time span.
\end{remark}

\subsection{Non-exponential innovations}

In the case where $\varepsilon _{i}$ are non-exponential with $\mathbb{E}%
\left[ \varepsilon _{i}\right] =1$ and $\sigma _{\varepsilon }^{2}=\mathbb{V}%
\left[ \varepsilon _{i}\right] <\infty $, the $\tau _{n(T)}$ statistic is
asymptotically $\mathcal{N}\left( 0,\sigma _{\varepsilon }^{2}\right) $, cf.
Cavaliere et al. (2026). More specifically, the asymptotic distribution of
the estimator is scaled by $\sigma _{\varepsilon }$ for $\kappa =1$ and $%
\kappa >1$ (we conjecture that the same result holds when $\kappa <1$ as
well), and standard inference becomes invalid. While standard
robustification of the $\tau _{n(T)}$ statistic is well known and can in
principle be employed, an advantage of the non-parametric bootstrap proposed
in this paper is that it does not require the underlying assumption of
exponential innovations. That is, the bootstrap, as also shown in the
simulation exercise of the next section, remains valid, regardless of the
distribution of $\varepsilon _{i}$.

\section{Monte Carlo simulations}

\label{sec:mc}

In this section we provide a Monte Carlo study to evaluate the finite-sample
properties of the proposed fixed-count bootstrap. We also aim at providing
some comparisons with the random-count bootstrap, where the number of
observations is random in the bootstrap world. For all designs, we consider $%
M=10000$ Monte Carlo replications and $B=399$ bootstrap replications.

The baseline data-generating process $\left\{ x_{i}\right\} _{i=1}^{n(T)}$
is \eqref{eq-ACD_1}-\eqref{eq-ACD_2} with $\omega _{0}=1$, and where, for a
given target duration tail index $\kappa $, the value of $\alpha _{0}$ is
chosen from 
\begin{equation}
\mathbb{E}[(\alpha _{0}\varepsilon _{i})^{\kappa }]=1\text{.}
\label{eq:alpha-general}
\end{equation}%
Specifically, for a given observation window $[0,T]$, $T>0$, durations $%
\{x_{i}\}_{i=1}^{n(T)}$ are simulated using a `burn-in' sample of $d=1000$
observations,%
\begin{equation*}
x_{i}=(\omega _{0}+\alpha _{0}x_{i-1})\varepsilon _{i},\quad i=-(d-1),\dots
,-1,0,1,\dots ,n(T),
\end{equation*}%
with $\varepsilon _{i}$ i.i.d. and initial value $x_{-d}=0$. To examine
robustness to different innovation tails, in addition to $\varepsilon _{i}$
being exponentially distributed, we consider $\varepsilon _{i}$ being Lomax
(Pareto Type II) distributed. The Lomax distribution function, under the
constraint $\mathbb{E}\left[ \varepsilon _{i}\right] =1$, is given by%
\begin{equation*}
F_{\varepsilon ,s}(x)=1-\left( 1+\frac{x}{s-1}\right) ^{-s},\qquad x>0,s>1,
\end{equation*}%
with the exponential distribution obtained as the limiting case by letting $%
s\rightarrow \infty $. This distribution has a right power-law tail with
tail index $s>1$, 
\begin{equation*}
1-F_{\varepsilon ,s}\left( x\right) \sim \left( s-1\right) ^{s}x^{-s},\quad
x\rightarrow \infty .
\end{equation*}%
Thus, $\sigma _{\varepsilon }^{2}(s)=\mathbb{V}\left[ \varepsilon _{i}\right]
<\infty $ provided $s>2$, in which case $\sigma _{\varepsilon }^{2}(s)=%
\mathbb{E}[\varepsilon _{i}^{2}]-1=s/(s-2)$.

In the simulations, we vary the number of finite moments for $\varepsilon
_{i}$ by considering shape parameters as given by $s\in \{2.1,3,\infty \}$,
where $\sigma _{\varepsilon }^{2}\left( 2.1\right) =21$, $\sigma
_{\varepsilon }^{2}\left( 3\right) =3$, and $\sigma _{\varepsilon
}^{2}\left( \infty \right) =1$. In the non-exponential case, the values of $%
\alpha _{0}$ that correspond to particular target values of $\kappa \in
\{0.5,1.0,1.1\}$ are obtained as the unique solution to (\ref%
{eq:alpha-general}) which for any $\kappa <s$ is given by 
\begin{equation}
\alpha _{0}=\left( \mathbb{E}[\varepsilon _{i}^{\kappa }]\right) ^{-1/\kappa
}=\frac{1}{s-1}\left( \frac{\Gamma (\kappa +1)\Gamma (s-\kappa )}{\Gamma (s)}%
\right) ^{-1/\kappa }\text{,}  \label{eq-alpha-of-s-and-kappa}
\end{equation}%
with $\Gamma \left( \cdot \right) $ denoting the Gamma function. Note that
for the exponential case this reduces to \eqref{eq:kappa-alpha}. Table \ref%
{table-alpha-calibration} reports the corresponding values of $\alpha _{0}$
used in the simulations.

\begin{table}[t]
\caption{\textsc{Simulation values of $\protect\alpha _0$ from 
\eqref{eq-alpha-of-s-and-kappa}}}
\label{table-alpha-calibration}\centering
\vspace{0.5em} 
\begin{tabular}{cccc}
\hline
& \multicolumn{3}{c}{$s$} \\ \cline{2-4}
$\kappa _0$ & $2.1$ & $3.0$ & $\infty$ \\ \hline
$0.5$ & $1.59$ & $1.44$ & $1.27$ \\ 
$1.0$ & $1.00$ & $1.00$ & $1.00$ \\ 
$1.1$ & $0.91$ & $0.93$ & $0.96$ \\ \hline
\end{tabular}%
\par
\ \vspace{-0.5em}
\end{table}

\begin{remark}
\label{Tail-of-xi} Note that, using Cavaliere, Mikosch, Rahbek and Vilandt
(2024, Lemma 2.1), it follows that the duration $x_{i}$, when generated with
Lomax distributed $\varepsilon _{i}$ has (unconditional) power law right
tail with index $\kappa $ solving (\ref{eq:alpha-general}). More precisely,%
\begin{equation*}
\mathbb{P}\left( x_{i}>x\right) \sim c_{\kappa }x^{-\kappa },\quad
x\rightarrow \infty ,
\end{equation*}%
where the tail constant $c_{\kappa }$ is given in (A.2) of Cavaliere,
Mikosch, Rahbek and Vilandt (2024).
\end{remark}

The observation-period span $T$ is calibrated so that the median event count
across replications, $\limfunc{med}n\left( T\right) $, belongs to $%
\{200,400,800,1600\}$. For each Monte Carlo replication the unrestricted
estimator $\hat{\theta}_{n(T)}$ is computed, and next $B$ bootstrap samples
are generated from the random-count and the fixed-count bootstrap
algorithms, respectively. The reported confidence-interval results focus on $%
\alpha$ and compare the intervals introduced in Sections~\ref{sec:-CIs} and~%
\ref{sec:-bootstrap-CIs}: the random-count and fixed-count versions of the
basic bootstrap interval $\func{CI}^{\ast }$, the asymptotic studentized
interval $\func{CI}_{\tau }$, and the random-count and fixed-count versions
of the bootstrap-$t$ interval $\func{CI}_{\tau }^{\ast }$.

We also consider tests of the scalar null hypothesis $\mathsf{H}_{0}:\alpha
=\alpha _{0}$. The non-bootstrap test uses 
\begin{equation*}
\tau =(\hat{\alpha}_{n(T)}-\alpha _{0})/\hat{\sigma}(\hat{\alpha}_{n(T)})
\end{equation*}
and compares $|\tau |$ with the standard normal critical value $q=1.96$. The
bootstrap tests are based on a restricted bootstrap generated under the null
hypothesis. Specifically, the bootstrap durations are generated with $\alpha
^{\ast }=\alpha _{0}$ and $\omega ^{\ast }=\tilde{\omega}_{n(T)}$, where $%
\tilde{\omega}_{n(T)}$ is the restricted estimator computed from the
original data. The bootstrap innovations are resampled from the scaled
restricted residuals. For the fixed-count version we set $n^{\ast }=n(T)$;
for the random-count version we use the recursively generated count $n^{\ast
}(T)$ defined in~\eqref{eq:random-n-star}. The bootstrap statistic is 
\begin{equation*}
\tau ^{\ast }=(\hat{\alpha}_{n^{\ast }}^{\ast }-\alpha _{0})/\hat{\sigma} (%
\hat{\alpha}_{n^{\ast }}^{\ast }),
\end{equation*}
where the bootstrap standard error is computed from the bootstrap observed
information as 
\begin{equation*}
\hat{\sigma}^{2}(\hat{\alpha}_{n^{\ast }}^{\ast })=\iota _{2}^{\prime }%
\mathcal{I}_{n^{\ast }}^{\ast }(\hat{\theta}_{n^{\ast }}^{\ast })^{-1}\iota
_{2},
\end{equation*}
with $\iota _{2}=(0,1)^{\prime }$, as in~\eqref{eq:boot-tstat}.

\subsection{Simulation results}

The main quantities of interest are empirical coverage probabilities,
average interval lengths, and empirical rejection probabilities. The theory
predicts that fixed-count intervals should remain valid even when $0<\kappa
_{0}<1$, while avoiding the additional variation in $n^{\ast }(T)$ that is
built into the random-count bootstrap.

Tables~\ref{table-kappa-eq-1p1}--\ref{table-kappa-eq-0-p5} report empirical
coverage probabilities (CPs) and average lengths (ALs) of confidence
intervals (CIs).

\begin{table}[t]
\caption{\textsc{Coverage probabilities and average lengths of confidence
intervals}}
\label{table-kappa-eq-1p1}\centering{\small \vspace{0.5em}%
\begin{tabular}{llcccccccc}
\hline
\multicolumn{10}{c}{Tail index $\kappa _{0}=1.1$} \\ \hline
& $\limfunc{med}n(T)$ & \multicolumn{2}{c}{$200$} & \multicolumn{2}{c}{$400$}
& \multicolumn{2}{c}{$800$} & \multicolumn{2}{c}{$1600$} \\ \cline{3-10}
\multicolumn{1}{c}{$s$} & \multicolumn{1}{c}{} & CP & AL & CP & AL & CP & AL
& CP & AL \\ \hline
$\infty $ & $\limfunc{CI}_{\tau }$ & $0.96$ & $0.58$ & $0.96$ & $0.39$ & $%
0.95$ & $0.27$ & $0.95$ & $0.19$ \\ 
& $\limfunc{CI}_{\tau }^{\ast }$ & $0.92$ & $0.55$ & $0.92$ & $0.38$ & $0.93$
& $0.26$ & $0.93$ & $0.18$ \\ 
& $\limfunc{CI}_{\tau ,\ast }^{\ast }$ & $0.89$ & $0.49$ & $0.90$ & $0.35$ & 
$0.92$ & $0.25$ & $0.94$ & $0.18$ \\ 
& $\limfunc{CI}^{\ast }$ & $0.92$ & $0.53$ & $0.92$ & $0.37$ & $0.93$ & $%
0.26 $ & $0.93$ & $0.18$ \\ 
& $\limfunc{CI}_{\ast }^{\ast }$ & $0.94$ & $0.72$ & $0.94$ & $0.52$ & $0.95$
& $0.36$ & $0.95$ & $0.24$ \\ \hline
$3$ & $\limfunc{CI}_{\tau }$ & $0.79$ & $0.71$ & $0.78$ & $0.51$ & $0.77$ & $%
0.39$ & $0.76$ & $0.27$ \\ 
& $\limfunc{CI}_{\tau }^{\ast }$ & $0.91$ & $1.03$ & $0.92$ & $0.71$ & $0.93$
& $0.50$ & $0.93$ & $0.35$ \\ 
& $\limfunc{CI}_{\tau ,\ast }^{\ast }$ & $0.87$ & $0.90$ & $0.89$ & $0.64$ & 
$0.91$ & $0.46$ & $0.93$ & $0.33$ \\ 
& $\limfunc{CI}^{\ast }$ & $0.90$ & $0.88$ & $0.92$ & $0.65$ & $0.94$ & $%
0.47 $ & $0.93$ & $0.34$ \\ 
& $\limfunc{CI}_{\ast }^{\ast }$ & $0.90$ & $1.26$ & $0.91$ & $0.95$ & $0.93$
& $0.68$ & $0.94$ & $0.46$ \\ \hline
$2.1$ & $\limfunc{CI}_{\tau }$ & $0.66$ & $0.79$ & $0.63$ & $0.64$ & $0.61$
& $0.39$ & $0.58$ & $0.24$ \\ 
& $\limfunc{CI}_{\tau }^{\ast }$ & $0.90$ & $1.65$ & $0.90$ & $1.10$ & $0.92$
& $0.80$ & $0.92$ & $0.57$ \\ 
& $\limfunc{CI}_{\tau ,\ast }^{\ast }$ & $0.85$ & $1.40$ & $0.87$ & $0.99$ & 
$0.89$ & $0.72$ & $0.91$ & $0.53$ \\ 
& $\limfunc{CI}^{\ast }$ & $0.86$ & $1.15$ & $0.89$ & $0.91$ & $0.91$ & $%
0.70 $ & $0.92$ & $0.53$ \\ 
& $\limfunc{CI}_{\ast }^{\ast }$ & $0.84$ & $1.66$ & $0.87$ & $1.36$ & $0.90$
& $1.04$ & $0.91$ & $0.75$ \\ \hline
\end{tabular}
\begin{flushleft}
{\small \textit{Notes:} The table reports empirical coverage probabilities
(CP) and average length (AL) of confidence intervals. $\CI^{\ast}$ is for the
fixed-count bootstrap in (\ref{eq:CI-Istar}), while $\CI_{\ast}^{\ast}$ is
(\ref{eq:CI-Istar}) for the random-count bootstrap where $n(T)$ is replaced by
$n^{\ast}(T)$. Likewise, $\CI_{\tau}$ is given in (\ref{eq:CI-II}),
$\CI_{\tau}^{\ast}$ is for the fixed-count bootstrap in (\ref{eq:CI-IIstar}),
while $\CI_{\tau,\ast}^{\ast}$ is (\ref{eq:CI-IIstar}) for the random-count
bootstrap where $n(T)$ is replaced by $n^{\ast}(T)$. All intervals are reported
for a $p=0.05$ nominal level. Results are based on $M=10000$ Monte Carlo
replications and, for each replication, $B=399$ bootstrap replications.}
\end{flushleft}
}
\end{table}

\bigskip

\begin{table}[t]
\caption{\textsc{Coverage probabilities and average lengths of confidence
intervals}}
\label{table-kappa-eq-1p0}\centering{\small \vspace{0.5em}%
\begin{tabular}{llcccccccc}
\hline
\multicolumn{10}{c}{Tail index $\kappa _{0}=1.0$} \\ \hline
& $\limfunc{med}n(T)$ & \multicolumn{2}{c}{$200$} & \multicolumn{2}{c}{$400$}
& \multicolumn{2}{c}{$800$} & \multicolumn{2}{c}{$1600$} \\ \cline{3-10}
\multicolumn{1}{c}{$s$} & \multicolumn{1}{c}{} & CP & AL & CP & AL & CP & AL
& CP & AL \\ \hline
$\infty $ & $\limfunc{CI}_{\tau }$ & $0.96$ & $0.59$ & $0.96$ & $0.41$ & $%
0.95$ & $0.28$ & $0.95$ & $0.20$ \\ 
& $\limfunc{CI}_{\tau }^{\ast }$ & $0.92$ & $0.56$ & $0.92$ & $0.39$ & $0.93$
& $0.27$ & $0.93$ & $0.19$ \\ 
& $\limfunc{CI}_{\tau ,\ast }^{\ast }$ & $0.88$ & $0.50$ & $0.90$ & $0.36$ & 
$0.91$ & $0.26$ & $0.93$ & $0.18$ \\ 
& $\limfunc{CI}^{\ast }$ & $0.92$ & $0.54$ & $0.92$ & $0.38$ & $0.93$ & $%
0.27 $ & $0.93$ & $0.19$ \\ 
& $\limfunc{CI}_{\ast }^{\ast }$ & $0.93$ & $0.75$ & $0.94$ & $0.56$ & $0.94$
& $0.39$ & $0.95$ & $0.27$ \\ \hline
$3$ & $\limfunc{CI}_{\tau }$ & $0.80$ & $0.80$ & $0.78$ & $0.50$ & $0.77$ & $%
0.34$ & $0.76$ & $0.23$ \\ 
& $\limfunc{CI}_{\tau }^{\ast }$ & $0.91$ & $1.08$ & $0.92$ & $0.74$ & $0.93$
& $0.52$ & $0.93$ & $0.37$ \\ 
& $\limfunc{CI}_{\tau ,\ast }^{\ast }$ & $0.86$ & $0.93$ & $0.89$ & $0.66$ & 
$0.91$ & $0.48$ & $0.92$ & $0.35$ \\ 
& $\limfunc{CI}^{\ast }$ & $0.91$ & $0.91$ & $0.92$ & $0.68$ & $0.94$ & $%
0.49 $ & $0.94$ & $0.35$ \\ 
& $\limfunc{CI}_{\ast }^{\ast }$ & $0.89$ & $1.34$ & $0.91$ & $1.03$ & $0.93$
& $0.74$ & $0.94$ & $0.51$ \\ \hline
$2.1$ & $\limfunc{CI}_{\tau }$ & $0.66$ & $0.94$ & $0.63$ & $0.65$ & $0.61$
& $0.39$ & $0.58$ & $0.26$ \\ 
& $\limfunc{CI}_{\tau }^{\ast }$ & $0.89$ & $1.64$ & $0.90$ & $1.17$ & $0.92$
& $0.84$ & $0.92$ & $0.63$ \\ 
& $\limfunc{CI}_{\tau ,\ast }^{\ast }$ & $0.85$ & $1.43$ & $0.87$ & $1.03$ & 
$0.89$ & $0.75$ & $0.90$ & $0.57$ \\ 
& $\limfunc{CI}^{\ast }$ & $0.87$ & $1.20$ & $0.90$ & $0.96$ & $0.92$ & $%
0.74 $ & $0.92$ & $0.57$ \\ 
& $\limfunc{CI}_{\ast }^{\ast }$ & $0.84$ & $1.79$ & $0.87$ & $1.48$ & $0.89$
& $1.15$ & $0.90$ & $0.85$ \\ \hline
\end{tabular}
\begin{flushleft}
{\small \textit{Notes:} The table reports empirical coverage probabilities
(CP) and average length (AL) of confidence intervals. $\CI^{\ast}$ is for the
fixed-count bootstrap in (\ref{eq:CI-Istar}), while $\CI_{\ast}^{\ast}$ is
(\ref{eq:CI-Istar}) for the random-count bootstrap where $n(T)$ is replaced by
$n^{\ast}(T)$. Likewise, $\CI_{\tau}$ is given in (\ref{eq:CI-II}),
$\CI_{\tau}^{\ast}$ is for the fixed-count bootstrap in (\ref{eq:CI-IIstar}),
while $\CI_{\tau,\ast}^{\ast}$ is (\ref{eq:CI-IIstar}) for the random-count
bootstrap where $n(T)$ is replaced by $n^{\ast}(T)$. All intervals are reported
for a $p=0.05$ nominal level. Results are based on $M=10000$ Monte Carlo
replications and, for each replication, $B=399$ bootstrap replications.}
\end{flushleft}
}
\end{table}

\bigskip

\begin{table}[t]
\caption{\textsc{Coverage probabilities and average lengths of confidence
intervals}}
\label{table-kappa-eq-0-p5}\centering{\small \vspace{0.5em}%
\begin{tabular}{llcccccccc}
\hline
\multicolumn{10}{c}{Tail index $\kappa _{0}=0.5$} \\ \hline
& $\limfunc{med}n(T)$ & \multicolumn{2}{c}{$200$} & \multicolumn{2}{c}{$400$}
& \multicolumn{2}{c}{$800$} & \multicolumn{2}{c}{$1600$} \\ \cline{3-10}
\multicolumn{1}{c}{$s$} & \multicolumn{1}{c}{} & CP & AL & CP & AL & CP & AL
& CP & AL \\ \hline
$\infty $ & $\limfunc{CI}_{\tau }$ & $0.96$ & $0.71$ & $0.96$ & $0.49$ & $%
0.96$ & $0.34$ & $0.96$ & $0.24$ \\ 
& $\limfunc{CI}_{\tau }^{\ast }$ & $0.92$ & $0.63$ & $0.92$ & $0.46$ & $0.92$
& $0.33$ & $0.92$ & $0.24$ \\ 
& $\limfunc{CI}_{\tau ,\ast }^{\ast }$ & $0.87$ & $0.54$ & $0.88$ & $0.40$ & 
$0.89$ & $0.30$ & $0.91$ & $0.22$ \\ 
& $\limfunc{CI}^{\ast }$ & $0.94$ & $0.59$ & $0.93$ & $0.43$ & $0.93$ & $%
0.31 $ & $0.93$ & $0.22$ \\ 
& $\limfunc{CI}_{\ast }^{\ast }$ & $0.91$ & $0.94$ & $0.93$ & $0.74$ & $0.94$
& $0.57$ & $0.95$ & $0.42$ \\ \hline
$3$ & $\limfunc{CI}_{\tau }$ & $0.82$ & $0.98$ & $0.80$ & $0.62$ & $0.78$ & $%
0.43$ & $0.77$ & $0.30$ \\ 
& $\limfunc{CI}_{\tau }^{\ast }$ & $0.92$ & $1.37$ & $0.92$ & $0.89$ & $0.92$
& $0.67$ & $0.93$ & $0.47$ \\ 
& $\limfunc{CI}_{\tau ,\ast }^{\ast }$ & $0.85$ & $1.23$ & $0.86$ & $0.76$ & 
$0.89$ & $0.58$ & $0.91$ & $0.43$ \\ 
& $\limfunc{CI}^{\ast }$ & $0.94$ & $1.06$ & $0.94$ & $0.80$ & $0.94$ & $%
0.60 $ & $0.94$ & $0.44$ \\ 
& $\limfunc{CI}_{\ast }^{\ast }$ & $0.88$ & $1.71$ & $0.89$ & $1.38$ & $0.91$
& $1.08$ & $0.93$ & $0.81$ \\ \hline
$2.1$ & $\limfunc{CI}_{\tau }$ & $0.68$ & $1.39$ & $0.65$ & $0.87$ & $0.62$
& $0.52$ & $0.60$ & $0.36$ \\ 
& $\limfunc{CI}_{\tau }^{\ast }$ & $0.90$ & $1.94$ & $0.90$ & $1.48$ & $0.91$
& $1.12$ & $0.92$ & $0.85$ \\ 
& $\limfunc{CI}_{\tau ,\ast }^{\ast }$ & $0.82$ & $1.53$ & $0.84$ & $1.21$ & 
$0.87$ & $0.95$ & $0.89$ & $0.74$ \\ 
& $\limfunc{CI}^{\ast }$ & $0.91$ & $1.48$ & $0.92$ & $1.19$ & $0.93$ & $%
0.94 $ & $0.94$ & $0.72$ \\ 
& $\limfunc{CI}_{\ast }^{\ast }$ & $0.82$ & $2.45$ & $0.85$ & $2.09$ & $0.88$
& $1.70$ & $0.89$ & $1.30$ \\ \hline
\end{tabular}
\begin{flushleft}
{\small \textit{Notes:} The table reports empirical coverage probabilities
(CP) and average length (AL) of confidence intervals. $\CI^{\ast}$ is for the
fixed-count bootstrap in (\ref{eq:CI-Istar}), while $\CI_{\ast}^{\ast}$ is
(\ref{eq:CI-Istar}) for the random-count bootstrap where $n(T)$ is replaced by
$n^{\ast}(T)$. Likewise, $\CI_{\tau}$ is given in (\ref{eq:CI-II}),
$\CI_{\tau}^{\ast}$ is for the fixed-count bootstrap in (\ref{eq:CI-IIstar}),
while $\CI_{\tau,\ast}^{\ast}$ is (\ref{eq:CI-IIstar}) for the random-count
bootstrap where $n(T)$ is replaced by $n^{\ast}(T)$. All intervals are reported
for a $p=0.05$ nominal level. Results are based on $M=10000$ Monte Carlo
replications and, for each replication, $B=399$ bootstrap replications.}
\end{flushleft}
}
\end{table}

For exponential innovations, $s=\infty $, the asymptotic studentized
interval $\func{CI} _{\tau }$ has coverage close to the nominal level. The
fixed-count bootstrap intervals also perform well. The bootstrap-$t$
interval $\func{CI} _{\tau }^{\ast }$ has coverage around $0.92$--$0.93$
across the three values of $\kappa _{0}$, while the basic fixed-count
interval $\func{CI} ^{\ast }$ is similarly close to nominal coverage. The
random-count basic interval $\func{CI} _{\ast }^{\ast }$ often gives
coverage close to nominal as well, but at the cost of noticeably longer
average lengths.

The advantages of the residual bootstrap are more visible when the
innovations are non-exponential. For $s=3$, the standard asymptotic interval
under-covers substantially, with CPs typically around $0.76$--$0.82$ even
for the larger samples. In contrast, the fixed-count bootstrap-$t$ intervals
have CPs close to $0.91$--$0.93$, and the basic fixed-count intervals have
coverage close to the nominal level. For $s=2.1$, where the variance of the
innovations is finite but large, the asymptotic interval performs poorly,
whereas the bootstrap intervals continue to give a substantial correction.
This supports the main practical point of Section~\ref{sec:bootstrap}: the
residual bootstrap can provide useful robustness to misspecification of the
exponential likelihood.

Table~\ref{table-test-alpha} reports empirical rejection probabilities for
the test of $\mathsf{H}_{0}:\alpha =\alpha _{0}$. When $s=\infty $, the
normal critical value gives rejection frequencies close to the nominal
level. For non-exponential innovations, however, the same test over-rejects
markedly, with rejection frequencies increasing as the innovation variance
increases. The restricted bootstrap tests correct this size distortion. Both
the fixed-count and random-count versions deliver rejection frequencies
close to the nominal level across the reported values of $\kappa _{0}$, $s$
and $\limfunc{med}n(T)$. The fixed-count version is therefore competitive
with the random-count version, while being simpler and avoiding the
additional randomness in the bootstrap event count.

\begin{table}[t]
\caption{\textsc{Empirical rejection probabilities for testing $\protect%
\alpha =\protect\alpha _{0}$}}
\label{table-test-alpha}\centering{\scriptsize \vspace{0.5em} 
\resizebox{\textwidth}{!}{\begin{tabular}{llcccccccccccccc}
\hline
&  & \multicolumn{4}{c}{$\kappa _{0}=1.1$} &  & \multicolumn{4}{c}{$\kappa
_{0}=1.0$} &  & \multicolumn{4}{c}{$\kappa _{0}=0.5$} \\ 
\cline{3-6}\cline{8-11}\cline{13-16}
\multicolumn{1}{c}{$s$} & \multicolumn{1}{c}{$\limfunc{med}n(T)$} & $200$ & $400$ & $800$ & $1600$ &  & $200$ & $400$ & $800$ & $1600$ &  & $200$ & $400$ & $800$ & $1600$ \\ \hline
$\infty $ & $|\tau |>q$ & $0.04$ & $0.04$ & $0.05$ & $0.05$ &  & $0.04$ & $0.04$ & $0.05$ & $0.05$ &  & $0.04$ & $0.04$ & $0.04$ & $0.04$ \\ 
& fixed-count bootstrap & $0.07$ & $0.08$ & $0.07$ & $0.07$ &  & $0.08$ & $0.08$ & $0.07$ & $0.07$ &  & $0.07$ & $0.08$ & $0.08$ & $0.08$ \\ 
& random-count bootstrap & $0.07$ & $0.07$ & $0.06$ & $0.06$ &  & $0.07$ & $0.07$ & $0.06$ & $0.06$ &  & $0.08$ & $0.08$ & $0.07$ & $0.07$ \\ \hline
$3$ & $|\tau |>q$ & $0.21$ & $0.22$ & $0.23$ & $0.24$ &  & $0.20$ & $0.22$ & $0.23$ & $0.24$ &  & $0.18$ & $0.20$ & $0.22$ & $0.23$ \\ 
& fixed-count bootstrap & $0.08$ & $0.07$ & $0.06$ & $0.06$ &  & $0.07$ & $0.07$ & $0.06$ & $0.06$ &  & $0.07$ & $0.07$ & $0.07$ & $0.07$ \\ 
& random-count bootstrap & $0.07$ & $0.06$ & $0.06$ & $0.06$ &  & $0.07$ & $0.07$ & $0.05$ & $0.06$ &  & $0.07$ & $0.07$ & $0.06$ & $0.06$ \\ \hline
$2.1$ & $|\tau |>q$ & $0.34$ & $0.37$ & $0.39$ & $0.42$ &  & $0.34$ & $0.37$ & $0.39$ & $0.42$ &  & $0.32$ & $0.35$ & $0.38$ & $0.40$ \\ 
& fixed-count bootstrap & $0.08$ & $0.08$ & $0.07$ & $0.07$ &  & $0.08$ & $0.08$ & $0.07$ & $0.07$ &  & $0.08$ & $0.08$ & $0.08$ & $0.07$ \\ 
& random-count bootstrap & $0.08$ & $0.07$ & $0.06$ & $0.06$ &  & $0.08$ & $0.07$ & $0.06$ & $0.07$ &  & $0.09$ & $0.08$ & $0.07$ & $0.07$ \\ \hline
\end{tabular}} 
\begin{flushleft}
{\small \textit{Notes:} The table reports empirical rejection probabilities
for the two-sided test of $\mathsf{H}_{0}:\alpha=\alpha _{0}$. The first row
uses $\tau =(\hat{\alpha}-\alpha _{0})/\hat{\sigma}(\hat{\alpha})$ and the
standard normal critical value $q=1.96$. The fixed-count and random-count
bootstrap rows use quantiles of the restricted bootstrap $t$ statistic
generated under $\alpha ^{\ast }=\alpha _{0}$ with scaled restricted
residuals. Results are based on $M=10000$ Monte Carlo replications and, for
each replication, $B=399$ bootstrap replications.}
\end{flushleft}
}
\end{table}

\section{An empirical illustration}

\label{sec:emp}

We illustrate the bootstrap procedures using intra-day trade durations for
five exchange-traded funds (ETFs) tracking cryptocurrency prices. The data
are the same as in Cavaliere, Mikosch, Rahbek and Vilandt (2026). The ETFs
are the Grayscale Bitcoin Mini Trust (ticker: BTC), Grayscale Ethereum Mini
Trust (ETH), Grayscale Bitcoin Trust (GBTC), Grayscale Ethereum Trust (ETHE)
and Bitwise Bitcoin (BITB). The sample covers 35 trading days starting on
January 2, 2025. During regular trading hours (9:30am to 4:00pm EST), this
gives a raw calendar span $T=35\cdot 23400=819000$ seconds.

Durations are measured in seconds and are obtained from NASDAQ limit order
book data through the LOBSTER database (\textsf{https://data.lobsterdata.com/%
}). The raw intra-day durations are adjusted for deterministic intra-day
patterns using cubic splines with knots every 30 minutes, as is standard in
empirical duration analysis; see Hautsch (2012, Ch.3).\footnote{%
The processed data can be downloaded at \textsf{%
https://github.com/CMRV-ACD/IACD.}} In the bootstrap implementation below,
the calendar span used by the random-count bootstrap is the sum of the
adjusted durations, $T=\sum_{i=1}^{n(T)}x_i$. The resulting numbers of
durations are 19366 for BTC, 35492 for ETH, 157620 for GBTC, 120104 for ETHE
and 51917 for BITB.

The empirical specification is the standard ACD(1,1) model 
\begin{equation*}
x_{i}=\psi _{i}\varepsilon _{i},\qquad \psi _{i}=\omega +\alpha
x_{i-1}+\beta \psi _{i-1}.
\end{equation*}
This is the empirically standard extension of the first-order autoregressive
duration model studied in the theory above. We apply the same fixed-count
and random-count residual bootstrap logic to this specification and report
inference for $\omega $, $\alpha $, $\beta $ and the persistence parameter $%
\alpha +\beta $. As in the simulations, the fixed-count bootstrap sets $%
n^{\ast }=n(T)$, while the random-count bootstrap keeps fixed the span of
adjusted durations, $T=\sum_{i=1}^{n(T)}x_{i}$.

Table~\ref{table-emp-ci} reports unrestricted QML estimates and confidence
intervals. The estimates of $\alpha +\beta $ are above one for all five
ETFs, with values between 1.002 for BITB and 1.018 for ETH. The fixed-count
bootstrap intervals for $\alpha +\beta $ are tight and lie above one for
BTC, ETH, GBTC and ETHE. For BITB, the fixed-count bootstrap intervals
include one, indicating that the evidence against the integrated boundary is
weaker for this series. The random-count basic intervals are substantially
wider, especially for the intercept, reflecting the additional variation
induced by the random bootstrap event count. For $\omega$, lower endpoints
of confidence intervals are truncated at zero whenever the corresponding
untruncated lower endpoint is negative.

\begin{table}[t]
\caption{\textsc{Unrestricted ACD(1,1) estimates and confidence intervals}}
\label{table-emp-ci}\centering{\scriptsize \vspace{0.5em} 
\begin{tabular}{lcccc}
\hline
& $\omega$ & $\alpha$ & $\beta$ & $\alpha+\beta$ \\ \hline
BTC & 6.663 & 0.186 & 0.829 & 1.015 \\ 
$\func{CI}_{\tau}$ & [6.041, 7.285] & [0.178, 0.194] & [0.823, 0.835] & 
[1.012, 1.019] \\ 
$\func{CI}_{\tau}^{\ast}$ & [5.299, 7.909] & [0.169, 0.204] & [0.815, 0.842]
& [1.009, 1.022] \\ 
$\func{CI}_{\tau,\ast}^{\ast}$ & [5.477, 7.987] & [0.171, 0.203] & [0.814,
0.842] & [1.009, 1.022] \\ 
$\func{CI}^{\ast}$ & [4.704, 8.013] & [0.169, 0.203] & [0.816, 0.843] & 
[1.007, 1.023] \\ 
$\func{CI}_{\ast}^{\ast}$ & [1.284, 9.898] & [0.111, 0.216] & [0.793, 0.873]
& [0.969, 1.026] \\ \hline
ETH & 0.007 & 0.123 & 0.896 & 1.018 \\ 
$\func{CI}_{\tau}$ & [0.005, 0.009] & [0.120, 0.126] & [0.894, 0.898] & 
[1.017, 1.020] \\ 
$\func{CI}_{\tau}^{\ast}$ & [0.000, 0.012] & [0.114, 0.133] & [0.889, 0.901]
& [1.013, 1.023] \\ 
$\func{CI}_{\tau,\ast}^{\ast}$ & [0.000, 0.011] & [0.115, 0.130] & [0.890,
0.900] & [1.014, 1.022] \\ 
$\func{CI}^{\ast}$ & [0.000, 0.011] & [0.111, 0.135] & [0.886, 0.904] & 
[1.013, 1.023] \\ 
$\func{CI}_{\ast}^{\ast}$ & [0.000, 0.014] & [0.047, 0.171] & [0.849, 0.945]
& [0.977, 1.028] \\ \hline
GBTC & 1.974 & 0.119 & 0.896 & 1.015 \\ 
$\func{CI}_{\tau}$ & [1.802, 2.145] & [0.117, 0.121] & [0.894, 0.898] & 
[1.014, 1.016] \\ 
$\func{CI}_{\tau}^{\ast}$ & [1.526, 2.420] & [0.114, 0.125] & [0.892, 0.900]
& [1.013, 1.017] \\ 
$\func{CI}_{\tau,\ast}^{\ast}$ & [1.493, 2.332] & [0.113, 0.124] & [0.892,
0.901] & [1.013, 1.017] \\ 
$\func{CI}^{\ast}$ & [1.682, 2.219] & [0.115, 0.124] & [0.893, 0.899] & 
[1.013, 1.017] \\ 
$\func{CI}_{\ast}^{\ast}$ & [0.000, 2.881] & [0.064, 0.135] & [0.882, 0.920]
& [0.972, 1.021] \\ \hline
ETHE & 0.394 & 0.083 & 0.927 & 1.010 \\ 
$\func{CI}_{\tau}$ & [0.375, 0.414] & [0.081, 0.085] & [0.926, 0.929] & 
[1.010, 1.011] \\ 
$\func{CI}_{\tau}^{\ast}$ & [0.310, 0.467] & [0.076, 0.089] & [0.923, 0.932]
& [1.008, 1.013] \\ 
$\func{CI}_{\tau,\ast}^{\ast}$ & [0.303, 0.454] & [0.078, 0.090] & [0.922,
0.931] & [1.008, 1.013] \\ 
$\func{CI}^{\ast}$ & [0.240, 0.498] & [0.076, 0.089] & [0.923, 0.932] & 
[1.007, 1.013] \\ 
$\func{CI}_{\ast}^{\ast}$ & [0.000, 0.684] & [0.052, 0.102] & [0.908, 0.946]
& [0.997, 1.016] \\ \hline
BITB & 4.836 & 0.095 & 0.906 & 1.002 \\ 
$\func{CI}_{\tau}$ & [4.298, 5.374] & [0.092, 0.099] & [0.903, 0.910] & 
[1.001, 1.003] \\ 
$\func{CI}_{\tau}^{\ast}$ & [3.816, 5.996] & [0.089, 0.102] & [0.900, 0.912]
& [0.999, 1.004] \\ 
$\func{CI}_{\tau,\ast}^{\ast}$ & [3.818, 5.871] & [0.088, 0.103] & [0.899,
0.913] & [0.999, 1.004] \\ 
$\func{CI}^{\ast}$ & [3.901, 5.677] & [0.089, 0.101] & [0.901, 0.911] & 
[0.999, 1.005] \\ 
$\func{CI}_{\ast}^{\ast}$ & [3.191, 6.639] & [0.080, 0.106] & [0.896, 0.916]
& [0.992, 1.005] \\ \hline
\end{tabular}
\begin{flushleft}
{\small \textit{Notes:} Estimates and confidence intervals for the
35-trading-day cryptocurrency ETF sample. The notation follows
Tables~\ref{table-kappa-eq-1p1}--\ref{table-kappa-eq-0-p5}; the subscript
$\ast$ denotes the random-count bootstrap. Intervals are based on $B=399$
bootstrap replications. Confidence intervals for $\omega$ are reported with
lower endpoints truncated at zero.}
\end{flushleft}
}
\end{table}

Table~\ref{table-emp-iacd} reports bootstrap tests of the integrated ACD
hypothesis $\alpha +\beta =1$. The table gives the observed $t$ statistic
and the $2.5\%$ and $97.5\%$ bootstrap quantiles obtained under the
restriction $\alpha ^{\ast }+\beta ^{\ast }=1$. We report quantiles based
both on unrestricted residuals and on restricted residuals. The latter are
used to generate the restricted bootstrap draws and lead to slightly sharper
critical values, but the empirical conclusions are unchanged.

\begin{table}[t]
\caption{\textsc{Bootstrap tests of $\protect\alpha+\protect\beta=1$}}
\label{table-emp-iacd}\centering{\small \vspace{0.5em} 
\resizebox{\textwidth}{!}{\begin{tabular}{lccccc}
\hline
& BTC & ETH & GBTC & ETHE & BITB \\ \hline
$t$ & $9.11$ & $23.24$ & $35.88$ & $33.23$ & $3.11$ \\ 
fixed $n$ & $[-4.12,4.16]$ & $[-6.24,6.52]$ & $[-5.57,4.16]$ & $[-8.12,7.64]$ & $[-4.39,4.04]$ \\ 
random $n$ & $[-4.10,4.39]$ & $[-6.82,6.43]$ & $[-5.34,4.41]$ & $[-8.12,9.05]$ & $[-4.13,3.84]$ \\ 
fixed $n$ (restr. resid.) & $[-4.05,4.05]$ & $[-5.62,5.71]$ & $[-5.36,4.08]$ & $[-7.36,6.54]$ & $[-4.66,4.02]$ \\ 
random $n$ (restr. resid.) & $[-4.06,4.18]$ & $[-5.66,5.67]$ & $[-4.70,4.04]$ & $[-7.25,7.22]$ & $[-3.99,3.99]$ \\ \hline
\end{tabular}} 
\begin{flushleft}
{\small \textit{Notes:} The first row reports
$t=(\hat{\alpha}+\hat{\beta}-1)/\hat{\sigma}(\hat{\alpha}+\hat{\beta})$. The
remaining rows report the $0.025$ and $0.975$ quantiles of the restricted
bootstrap $t$ statistic generated under $\alpha+\beta=1$, using either
unrestricted residuals or scaled restricted residuals. Quantiles are based
on $B=399$ bootstrap replications.}
\end{flushleft}
}
\end{table}

Using the restricted-residual fixed-count bootstrap, the integrated ACD
hypothesis is rejected for BTC, ETH, GBTC and ETHE. For BITB, the observed
statistic $3.11$ lies within the bootstrap quantile interval $[-4.66,4.02]$,
and the integrated specification is therefore not rejected at the 5 percent
level. Since all unrestricted estimates satisfy $\hat{\alpha}+\hat{\beta}%
\geq 1$, we also do not reject the null hypothesis of infinite expected
durations against the finite-mean alternative $\alpha +\beta <1$. Overall,
the empirical illustration points to very persistent, heavy-tailed duration
dynamics in cryptocurrency ETF trading, and it also shows that bootstrap
critical values can matter empirically relative to the standard Gaussian
approximation.

\section{Conclusion}

\label{sec:conclusion}

This paper studies bootstrap inference in ACD models when the number of
observed durations is random. The treatment of the number of events in the
bootstrap world is central. A random-count bootstrap preserves the
calendar-time span, while a fixed-count bootstrap preserves the realized
number of durations. The latter is simple to implement and closely aligned
with existing empirical practice, but its validity cannot be justified by
standard deterministic-sample-size bootstrap arguments.

The fixed-count bootstrap is valid in the finite-mean and boundary regimes, $%
\kappa \geq 1$. In the infinite-mean regime, $0<\kappa <1$, the bootstrap
limiting measure is random and classical consistency fails. Nevertheless,
the random limiting measure matches the conditional Gaussian component of
the asymptotic mixed normal distribution, which is sufficient for
first-order validity of the basic bootstrap interval and for standard
normality of the bootstrap $t$ statistic. This provides a practical
inference route that does not require direct estimation of stable-law
quantiles.

The main takeaway and implication of this analysis is that bootstrap
inference for duration models should not be justified only by analogy
between the ACD model and GARCH or MEM models. While this analogy is useful
for constructing the likelihood function and related statistics, it does not
by itself justify the validity of bootstrap inference. Existing bootstrap
approaches for MEM, as in Perera, Hidalgo and Silvapulle (2023), and
point-process bootstrap methods, as in Cavaliere, Lu, Rahbek and
Staerk-Ostergaard (2023), provide important building blocks. The main
contribution of this paper is to show how these ideas can be combined with
the recent ACD limit theory of Cavaliere, Mikosch, Rahbek and Vilandt (2026)
to provide bootstrap confidence intervals that remain meaningful and
first-order valid across finite- and infinite-mean duration regimes.

\section*{Acknowledgements}

The authors gratefully acknowledge support from the Independent Research
Fund Denmark (DFF Grant 7015-00028) and the Italian Ministry of University
and Research (PRIN 2020 Grant 2020B2AKFW).

\section*{References}

\smallskip \noindent \textsc{Bhogal, S.K., and Variyam, R.T.} (2019),
Conditional duration models for high-frequency data: a review on recent
developments. \emph{Journal of Economic Surveys}, 33, 252--273.

\smallskip \noindent \textsc{Cavaliere, G., and Georgiev, I.} (2020),
Inference under random limit bootstrap measures. \emph{Econometrica}, 88,
2547--2574.

\smallskip\noindent\textsc{Cavaliere, G., Lu, Y., Rahbek, A., and
Staerk-Ostergaard, J.} (2023), Bootstrap inference for Hawkes and general
point processes. \emph{Journal of Econometrics}, 235, 133--165.

\smallskip\noindent\textsc{Cavaliere, G., Mikosch, T., Rahbek, A., and
Vilandt, F.} (2024), Tail behavior of ACD models and consequences for
likelihood-based estimation. \emph{Journal of Econometrics}, 238, Article
105613.

\smallskip\noindent\textsc{Cavaliere, G., Mikosch, T., Rahbek, A., and
Vilandt, F.} (2025), A comment on: ``Autoregressive conditional duration: a
new model for irregularly spaced transaction data''. \emph{Econometrica},
93, 719--729.

\smallskip \noindent \textsc{Cavaliere, G., Mikosch, T., Rahbek, A., and
Vilandt, F.} (2026), Beyond the mean: limit theory and tests for
infinite-mean autoregressive conditional durations. \emph{Journal of the
Royal Statistical Society Series B: Statistical Methodology}, qkag053.

\smallskip\noindent\textsc{Cavaliere, G., Nielsen, H.B., and Rahbek, A.}
(2017), On the consistency of bootstrap testing for a parameter on the
boundary of the parameter space. \emph{Journal of Time Series Analysis}, 38,
513--534.

\smallskip \noindent \textsc{Engle, R.F.} (2000), The econometrics of
ultra-high-frequency data. \emph{Econometrica}, 68, 1--22.

\smallskip\noindent\textsc{Engle, R.F., and Russell, J.R.} (1998),
Autoregressive conditional duration: a new model for irregularly spaced
transaction data. \emph{Econometrica}, 66, 1127--1162.

\smallskip \noindent \textsc{Fernandes, M., Medeiros, M.C., and Veiga, A.}
(2016), The (semi-)parametric functional coefficient autoregressive
conditional duration model. \emph{Econometric Reviews}, 35, 1221--1250.

\smallskip \noindent \textsc{Gut, A.} (2009), \emph{Stopped Random Walks:
Limit Theorems and Applications}, 2nd ed. Springer.

\smallskip\noindent\textsc{Gut, A.} (2013), \emph{Probability: A Graduate
Course}. Springer.

\smallskip \noindent \textsc{Hautsch, N.} (2012), \emph{Econometrics of
Financial High-Frequency Data}. Springer.

\smallskip\noindent\textsc{Hidalgo, J., and Zaffaroni, P.} (2007), A
goodness-of-fit test for ARCH$(\infty)$ models. \emph{Journal of Econometrics%
}, 141, 835--875.

\smallskip \noindent \textsc{Kristensen, D., and Rahbek, A.} (2005),
Asymptotics of the QMLE for a class of ARCH(q) models. \emph{Econometric
Theory}, 21, 946--961.

\smallskip\noindent\textsc{Lee, S.-W., and Hansen, B.E.} (1994), Asymptotic
theory for the GARCH(1,1) quasi-maximum likelihood estimator. \emph{%
Econometric Theory}, 10, 29--52.

\smallskip\noindent\textsc{Pacurar, M.} (2008), Autoregressive conditional
duration models in finance: a survey of the theoretical and empirical
literature. \emph{Journal of Economic Surveys}, 22, 711--751.

\smallskip \noindent \textsc{Perera, I., Hidalgo, J., and Silvapulle, M.J.}
(2016), A goodness-of-fit test for a class of autoregressive conditional
duration models. \emph{Econometric Reviews}, 35, 1111--1141.

\smallskip\noindent\textsc{Perera, I., and Silvapulle, M.J.} (2023),
Bootstrap specification tests for dynamic conditional distribution models. 
\emph{Journal of Econometrics}, 235, 949--971.

\smallskip \noindent \textsc{Saulo, H., Suvra, P., Rubens, S., Vila, R., and
Dasilva, A.} (2025), Parametric quantile autoregressive conditional duration
models with application to intraday value-at-risk forecasting. \emph{Journal
of Forecasting}, 44, 589--605.

\appendix

\section*{Appendix: Proofs and Auxiliary Results}

\section{Proof of Theorem \protect\ref{thm-main}}

With $n(T)$ replaced by a deterministic sequence, $n=1,2,\ldots$, it follows
from Kristensen and Rahbek (2005) that $\hat{\theta}_{n}\rightarrow_{\text{%
a.s.}}\theta_{0}$. Next, $\hat{\theta}_{n(T)}\rightarrow_{\text{a.s.}%
}\theta_{0}$, holds by using Theorem 2.1 in Gut (2009) as $n(T)\rightarrow_{%
\text{a.s.}}\infty$ by Lemma \ref{lem-properties-of-NT}. The distributional
results (i)\ and (iii) are given in Cavaliere et al. (2024, 2025) while
(ii)\ is given in Cavaliere et al. (2026).\hfill$\square$

\begin{lemma}
\label{lem-properties-of-NT} If $\mathbb{E}[\log(\alpha_{0}\varepsilon
_{i})]<0$, then $n(T)\rightarrow_{\text{a.s.}}\infty$ as $T\rightarrow\infty$%
.
\end{lemma}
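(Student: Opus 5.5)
Since $n(T)$ is defined in \eqref{eq:defnT} as the number of partial sums $S_k=\sum_{i=1}^{k}x_i$ that do not exceed $T$, it is nondecreasing in $T$. It therefore has an almost sure limit $n(\infty)\in\{0,1,\ldots\}\cup\{\infty\}$, and the plan is to show that $n(\infty)<\infty$ can happen only on a null set.

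The key observation is that $\{n(\infty)\le K\}\subseteq\{S_{K+1}=\infty\}$. Indeed, if $S_{K+1}<\infty$, then taking $T\ge S_{K+1}$ gives $n(T)\ge K+1$. It therefore suffices to show that each duration $x_i$ is finite almost surely. Then every finite partial sum $S_k$ is finite almost surely, and
\[
\mathbb{P}\bigl(n(\infty)<\infty\bigr)\le\sum_{K\ge 0}\mathbb{P}(S_{K+1}=\infty)=0 .
\]

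Finiteness of $x_i$ follows from the recursion \eqref{eq-ACD_1}--\eqref{eq-ACD_2}. Iterating it gives
\[
x_i=\omega_0\varepsilon_i+\alpha_0\varepsilon_i x_{i-1},
\]
so $x_i$ is finite whenever $x_{i-1}$ and $\varepsilon_i$ are finite, and the exponential innovations are finite almost surely. For the stationary solution, the condition $\mathbb{E}[\log(\alpha_0\varepsilon_i)]<0$ from Assumption~\ref{ass:model} is exactly what is needed. By the strong law of large numbers,
\[
\prod_{j=0}^{m}\alpha_0\varepsilon_{i-j}
\]
decays exponentially almost surely. Hence the series representation
\[
x_i=\omega_0\sum_{m\ge0}\varepsilon_i\prod_{j=1}^{m}(\alpha_0\varepsilon_{i-j})
\]
converges almost surely, with the standard Kesten/Goldie argument as cited in Cavaliere et al.\ (2024). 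If instead the process is started from a fixed initial value $x_0<\infty$, finiteness is immediate by induction.

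The only step needing care is justifying that $x_i<\infty$ almost surely under the stationary solution. This is where the log-moment condition enters. I expect it to be routine once the strong law of large numbers is applied to $\log(\alpha_0\varepsilon_j)$, together with the root test: the $m$-th term behaves like $\exp\bigl(m(\mathbb{E}\log(\alpha_0\varepsilon)+o(1))\bigr)$. Note that the argument needs no moment condition on $x_i$ itself, so it covers all three regimes $\kappa>1$, $\kappa=1$ and $\kappa<1$ uniformly.
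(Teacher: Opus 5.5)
Your proof is correct and is essentially the paper's argument. The paper notes $\mathbb{P}(n(T)<M)=\mathbb{P}(\sum_{i=1}^{M}x_i>T)\to 0$ and then upgrades convergence in probability to almost sure convergence using monotonicity of $n(T)$ (Gut, 2013, Theorem 5.3.5). You instead work pathwise with the monotone limit $n(\infty)$, and your explicit check that each $x_i$ is almost surely finite is exactly the fact the paper's step $\mathbb{P}(\sum_{i=1}^{M}x_i>T)\to 0$ uses implicitly.
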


\noindent\textsc{Proof:} For any fixed $M>0$, $\mathbb{P}(n(T)<M)=\mathbb{P}%
(\sum_{i=1}^{M}x_{i}>T)\rightarrow0$ as $T\rightarrow\infty$. Thus $%
n(T)\rightarrow_{p}\infty$ as $T\rightarrow\infty$, and as $n(T)$ is
monotonically increasing, it holds that $n(T)\rightarrow_{\text{a.s}.}\infty$
using e.g. Gut (2013, Theorem 5.3.5).\hfill$\square$

\section{Bootstrap theory}

\label{app:-sec-Bootstrap-Theory}

\subsection{Bootstrap renewal results}

\label{app:-subsec-n-to-random-n}

For the asymptotic bootstrap theory below we study convergence of bootstrap
sequences, $Y_{n^{\ast}}^{\ast}$ say, $n^{\ast}=n(T)$;\ an example is when $%
Y_{n^{\ast}}^{\ast}$ is the bootstrap observed information $\mathcal{I}%
_{n^{\ast}}^{\ast}\left( \theta\right) $ in (\ref{eq:-bootstrap-info}). A
key challenge for the bootstrap asymptotic theory is that if $Y_{n}^{\ast}$
indexed by a deterministic sequence $n$ converges in standard bootstrap
modes, i.e. in probability or in distribution, in probability, to some $Y$,
that is, $Y_{n}^{\ast}\overset{p^{\ast}}{\rightarrow}_{p}Y$ or $Y_{n}^{\ast }%
\overset{d^{\ast}}{\rightarrow}_{p}Y$, $n\rightarrow\infty$, then it does
not necessarily follow that $Y_{n^{\ast}}^{\ast}\overset{p^{\ast}}{%
\rightarrow}_{p}Y$ or $Y_{n^{\ast}}^{\ast}\overset{d^{\ast}}{\rightarrow}%
_{p}Y$, even if $n^{\ast }=n(T)\rightarrow_{\text{a.s.}}\infty$. However, we
are able to state two results for the cases where $Y_{n}^{\ast}\rightarrow Y$
in probability or distribution, almost surely; see Lemma \ref%
{lem-replace-n-by-nT-new} and \ref{lem-new-n-and-nstar-conv} next. That is,
we consider here the stronger notions of convergence, $\overset{p^{\ast}}{%
\rightarrow}_{\text{a.s.}}$ and $\overset{d^{\ast}}{\rightarrow}_{\text{a.s.}%
}$ in the bootstrap world.

\begin{lemma}
\label{lem-replace-n-by-nT-new}With $n^{\ast}=n(T)$, assume that $n^{\ast
}\rightarrow\infty$ a.s. as $T\rightarrow\infty$, then:

(a) $Y_{n^{\ast}}^{\ast}\overset{p^{\ast}}{\rightarrow}_{\text{a.s.}}Y$ as $%
T\rightarrow\infty$, if $Y_{n}^{\ast}\overset{p^{\ast}}{\rightarrow }_{\text{%
a.s.}}Y$ as $n\rightarrow\infty.$

(b) $Y_{n^{\ast}}^{\ast}\overset{d^{\ast}}{\rightarrow}_{\text{a.s.}}Y$ as $%
T\rightarrow\infty$, if $Y_{n}^{\ast}\overset{d^{\ast}}{\rightarrow }_{\text{%
a.s.}}Y$ as $n\rightarrow\infty$.
\end{lemma}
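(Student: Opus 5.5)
The argument is pathwise: we fix an outcome of the original data and reduce to a deterministic subsequence statement. Let $\mathcal{X}$ denote the probability space on which the original data, and hence $n(T)$, are defined. The hypotheses mean there is an event $A\subset\mathcal{X}$ with $\mathbb{P}(A)=1$ such that for every $\omega\in A$ two things hold.

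First, $n(T)(\omega)\rightarrow\infty$ as $T\rightarrow\infty$. Second, conditionally on the data realization $\omega$, $Y_{n}^{\ast}\rightarrow Y$ as $n\rightarrow\infty$, in $\mathbb{P}^{\ast}$-probability for part (a) or in $\mathbb{P}^{\ast}$-distribution for part (b). Concretely, for (a) we have $\mathbb{P}^{\ast}(\Vert Y_{n}^{\ast}-Y\Vert>\eta)(\omega)\rightarrow0$ for every $\eta>0$. For (b) we have $\mathbb{E}^{\ast}[h(Y_{n}^{\ast})](\omega)\rightarrow\mathbb{E}[h(Y)]$ for every bounded Lipschitz $h$, or equivalently convergence in a bounded-Lipschitz metric $d_{BL}$.

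The intersection of these two almost-sure events is again of full probability. We fix $\omega$ in it, so that $n^{\ast}=n(T)(\omega)$ is a deterministic (integer-valued) function of $T$ tending to infinity.

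The key observation is that, for fixed $\omega$, the bootstrap law of $Y_{n^{\ast}}^{\ast}$ is simply the bootstrap law of $Y_{m}^{\ast}$ evaluated at $m=n(T)(\omega)$. This holds because in the fixed-count scheme the bootstrap draws are generated conditionally on the data, and $n(T)$ is a function of the data. So, for (a),
\begin{equation*}
\mathbb{P}^{\ast}(\Vert Y_{n^{\ast}}^{\ast}-Y\Vert>\eta)(\omega)=a_{n(T)(\omega)}(\omega),\qquad a_{m}(\omega):=\mathbb{P}^{\ast}(\Vert Y_{m}^{\ast}-Y\Vert>\eta)(\omega).
\end{equation*}
We have $a_{m}(\omega)\rightarrow0$ as $m\rightarrow\infty$ and $n(T)(\omega)\rightarrow\infty$. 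The composition of a convergent sequence with an index tending to infinity converges to the same limit. Hence the left-hand side tends to $0$ for all $\omega$ in a probability-one set, which is exactly $\overset{p^{\ast}}{\rightarrow}_{\text{a.s.}}$.

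Part (b) is identical with $a_{m}(\omega):=d_{BL}(\mathcal{L}^{\ast}(Y_{m}^{\ast})(\omega),\mathcal{L}(Y))$. One technical point matters if $Y$ is random, for instance if $Y$ depends on the data. In that case we use $\mathbb{E}^{\ast}h(Y_{m}^{\ast})(\omega)-\mathbb{E}^{\ast}h(Y)(\omega)$ for each $h$ in a countable convergence-determining class. This lets us intersect countably many null sets outside the quantifier over $h$.

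The only genuine obstacle is this measurability/null-set bookkeeping. We must ensure that the exceptional null set does not depend on $\eta$ or on $h$. For (a) we restrict to rational $\eta$ and use monotonicity in $\eta$. For (b) we use a countable convergence-determining family of bounded Lipschitz functions, which is available on $\mathbb{R}^{k}$.

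We also need that evaluating the bootstrap law at the data-dependent index $n(T)$ is legitimate. This requires that the map $(m,\omega)\mapsto\mathcal{L}^{\ast}(Y_{m}^{\ast})(\omega)$ is jointly measurable. That holds since $m$ ranges over a countable set.

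The argument also makes clear why only almost-sure bootstrap modes transfer. With mere in-probability convergence, the null set would depend on $m$, and the random index $n(T)$ could select the bad sets. This is the failure noted before the lemma.
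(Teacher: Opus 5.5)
Your proof is correct and follows essentially the paper's route: the paper also argues pathwise. It invokes Gut (2009, Theorem 2.1) for the fact that the a.s.\ convergent sequence $Z_n=\mathbb{P}^{\ast}(|Y_n^{\ast}-Y|>\delta)$, evaluated at the a.s.\ divergent random index $n^{\ast}=n(T)$, still converges a.s.; for (b) it applies the same step to the conditional characteristic functions $\phi_n^{\ast}(s)$ rather than to bounded-Lipschitz test functions. Your version is more explicit about the bookkeeping the paper leaves implicit: a single null set obtained from countably many $\eta$ and $h$, and measurability of $\omega\mapsto a_{n(T)(\omega)}(\omega)$.
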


\noindent\textsc{Proof:} Define the deterministically indexed $Z_{n}=$ $%
\mathbb{P}^{\ast}(|Y_{n}^{\ast}-Y|>\delta)$. Under (a) $Z_{n}\rightarrow _{%
\text{a.s}.}0$, and as $n^{\ast}\rightarrow_{\text{a.s.}}\infty$ by
assumption, it follows by Gut (2009, Theorem 2.1), that for any $\delta>0$, 
\begin{equation*}
Z_{n^{\ast}}=\mathbb{P}^{\ast}(|Y_{n^{\ast}}^{\ast}-Y|>\delta)\rightarrow _{%
\text{a.s.}}0\text{,}
\end{equation*}
i.e., $Y_{n^{\ast}}^{\ast}\overset{p^{\ast}}{\rightarrow}_{\text{a.s.}}Y$.
Define next the deterministically indexed (characteristic function), $\phi
_{n}^{\ast}\left( s\right) =\mathbb{E}^{\ast}[\exp(isY_{n}^{\ast})]$, $i\in%
\mathbb{C}$ and $s>0$. Under (b), $\phi_{n}^{\ast}\left( s\right)
\rightarrow_{\text{a.s.}}\phi(s)$ and hence, as $n^{\ast}\rightarrow _{\text{%
a.s.}}\infty$, $\phi_{n^{\ast}}^{\ast}\left( s\right) \rightarrow _{\text{%
a.s.}}\phi(s)$ as desired. \hfill$\square$

\bigskip

Another important lemma for our theory is results relating to the
relationship between the counts $n^{\ast }=n(T)$ and the length $T$ of the
observation period $[0,T]$. We formulate our results in terms of the ratio $%
R_{n(T)}$ as given by%
\begin{equation}
R_{n(T)}=\left( n(T)/g\left( T\right) \right) ^{1/2}  \label{app:-def-ratio}
\end{equation}%
where $g\left( T\right) $ is a deterministic and increasing sequence
satisfying $g\left( T\right) \rightarrow \infty $ as $T\rightarrow \infty $
(see, e.g., eq. (\ref{eq:gT})).

\begin{lemma}
\label{lem-new-n-and-nstar-conv}Assume that for some bootstrap sequence $%
\left\{ Y_{n}^{\ast}\right\} _{n=1,2,...}$ indexed by the deterministic $n$, 
\begin{equation}
Y_{n}^{\ast}\overset{d^{\ast}}{\rightarrow}_{\text{a.s.}}Y,\text{ as }%
n\rightarrow\infty\text{,}  \label{Y-Assump-det}
\end{equation}
with $Y$ a scalar random variable with continuous distribution function $F$.
Assume $n^{\ast}=n(T)\rightarrow_{\text{a.s.}}\infty$ as $T\rightarrow\infty$%
. Then, as $T\rightarrow\infty$, with $n^{\ast}=n(T)$ and $R_{n(T)}$ defined
in (\ref{app:-def-ratio}),

(a) $Y_{n^{\ast}}^{\ast}/R_{n^{\ast}}\overset{d^{\ast}}{\rightarrow }_{\text{%
a.s.}}Y/R$, if $R_{n^{\ast}}\rightarrow_{\text{a.s}}R>0$.

(b) $Y_{n^{\ast}}^{\ast}/R_{n^{\ast}}\overset{d^{\ast}}{\rightarrow}_{p}Y/R$%
, if $R_{n^{\ast}}\rightarrow_{p}R>0$.

(c) $Y_{n^{\ast}}^{\ast}/R_{n^{\ast}}\overset{d^{\ast}}{\rightarrow}_{d}Y/R$ 
$|$ $R$, if $R_{n^{\ast}}\rightarrow_{d}R>0$.

\noindent where $Y/R$ $|$ $R$ has the (random) cdf given by $F(uR)$, $u\in%
\mathbb{R}$.
\end{lemma}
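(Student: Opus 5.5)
The plan is to reduce everything to statements about conditional distribution functions evaluated at a data-dependent scaling, and then to apply Lemma~\ref{lem-replace-n-by-nT-new}(b) together with P\'olya-type uniformity. Write $F_{n}^{\ast}(u)=\mathbb{P}^{\ast}(Y_{n}^{\ast}\leq u)$. Assumption (\ref{Y-Assump-det}) gives $F_{n}^{\ast}(u)\rightarrow F(u)$ for every $u$, almost surely. Because $F$ is continuous, P\'olya's theorem upgrades this to $\sup_{u}|F_{n}^{\ast}(u)-F(u)|\rightarrow_{\text{a.s.}}0$ as $n\rightarrow\infty$. One needs a single null set that works for all $u$; this is obtained by taking a countable dense set of $u$ and using monotonicity. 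Since $n^{\ast}=n(T)\rightarrow_{\text{a.s.}}\infty$, the argument of Lemma~\ref{lem-replace-n-by-nT-new} (Gut, 2009, Theorem 2.1, applied to the deterministically indexed sequence $D_{n}=\sup_{u}|F_{n}^{\ast}(u)-F(u)|$) yields $D_{n^{\ast}}\rightarrow_{\text{a.s.}}0$.

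Next, $R_{n^{\ast}}$ is a function of the original data only, so it is fixed under $\mathbb{P}^{\ast}$. Since $R_{n^{\ast}}>0$ eventually, we have
\begin{equation*}
\mathbb{P}^{\ast}(Y_{n^{\ast}}^{\ast}/R_{n^{\ast}}\leq u)=F_{n^{\ast}}^{\ast}(uR_{n^{\ast}}),
\end{equation*}
and hence
\begin{equation*}
\bigl|\mathbb{P}^{\ast}(Y_{n^{\ast}}^{\ast}/R_{n^{\ast}}\leq u)-F(uR_{n^{\ast}})\bigr|\leq D_{n^{\ast}}\rightarrow_{\text{a.s.}}0 .
\end{equation*}
This is the key decomposition. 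The uniform bound is exactly what is needed, because the evaluation point $uR_{n^{\ast}}$ moves with $T$. The three cases then differ only in the mode of convergence of $F(uR_{n^{\ast}})$:

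(a) If $R_{n^{\ast}}\rightarrow_{\text{a.s.}}R>0$, continuity of $F$ gives $F(uR_{n^{\ast}})\rightarrow_{\text{a.s.}}F(uR)$, which is the cdf of $Y/R$. Hence the conditional cdf converges pointwise a.s., i.e.\ $\overset{d^{\ast}}{\rightarrow}_{\text{a.s.}}$.

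(b) If $R_{n^{\ast}}\rightarrow_{p}R$, the continuous mapping theorem gives $F(uR_{n^{\ast}})\rightarrow_{p}F(uR)$ for each $u$. Combined with the a.s.\ vanishing of the error term, the conditional cdf converges in probability at each $u$. Since the limit is continuous, this is equivalent to $\overset{d^{\ast}}{\rightarrow}_{p}$ (for instance via the subsequence characterisation: every subsequence has a further subsequence along which convergence holds a.s.\ on a countable dense set, and monotonicity extends it).

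(c) If $R_{n^{\ast}}\rightarrow_{d}R$, the random cdf $G_{T}(\cdot)=F_{n^{\ast}}^{\ast}(\cdot R_{n^{\ast}})$ differs from $H_{T}(\cdot)=F(\cdot R_{n^{\ast}})$ by at most $D_{n^{\ast}}\rightarrow_{\text{a.s.}}0$ uniformly. It therefore suffices to show that $H_{T}$, viewed as a random probability measure, converges weakly to the random measure with cdf $F(\cdot R)$. The map $r\mapsto F(\cdot\, r)$, from $(0,\infty)$ into probability measures with the weak topology, is continuous. This holds because $F$ is continuous: $F(u r_{k})\rightarrow F(ur)$ for all $u$ when $r_{k}\rightarrow r>0$. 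The continuous mapping theorem applied to $R_{n^{\ast}}\rightarrow_{d}R$ then gives $H_{T}\rightarrow_{w}$ the law of $Y/R\,|\,R$, in the sense of Cavaliere and Georgiev (2020). Finally, a Slutsky-type argument shows that uniformly asymptotically equivalent random cdfs share the same weak limit. Concretely, test against bounded Lipschitz functionals of the random measure, for example through the L\'evy metric, which is bounded by the sup distance $D_{n^{\ast}}$.

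The main obstacle is case (c): one has to make precise the topology for convergence of random measures and verify the continuity of $r\mapsto F(\cdot\, r)$ together with the Slutsky step. I would handle it by working with the L\'evy metric on cdfs, which makes the space Polish. In that metric, $d_{L}(G_{T},H_{T})\leq D_{n^{\ast}}\rightarrow 0$ a.s., and $H_{T}=\Phi(R_{n^{\ast}})$ for a continuous map $\Phi$, so standard weak convergence results in metric spaces conclude the argument.
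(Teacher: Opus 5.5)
Your proposal is correct and takes essentially the same route as the paper. You upgrade $F_n^{\ast}\to F$ to uniform convergence (P\'olya/Glivenko--Cantelli), transfer it to $n^{\ast}=n(T)$ through Gut's theorem as in Lemma~\ref{lem-replace-n-by-nT-new}, bound $|F^{\ast}_{n^{\ast}}(uR_{n^{\ast}})-F(uR_{n^{\ast}})|$ by $\sup_u|F^{\ast}_{n^{\ast}}(u)-F(u)|$, and finish (a)--(c) by continuity and continuous mapping applied to $F(uR_{n^{\ast}})$. Your extra care over a single null set, over the mode of convergence in (b), and over the L\'evy-metric/Slutsky step in (c) makes rigorous what the paper states more tersely.
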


\noindent \textsc{Proof.} By (\ref{Y-Assump-det}), $F_{n}^{\ast }\left(
u\right) =\mathbb{P}^{\ast }(Y_{n}^{\ast }\leq u)\rightarrow _{\text{a.s.}%
}F(u)$, and hence, by Glivenko--Cantelli, $\sup_{u}\left\vert F_{n}^{\ast
}\left( u\right) -F(u)\right\vert \rightarrow _{\text{a.s.}}0$. Thus, as $%
n^{\ast }\rightarrow \infty $ a.s., by Lemma \ref{lem-replace-n-by-nT-new} 
\begin{equation*}
\sup_{u}\left\vert F_{n^{\ast }}^{\ast }\left( u\right) -F(u)\right\vert
\rightarrow _{\text{a.s.}}0.
\end{equation*}%
For $T$ large, we have%
\begin{equation*}
\mathbb{P}^{\ast }(Y_{n^{\ast }}^{\ast }/R_{n^{\ast }}\leq u)=F_{n^{\ast
}}^{\ast }(uR_{n^{\ast }})=F(uR_{n^{\ast }})+o_{\text{a.s}.}(1)
\end{equation*}%
using $|F_{n^{\ast }}^{\ast }(uR_{n^{\ast }})-F(uR_{n^{\ast }})|\leq
\sup_{u\in \mathbb{R}}|F_{n^{\ast }}^{\ast }(u)-F(u)|=o_{\text{a.s}.}(1)$.
The results in (a) and (b) follow directly, while for (c) it holds by the
continuous mapping theorem, $F(uR_{n^{\ast }})\rightarrow _{d}F(uR)$ as
desired.\hfill $\square $

\subsection{Proof of Theorem \protect\ref{thm-bootstrap-main}}

Recall the ACD process $\{x_i\}_{i\geq 1}$ in (\ref{eq-ACD_1})--(\ref%
{eq-ACD_2}), the log-likelihood function $\mathcal{L}_{n(T)}(\theta)$ in (%
\ref{eq:-log-Lik}), and the bootstrap analogs $\{x_i^*\}_{i\geq 1}$ and $%
\mathcal{L}_{n^*}^*(\theta)$ defined in (\ref{eq-RCD}) and (\ref%
{eq:boot-likelihood}). The proof is given for the non-parametric residual
bootstrap. The proof for the parametric exponential bootstrap is based on
the same arguments and is straightforward; we therefore omit it. In line
with the bootstrap renewal theory, let $\mathcal{L}_{n}(\theta)$ and $%
\mathcal{L}_{n}^{*}(\theta)$ denote the original and bootstrap
log-likelihood functions indexed by a deterministic $n $. In particular, $%
\hat{\theta}_{n}=\arg\max_{\theta\in\Theta}\mathcal{L}_{n}(\theta)$ and $%
\hat{\theta}_{n}^{*}=\arg\max_{\theta\in\Theta}\mathcal{L}_{n}^{*}(\theta)$.

Corresponding to cases (i)-(iii), let $g\left( T\right) =g_{\kappa }(T)$
defined in (\ref{eq:gT}). With $R_{n^{\ast }}=R_{n(T)}$ defined in (\ref%
{app:-def-ratio}), note that 
\begin{equation*}
\sqrt{g(T)}(\hat{\theta}_{n^{\ast }}^{\ast }-\hat{\theta}_{n(T)})=\sqrt{%
n^{\ast }}(\hat{\theta}_{n^{\ast }}^{\ast }-\hat{\theta}_{n(T)})/R_{n^{\ast
}}\text{.}
\end{equation*}%
Note that for case (i), $R_{n^{\ast }}\rightarrow _{\text{a.s.}}1/\sqrt{\mu
_{0}}$, for (ii), $R_{n^{\ast }}\rightarrow _{p}1/\sqrt{c_{0}}$ and for
(iii), $R_{n^{\ast }}\rightarrow _{d}\sqrt{\lambda _{\kappa }}$. Next, by
Lemma \ref{lem-asym-normality-boot-of-estimator}, for any $\kappa $,%
\begin{equation}
\sqrt{n^{\ast }}(\hat{\theta}_{n^{\ast }}^{\ast }-\hat{\theta}_{n(T)})%
\overset{d^{\ast }}{\rightarrow }_{\text{a.s.}}\mathcal{N}(0,\Omega ^{-1})
\label{app:-next}
\end{equation}%
and hence the desired result follows by Lemma \ref{lem-new-n-and-nstar-conv}%
. The result for the $t$ ratio follows by convergence a.s. of the Hessian as
in Lemma \ref{lem-A7-CPR} and then using Lemma \ref{lem-replace-n-by-nT-new}%
(a).\hfill $\square $

\subsection{Proof of Corollary \protect\ref{cor:boot-pvalue}}

Let $V_{\alpha }=\iota _{2}^{\prime }\Omega ^{-1}\iota _{2}$. By Theorem \ref%
{thm-bootstrap-main}, the conditional cdf of 
\begin{equation*}
\mathcal{T}_{n(T)}^{\ast }=g_{\kappa }(T)^{1/2}(\hat{\alpha}_{n(T)}^{\ast }-%
\hat{\alpha}_{n(T)})
\end{equation*}%
converges to the cdf $F_{A_{\kappa }}$ of $A_{\kappa }^{1/2}Z_{\alpha }$
conditionally on $A_{\kappa }$, where $Z_{\alpha }\sim N(0,V_{\alpha })$ and 
\begin{equation*}
A_{\kappa }=\mu _{0}\quad (\kappa >1),\qquad A_{\kappa }=c_{0}\quad (\kappa
=1),\qquad A_{\kappa }=\lambda _{\kappa }^{-1}\quad (0<\kappa <1).
\end{equation*}%
In the first two cases $A_{\kappa }$ is deterministic. In the infinite-mean
case it is random, but the original statistic and the limiting bootstrap
measure have the same conditional Gaussian law given $\lambda _{\kappa }$.
Proceeding as in the proof of Lemma \ref{lem-new-n-and-nstar-conv}, it
follows for $\kappa \geq 1$ by 
\begin{equation*}
\left\vert \hat{F}_{n(T)}^{\ast }(\mathcal{T}_{n(T)})-F_{A_{\kappa }}(%
\mathcal{T}_{n(T)})\right\vert \leq \sup_{u\in \mathbb{R}}\left\vert \hat{F}%
_{n(T)}^{\ast }(u)-F_{A_{\kappa }}(u)\right\vert \rightarrow _{\text{a.s.}}0,
\end{equation*}%
that $\hat{F}_{n(T)}^{\ast }(\mathcal{T}_{n(T)})\rightarrow F_{A_{\kappa
}}(A_{\kappa }^{1/2}Z_{\alpha })$ in distribution. For $\kappa <1$, $%
F_{A_{\kappa }}$ is the random cdf of $A_{\kappa }^{1/2}Z_{\alpha }$
conditionally on $A_{\kappa }$; the proof follows by replacing $\overset{%
d^{\ast }}{\rightarrow }_{d}$ with $\overset{d^{\ast }}{\rightarrow }_{\text{%
a.s.}}$ using Skorokhod coupling in an extended probability space; see,
e.g., Cavaliere and Georgiev (2020, Proof of Theorem 4.1).

Finally, conditionally on $A_{\kappa }$ when $\kappa <1$, or unconditionally
when $\kappa \geq 1$, the probability integral transform gives $F_{A_{\kappa
}}(A_{\kappa }^{1/2}Z_{\alpha })\sim U[0,1]$ and the unconditional limit is
uniform. This proves the corollary.\hfill $\square $

\begin{lemma}
\label{lem-asym-normality-boot-of-estimator}If $\mathbb{E}[\log(\alpha
_{0}\varepsilon_{i})]<0,$ it holds with $n^{\ast}=n(T)$, as $T\rightarrow
\infty$ that $\hat{\theta}_{n^{\ast}}^{\ast}-\hat{\theta}_{n(T)}\overset{%
p^{\ast}}{\rightarrow}_{\text{a.s.}}0$. Moreover, if $\theta_{0}$ is in the
interior of $\Theta$,%
\begin{equation*}
\sqrt{n^{\ast}}(\hat{\theta}_{n^{\ast}}^{\ast}-\hat{\theta}_{n(T)})\overset{%
d^{\ast}}{\rightarrow}_{\text{a.s.}}\mathcal{N}(0,\Omega^{-1}),
\end{equation*}
where $\Omega=\mathbb{E}[v_{i}v_{i}^{\prime}/\psi_{i}^{2}(\theta_{0})]$ and $%
v_{i}=(1,x_{i-1})^{\prime}$.
\end{lemma}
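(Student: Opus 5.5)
The plan is to prove the result first for a deterministic bootstrap sample size $n$ and then transfer it to $n^{\ast}=n(T)$ through Lemma \ref{lem-replace-n-by-nT-new}. The transfer uses $n(T)\rightarrow_{\text{a.s.}}\infty$ from Lemma \ref{lem-properties-of-NT}. So the core task is to show, for deterministic $n\rightarrow\infty$, that $\hat{\theta}_{n}^{\ast}-\hat{\theta}_{n(T)}\overset{p^{\ast}}{\rightarrow}_{\text{a.s.}}0$ and $\sqrt{n}(\hat{\theta}_{n}^{\ast}-\hat{\theta}_{n(T)})\overset{d^{\ast}}{\rightarrow}_{\text{a.s.}}\mathcal{N}(0,\Omega^{-1})$.

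One subtlety needs care: the bootstrap parameter $\hat{\theta}_{n(T)}$ and the residual distribution both depend on $T$. To handle this, I will work along the data sequence on the almost-sure event where $\hat{\theta}_{n(T)}\rightarrow\theta_{0}$ (Theorem \ref{thm-main}). On the same event I need the empirical distribution of the scaled residuals $\hat{\varepsilon}_{i}^{s}$ to converge weakly to $\mathrm{Exp}(1)$, with $n(T)^{-1}\sum_{i}(\hat{\varepsilon}_{i}^{s})^{2}\rightarrow 2$ and $n(T)^{-1}\sum_{i}\log\hat{\varepsilon}_{i}^{s}\rightarrow\mathbb{E}\log\varepsilon_{i}$. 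These follow from the strong LLN for the ergodic $\{\varepsilon_{i}\}$ evaluated at $n(T)$ (Gut, 2009, Theorem 2.1). The estimation error is controlled by the uniform bound $|\hat{\varepsilon}_{i}-\varepsilon_{i}|\leq\varepsilon_{i}\,\|\hat{\theta}-\theta_{0}\|\,C(1+x_{i-1})/(\omega_{\min}+\alpha_{\min}x_{i-1})$, whose bracketed ratio is bounded. Crucially, this bound needs no moments of $x_{i}$, which is exactly why the argument works for all $\kappa$.

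Given these ingredients, the bootstrap process $x_{i}^{\ast}=(\hat{\omega}+\hat{\alpha}x_{i-1}^{\ast})\varepsilon_{i}^{\ast}$ is, conditionally, a stochastic recurrence equation. Its Lyapunov exponent is $\mathbb{E}^{\ast}\log(\hat{\alpha}\varepsilon_{i}^{\ast})\rightarrow\mathbb{E}\log(\alpha_{0}\varepsilon_{i})<0$, so it is geometrically ergodic for large $T$. I will then follow the Kristensen and Rahbek (2005) ARCH argument, verified conditionally.
\begin{itemize}
\item \emph{Score.} The score at $\hat{\theta}_{n(T)}$ is $n^{-1/2}\sum_{i}v_{i}^{\ast}\psi_{i}^{\ast-2}(x_{i}^{\ast}-\psi_{i}^{\ast})=n^{-1/2}\sum_{i}(v_{i}^{\ast}/\psi_{i}^{\ast})(\varepsilon_{i}^{\ast}-1)$. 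This is a martingale difference array with bounded weights $v_{i}^{\ast}/\psi_{i}^{\ast}$. A conditional martingale CLT applies: Lindeberg holds because the weights are bounded and $\varepsilon_{i}^{\ast}$ has a uniformly integrable second moment. The variance converges to $\Omega$ because $\mathbb{E}^{\ast}(\varepsilon^{\ast}-1)^{2}\rightarrow1$ and the bootstrap stationary law of $v^{\ast}v^{\ast\prime}/\psi^{\ast2}$, a bounded continuous function of $x_{i-1}^{\ast}$, converges to its $\theta_{0}$ counterpart.
\item \emph{Hessian.} It is uniformly convergent in a neighbourhood of $\theta_{0}$, again because all terms are bounded functions of $x_{i-1}^{\ast}$ times $\varepsilon_{i}^{\ast}$ or $1$.
\item \emph{Consistency.} This comes from the uniform LLN for $\ell_{i}^{\ast}(\theta)$. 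Only $\log\psi$ is unbounded, and it grows only logarithmically in $x_{i-1}^{\ast}$, so it is integrable under the bootstrap stationary law.
\end{itemize}
A Taylor expansion then gives $\sqrt{n}(\hat{\theta}_{n}^{\ast}-\hat{\theta}_{n(T)})=\Omega^{-1}S_{n}^{\ast}+o_{p^{\ast}}(1)$.

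The main obstacle is the conditional LLNs with a drifting data-generating process: the bootstrap recursion changes with $T$, and in the regime $\kappa\le 1$ neither the original nor the bootstrap durations have a finite mean. I would handle this by a coupling. Realise $\varepsilon_{i}^{\ast}$ as the quantile transform of $\hat{F}_{T}$ applied to common uniforms, and compare $x_{i}^{\ast}$ with the $\theta_{0}$-driven recursion through contraction of the iterated random maps. All summands are bounded, or log-growth, functions of $x_{i-1}^{\ast}$ that are Lipschitz on the scale $\log(1+x)$, so only convergence in law of $\log(1+x_{i}^{\ast})$ is required, and no moments of $x$ are needed. Finally, the "almost sure" part of $\overset{d^{\ast}}{\rightarrow}_{\text{a.s.}}$ is obtained by the Lipschitz dependence of the coupled sums on the data quantities that converge almost surely.
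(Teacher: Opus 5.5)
Your architecture is the paper's. You establish residual ECDF and moment convergence along $n(T)$ (Lemma~\ref{lem-A11-CPR}) and consistency from a uniform LLN for the bootstrap likelihood (Lemma~\ref{lem-A2-CPR}). You use a conditional martingale CLT for the score (Lemma~\ref{lem-A5-CPR}) and Hessian control (Lemmas~\ref{lem-A7-CPR}--\ref{lem-A9-CPR}); your local uniform Hessian convergence replaces the third-derivative bound, which is equivalent. The technical core in both is a bootstrap LLN for functions of $x_{i-1}^{\ast}$.

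\textbf{Where you differ: the bootstrap LLN.} The paper's Lemma~\ref{recursive-bootstrap-LLN} stays inside the bootstrap world:
\begin{itemize}
\item it replaces $\hat\theta_{n(T)}$ by $\theta_0$, and the bootstrap path by a conditionally stationary $x_{i,\infty}^{\ast}$, using $s$-th moments with $\mathbb{E}^{\ast}[(\hat\alpha_{n(T)}\varepsilon^{\ast})^{s}]\le\rho<1$;
\item it gets the LLN from Chebyshev, with covariances of order $\rho^{|i-j|/4}$;
\item it shows $\mathbb{E}^{\ast}f(x_{i,\infty}^{\ast})\to\mathbb{E}f(x_i)$ by truncating at a fixed lag and using weak convergence of the residual ECDF.
\end{itemize}
Your quantile coupling instead imports the LLN from the ergodic theorem for one fixed ACD path $x_i'$, driven by $\theta_0$ and $F^{-1}(U_i)$. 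Only a coupling error then has to be controlled. This is shorter and naturally gives a limit that is joint in $(n,T)$; the paper's route avoids building an auxiliary probability space. Your Lindeberg step, via uniform integrability of $(\varepsilon^{\ast})^2$, needs only $\mathbb{E}\varepsilon_i^2<\infty$, whereas the paper uses fourth moments.

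\textbf{Three points to state correctly for your route to work.}

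First, convergence in law of $\log(1+x_i^{\ast})$ does not give an LLN for averages. What the coupling must deliver is $\sup_i\mathbb{E}|\log(1+x_i^{\ast})-\log(1+x_i')|\to 0$ as $T\to\infty$. The LLN itself then comes from ergodicity of $x'$.

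Second, this bound cannot come from the Lyapunov exponent alone, since you need rates uniform in $T$. Nor can it come from contraction in the $\log(1+x)$ metric: there the map $x\mapsto(\omega+\alpha x)\varepsilon$ has local Lipschitz constant tending to one as $x\to\infty$. Work in levels with the paper's fractional moment instead. With $d_i=x_i^{\ast}-x_i'$ and $s\in(0,1)$ such that $\mathbb{E}^{\ast}[(\hat\alpha\varepsilon^{\ast})^{s}]\le\rho<1$,
\[
\mathbb{E}|d_i|^{s}\le\mathbb{E}|\hat\omega\varepsilon^{\ast}-\omega_0\varepsilon'|^{s}+\mathbb{E}|\hat\alpha\varepsilon^{\ast}-\alpha_0\varepsilon'|^{s}\,\mathbb{E}(x_{i-1}')^{s}+\rho\,\mathbb{E}|d_{i-1}|^{s},
\]
and then use $|\log(1+a)-\log(1+b)|\le|a-b|^{s}/s$. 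So ``no moments of $x$'' should read ``no first moment'': moments of order $s<\kappa$ are used.

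Third, your deterministic-$n$ statement still carries $\hat\theta_{n(T)}$ and the residual law indexed by $T$. Lemma~\ref{lem-replace-n-by-nT-new} is a single-index result and does not apply verbatim. For fixed $T$ and $n\to\infty$, the limit is Gaussian with the bootstrap world's own information matrix and innovation variance, not $\mathcal{N}(0,\Omega^{-1})$. Either use the joint-in-$(n,T)$ bound that your coupling provides and evaluate it at $n=n(T)$ on the almost-sure event, or index everything by $n$ as the paper does.
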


\noindent \textsc{Proof of Lemma \ref{lem-asym-normality-boot-of-estimator}}%
. Consider first consistency, $\hat{\theta}_{n^{\ast }}^{\ast }-\hat{\theta}%
_{n(T)}\overset{p^{\ast }}{\rightarrow }_{\text{a.s.}}0$. This follows by
Lemma \ref{lem-replace-n-by-nT-new} using $n^{\ast }=n(T)\rightarrow _{\text{%
a.s.}}\infty $ by Lemma \ref{lem-properties-of-NT}, provided $\hat{\theta}%
_{n}^{\ast }-\hat{\theta}_{n}\overset{p^{\ast }}{\rightarrow }_{\text{a.s.}%
}0 $ as $n\rightarrow \infty $. We have for any $\delta >0$, and some $\eta
>0$, 
\begin{align*}
\mathbb{P}^{\ast }\left( \left\Vert \hat{\theta}_{n}^{\ast }-\hat{\theta}%
_{n}\right\Vert >\delta \right) & \leq \mathbb{P}^{\ast }\left( \mathcal{L}%
_{n}(\hat{\theta}_{n})-\mathcal{L}_{n}(\hat{\theta}_{n}^{\ast })>\eta
n\right) \\
& =\mathbb{P}^{\ast }\left( \mathcal{L}_{n}(\hat{\theta}_{n})-\mathcal{L}%
_{n}^{\ast }(\hat{\theta}_{n})+\mathcal{L}_{n}^{\ast }(\hat{\theta}_{n})-%
\mathcal{L}_{n}(\hat{\theta}_{n}^{\ast })>\eta n\right) \\
& \leq \mathbb{P}^{\ast }\left( \mathcal{L}_{n}(\hat{\theta}_{n})-\mathcal{L}%
_{n}^{\ast }(\hat{\theta}_{n})+\mathcal{L}_{n}^{\ast }(\hat{\theta}%
_{n}^{\ast })-\mathcal{L}_{n}(\hat{\theta}_{n}^{\ast })>\eta n\right) \\
& \leq \mathbb{P}^{\ast }(2n^{-1}\sup_{\theta \in \Theta }|\mathcal{L}%
_{n}^{\ast }(\theta )-\mathcal{L}_{n}(\theta )|>\eta )\rightarrow _{\text{%
a.s.}}0
\end{align*}%
by Lemma \ref{lem-A2-CPR}. Using $n^{\ast }=n(T)\rightarrow _{\text{a.s.}%
}\infty $ by Lemma \ref{lem-properties-of-NT}, asymptotic normality holds by
Lemma \ref{lem-replace-n-by-nT-new} provided $\sqrt{n}(\hat{\theta}%
_{n}^{\ast }-\hat{\theta}_{n})\overset{d^{\ast }}{\rightarrow }_{\text{a.s.}}%
\mathcal{N}\left( 0,\Omega ^{-1}\right) $ as $n\rightarrow \infty $. The
latter holds by establishing, 
\begin{align*}
\text{(i)}& \text{:}\text{ }\left. \frac{1}{\sqrt{n}}\frac{\partial \mathcal{%
L}_{n}^{\ast }(\theta )}{\partial \theta }\right\vert _{\theta =\hat{\theta}%
_{n}}\overset{d^{\ast }}{\rightarrow }_{\text{a.s.}}\mathcal{N}\left(
0,\Omega \right) \\
\text{(ii)}& \text{:}\text{ }\left. -\frac{1}{n}\frac{\partial ^{2}\mathcal{L%
}_{n}^{\ast }(\theta )}{\partial \theta \partial \theta ^{\prime }}%
\right\vert _{\theta =\hat{\theta}_{n}}\overset{p^{\ast }}{\rightarrow }_{%
\text{a.s.}}\Omega \\
\text{(iii)}& \text{:}\text{ }\max_{l,j,k=1,2}\sup_{\theta \in \Theta
}\left\vert \frac{1}{n}\frac{\partial ^{3}\mathcal{L}_{n}^{\ast }(\theta )}{%
\partial \theta _{l}\partial \theta _{j}\partial \theta _{k}}\right\vert
\leq C_{n}^{\ast },\text{ \ \ }C_{n}^{\ast }\overset{p^{\ast }}{\rightarrow }%
_{\text{a.s.}}C<\infty ,
\end{align*}%
which hold by Lemma \ref{lem-A5-CPR}, \ref{lem-A7-CPR} and \ref{lem-A9-CPR},
respectively. \hfill $\square $

\subsection{Bootstrap likelihood lemmas}

Recall that $\mathcal{L}_{n}^{\ast }\left( \theta \right) $ ($\mathcal{L}%
_{n}\left( \theta \right) $) denotes the log-likelihood function $\mathcal{L}%
_{n^{\ast }}^{\ast }\left( \theta \right) $ ($\mathcal{L}_{n(T)}\left(
\theta \right) $), indexed by $n$ as opposed to $n^{\ast }=n(T)$. Lemma \ref%
{lem-A2-CPR} considers uniform convergence of $\mathcal{L}_{n}^{\ast }\left(
\theta \right) $, while Lemma \ref{lem-A5-CPR}, \ref{lem-A7-CPR} and \ref%
{lem-A9-CPR} provide results for the derivatives (score, information and
third-order derivative).

\begin{lemma}
\label{lem-A2-CPR}Let $G_{n}^{\ast}(\theta)=\frac{1}{n}\left[ \mathcal{L}%
_{n}^{\ast}(\theta)-\mathcal{L}_{n}(\theta)\right] $. Under the assumptions
of Lemma \ref{lem-asym-normality-boot-of-estimator}, $\sup_{\theta\in\Theta
}|G_{n}^{\ast}(\theta)|$ $\overset{p^{\ast}}{\rightarrow}_{\text{a.s.}}0$ as 
$n\rightarrow\infty.$
\end{lemma}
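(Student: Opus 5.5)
I would split $G_n^{\ast}(\theta)=\frac{1}{n}\sum_{i=1}^{n}[\ell_i^{\ast}(\theta)-\ell_i(\theta)]$ into the two pieces that make up each log-likelihood, namely the $\log\psi$ terms and the ratio terms $x/\psi$. I would prove that each piece converges uniformly to the \emph{same} deterministic limit in both the bootstrap and the original world. The original world is standard: by Kristensen and Rahbek (2005), under $\mathbb{E}[\log(\alpha_0\varepsilon_i)]<0$ we have $\sup_{\theta}|n^{-1}\mathcal{L}_n(\theta)-L(\theta)|\to_{a.s.}0$ with $L(\theta)=\mathbb{E}[\ell_i(\theta)]$. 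This holds even when $\mathbb{E}[x_i]=\infty$, because $\log\psi_i(\theta)\le \log\omega+\log(1+\alpha x_{i-1}/\omega)$ has finite expectation. The ratio $x_i/\psi_i(\theta)\le x_i/\omega$ is not integrable when $\kappa\le 1$. It is, however, bounded by $\psi_i(\theta_0)\varepsilon_i/\psi_i(\theta)\le \varepsilon_i\max(\omega_0/\omega,\alpha_0/\alpha)$, which is integrable uniformly over the compact $\Theta\subset(0,\infty)^2$. A uniform strong law for stationary ergodic sequences then gives the original-world result. It therefore suffices to show $\sup_\theta|n^{-1}\mathcal{L}_n^{\ast}(\theta)-L(\theta)|\overset{p^{\ast}}{\rightarrow}_{a.s.}0$.

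For the bootstrap world I would write $x_i^{\ast}=\psi_i^{\ast}(\hat\theta_n)\varepsilon_i^{\ast}$ and $\ell_i^{\ast}(\theta)=-\log\psi_i^{\ast}(\theta)-\varepsilon_i^{\ast}\psi_i^{\ast}(\hat\theta_n)/\psi_i^{\ast}(\theta)$. Conditionally on the data, $\{x_i^{\ast}\}$ is a Markov chain driven by the i.i.d. draws $\varepsilon_i^{\ast}$ from the empirical distribution $\hat F_n$ of the scaled residuals. The key steps, in order, are:
\begin{enumerate}
\item Show $\hat F_n\to F_\varepsilon$ weakly, almost surely, together with $n^{-1}\sum\hat\varepsilon_i^{s}\to 1$ and $n^{-1}\sum\log\hat\varepsilon_i^s\to\mathbb{E}\log\varepsilon_i$ a.s. This uses $\hat\theta_n\to_{a.s.}\theta_0$ and the bound $|\hat\varepsilon_i-\varepsilon_i|\le\varepsilon_i\,C\|\hat\theta_n-\theta_0\|$.
\item Deduce that, eventually a.s., the bootstrap chain satisfies the contraction condition $\mathbb{E}^{\ast}\log(\hat\alpha_n\varepsilon_i^{\ast})<0$. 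The same then holds for a small fractional moment, $\mathbb{E}^{\ast}(\hat\alpha_n\varepsilon_i^{\ast})^{s}<1$ for some small $s>0$; to get this, pass from the log-moment to a fractional moment using the continuity of $s\mapsto\mathbb{E}^{\ast}(\cdot)^s$ near $0$.
\item Bound the $\varepsilon^{\ast}$-ratio terms exactly as in the original world, by $\varepsilon_i^{\ast}\max(\hat\omega_n/\omega,\hat\alpha_n/\alpha)$. Their averages are then controlled by $n^{-1}\sum\varepsilon_i^{\ast}$, whose conditional mean is $1$ and whose conditional variance vanishes.
\end{enumerate}

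The main obstacle is a uniform law of large numbers for the triangular-array Markov chain $\{x_i^{\ast}\}$, whose law changes with $n$, holding almost surely in the data. I would first couple it with its stationary version: the chain forgets $x_0$ geometrically at rate $\hat\alpha_n^{s}$ in $L^{s}$. I would then use the $s$-moment contraction from step 2 to obtain geometric ergodicity with constants uniform in $n$. This yields $\mathbb{E}^{\ast}|n^{-1}\sum f(x_{i-1}^{\ast},\varepsilon_i^{\ast})-\mathbb{E}_n^{\ast}f|^{2}\to 0$ for the bounded Lipschitz (in $\theta$) pieces $f=\log(\omega+\alpha x)-\log(1+x)$ and $\psi(\hat\theta_n)/\psi(\theta)$. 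The unbounded part $\log(1+x_{i-1}^{\ast})$ is identical in $\mathcal{L}^{\ast}$ and $L$ up to its mean, so I would handle it separately via the $s$-moment bound. Next, I would show that the stationary bootstrap expectations converge to those under $\theta_0,F_\varepsilon$, using the weak convergence from step 1 and continuity of the stationary law of the stochastic recurrence in its driving parameters. Finally, I would upgrade pointwise convergence to uniform convergence over $\Theta$ via compactness and the equicontinuity bound on $\partial\ell^{\ast}_i/\partial\theta$. That bound is dominated by $(1+\varepsilon_i^{\ast})\max(1/\omega,1/\alpha)^{2}$, using $x/(\omega+\alpha x)\le 1/\alpha$.
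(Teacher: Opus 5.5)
Your proposal is correct and follows essentially the paper's route: the paper also compares both normalized log-likelihoods with the common limit $Q(\theta)$, proves pointwise bootstrap convergence from residual-EDF and moment convergence (Lemma \ref{lem-A11-CPR}) together with a bootstrap LLN, and then upgrades to uniformity via a Lipschitz bound with constant $Cn^{-1}\sum_{i=1}^{n}(1+\varepsilon_i^{\ast})$. That LLN (Lemma \ref{recursive-bootstrap-LLN}) rests on a uniform fractional-moment contraction, coupling to a stationary bootstrap chain, and weak convergence of its stationary expectations. Two points to tighten: the forgetting rate is $\mathbb{E}^{\ast}[(\hat{\alpha}_n\varepsilon_1^{\ast})^{s}]<\rho<1$ with $s$ fixed once from $\mathbb{E}[(\alpha_0\varepsilon_i)^{s}]<1$, not $\hat{\alpha}_n^{s}$ (which exceeds one when $\kappa<1$), and your ``geometric ergodicity'' must mean this $L^{s}$-coupling contraction applied to functions that are H\"older in $x$, since total-variation ergodicity fails for a chain driven by discrete resampled residuals.
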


\textsc{Proof:} By Cavaliere, Nielsen and Rahbek (2017, Lemma B.4), it
suffices to show that, as $n\rightarrow\infty$, 
\begin{align*}
\text{(i)} & \text{:}\text{\ }G_{n}^{\ast}(\theta)\overset{p^{\ast }}{%
\rightarrow}_{\text{a.s.}}0,\text{ \ }\theta\in\Theta \\
\text{(ii)} & \text{:}\text{\ }|G_{n}^{\ast}(\theta)-G_{n}^{\ast}(\tilde{%
\theta})|\leq B_{n}^{\ast}||\theta-\tilde{\theta}||,\text{ \ }\tilde{\theta}%
,\theta\in\Theta
\end{align*}
where $B_{n}^{\ast}$ does not depend on $\theta,\tilde{\theta}$ and
satisfies $\mathbb{E}^{\ast}[B_{n}^{\ast}]\rightarrow_{\text{a.s.}}C<\infty$%
. To do so, we repeatedly use that for all $\theta,\tilde{\theta}\in\Theta$, 
\begin{equation}
\psi_{i}^{\ast}(\theta)/\psi_{i}^{\ast}(\tilde{\theta})\leq\omega_{U}/%
\omega_{L}+\alpha_{U}/\alpha_{L}<\infty  \label{eq-bound-on-psi}
\end{equation}

\emph{Establishing (i): }With $Q(\theta )=\mathbb{E}[\log \psi _{i}(\theta
)+x_{i}/\psi _{i}(\theta )]$, by simple insertion%
\begin{equation*}
|G_{n}^{\ast }(\theta )|\leq \underset{|A_{n}\left( \theta \right) |}{%
\underbrace{|n^{-1}\mathcal{L}_{n}\left( \theta \right) -Q(\theta )|}}+%
\underset{|A_{n}^{\ast }\left( \theta \right) |}{\underbrace{|n^{-1}\mathcal{%
L}_{n}^{\ast }\left( \theta \right) -Q(\theta )|}}\text{.}
\end{equation*}%
Here $|A_{n}\left( \theta \right) |\rightarrow 0$ a.s., using standard
arguments as in Kristensen and Rahbek (2005). Next, by definition and using $%
\mathbb{E}\left[ x_{i}/\psi _{i}(\theta )\right] =\mathbb{E}\left[ \psi
_{i}\left( \theta _{0}\right) /\psi _{i}(\theta )\right] $, 
\begin{align*}
A_{n}^{\ast }\left( \theta \right) & =n^{-1}\sum_{i=1}^{n}\log \psi
_{i}^{\ast }(\theta )-\mathbb{E}[\log \psi _{i}(\theta
)]+n^{-1}\sum_{i=1}^{n}\tfrac{x_{i}^{\ast }}{\psi _{i}^{\ast }(\theta )}-%
\mathbb{E}\left[ \tfrac{x_{i}}{\psi _{i}(\theta )}\right] \\
& =\underset{A_{1n}^{\ast }\left( \theta \right) }{\underbrace{%
n^{-1}\sum_{i=1}^{n}\log \psi _{i}^{\ast }(\theta )-\mathbb{E}[\log \psi
_{i}(\theta )]}}+\underset{A_{2n}^{\ast }\left( \theta \right) }{\underbrace{%
n^{-1}\sum_{i=1}^{n}\tfrac{\psi _{i}^{\ast }\left( \theta _{0}\right) }{\psi
_{i}^{\ast }(\theta )}-\mathbb{E}\left[ \tfrac{\psi _{i}\left( \theta
_{0}\right) }{\psi _{i}(\theta )}\right] }} \\
& +\underset{A_{3n}^{\ast }\left( \theta \right) }{\underbrace{%
n^{-1}\sum_{i=1}^{n}(\varepsilon _{i}^{\ast }-1)\tfrac{\psi _{i}^{\ast }(%
\hat{\theta}_{n})}{\psi _{i}^{\ast }(\theta )}}}+\underset{A_{4n}^{\ast
}\left( \theta \right) }{\underbrace{n^{-1}\sum_{i=1}^{n}\tfrac{\psi
_{i}^{\ast }(\hat{\theta}_{n})-\psi _{i}^{\ast }(\theta _{0})}{\psi
_{i}^{\ast }(\theta )}}}
\end{align*}%
Here $A_{1n}^{\ast }\left( \theta \right) \overset{p^{\ast }}{\rightarrow }_{%
\text{a.s.}}0$ by Lemma \ref{recursive-bootstrap-LLN} with $f\left( x\right)
=\log \left( \omega +\alpha x\right) $ (see also Remark \ref%
{remark-candidate-functions}). Similarly, $A_{2n}^{\ast }\left( \theta
\right) \overset{p^{\ast }}{\rightarrow }_{\text{a.s.}}0$ by Lemma \ref%
{recursive-bootstrap-LLN} with $f(x)=(\omega _{0}+\alpha _{0}x)/(\omega
+\alpha x)$. Next, by Chebyshev's (conditional) inequality 
\begin{align*}
\mathbb{P}^{\ast }\left( \left\vert A_{3n}^{\ast }\left( \theta \right)
\right\vert >\delta \right) & \leq \delta ^{-2}\mathbb{V}^{\ast
}[\varepsilon _{i}^{\ast }]n^{-2}\sum_{i=1}^{n}\mathbb{E}^{\ast }\left[
\left( \tfrac{\psi _{i}^{\ast }(\hat{\theta}_{n})}{\psi _{i}^{\ast }(\theta )%
}\right) ^{2}\right] \\
& \leq C\delta ^{-2}\mathbb{V}^{\ast }[\varepsilon _{i}^{\ast
}]n^{-1}\rightarrow _{\text{a.s.}}0\text{,}
\end{align*}%
using $\psi _{i}^{\ast }(\hat{\theta}_{n})/\psi _{i}^{\ast }(\theta )\leq C$
and $\mathbb{V}^{\ast }[\varepsilon _{i}^{\ast }]=O_{\text{a.s.}}(1)$ as $%
T\rightarrow \infty $ by Lemma \ref{lem-A11-CPR}. Finally, using $\psi
_{i}^{\ast }(\hat{\theta}_{n})=\hat{\omega}_{n}+\hat{\alpha}%
_{n}x_{i-1}^{\ast }$,%
\begin{align*}
\mathbb{P}^{\ast }\left( \left\vert A_{4n}^{\ast }\left( \theta \right)
\right\vert >\delta \right) & =\mathbb{P}^{\ast }\left( \left\vert
n^{-1}\sum_{i=1}^{n}\tfrac{\hat{\omega}_{n}-\omega _{0}+(\hat{\alpha}%
_{n}-\alpha _{0})x_{i-1}^{\ast }}{\omega +\alpha x_{i-1}^{\ast }}\right\vert
>\delta \right) \\
& \leq \mathbb{I}\left( \frac{|\hat{\omega}_{n}-\omega _{0}|}{\omega }+\frac{%
|\hat{\alpha}_{n}-\alpha _{0}|}{\alpha }>\delta \right) \rightarrow _{\text{%
a.s.}}0,
\end{align*}%
which concludes establishing (i).

\emph{Establishing (ii)}: For the equicontinuity condition, note that 
\begin{equation*}
|G_{n}^{\ast }(\theta )-G_{n}^{\ast }(\tilde{\theta})|\leq \underset{%
|B_{n}^{\ast }\left( \theta \right) |}{\underbrace{\left\vert n^{-1}\mathcal{%
L}_{n}^{\ast }(\theta )-n^{-1}\mathcal{L}_{n}^{\ast }(\tilde{\theta}%
)\right\vert }}+\underset{|B_{n}\left( \theta \right) |}{\underbrace{%
\left\vert n^{-1}\mathcal{L}_{n}\left( \theta \right) -n^{-1}\mathcal{L}_{n}(%
\tilde{\theta})\right\vert }}.
\end{equation*}%
For the first term, with $v_{i}^{\ast }=(1,x_{i-1}^{\ast })^{\prime }$ and $%
\bar{\theta}$ a point between $\theta $ and $\theta _{0}$, we have by mean
value expansion%
\begin{align*}
|B_{n}^{\ast }\left( \theta \right) |& \leq n^{-1}\sum_{i=1}^{n}|\log \psi
_{i}^{\ast }(\theta )-\log \psi _{i}^{\ast }(\tilde{\theta})+x_{i}^{\ast
}(1/\psi _{i}^{\ast }(\theta )-1/\psi _{i}^{\ast }(\tilde{\theta}))| \\
& =n^{-1}\sum_{i=1}^{n}\left\vert \frac{v_{i}^{\ast \prime }}{\psi
_{i}^{\ast }(\bar{\theta})}\left( 1-\varepsilon _{i}^{\ast }\frac{\psi
_{i}^{\ast }(\hat{\theta}_{n})}{\psi _{i}^{\ast }(\bar{\theta})}\right)
(\theta -\tilde{\theta})\right\vert \\
& \leq \left\Vert \theta -\tilde{\theta}\right\Vert \underset{C_{n}^{\ast }}{%
\underbrace{Cn^{-1}\sum_{i=1}^{n}(1+\varepsilon _{i}^{\ast })}}
\end{align*}%
using that $||v_{i}^{\ast \prime }/\psi _{i}^{\ast }(\theta )||$ and $\psi
_{i}^{\ast }/\psi _{i}^{\ast }(\theta )$ are uniformly bounded on $\Theta .$
For the second term, similarly 
\begin{equation*}
|B_{n}\left( \theta \right) |\leq \left\Vert \theta -\tilde{\theta}%
\right\Vert \underset{C_{n}}{\underbrace{Cn^{-1}\sum_{i=1}^{n}(1+\varepsilon
_{i})}}.
\end{equation*}%
This establishes (ii) since $\mathbb{E}^{\ast }[C_{n}^{\ast }+C_{n}]=4C+o_{%
\text{a.s.}}(1)$ as $n\rightarrow \infty $. \hfill $\square $

\bigskip

Next, let $\mathcal{S}_{n}^{\ast }(\theta )=\partial \mathcal{L}_{n}^{\ast
}(\theta )/\partial \theta $ denote the first derivative of the bootstrap
log-likelihood, $\mathcal{S}_{n}^{\ast }(\theta )=\sum_{i=1}^{n}\xi
_{i}^{\ast }(\theta ),$ where by definition%
\begin{equation*}
\xi _{i}^{\ast }(\theta )=\left( \frac{x_{i}^{\ast }}{\psi _{i}^{\ast
}(\theta )}-1\right) \frac{v_{i}^{\ast }}{\psi _{i}^{\ast }(\theta )},\text{
\ \ }v_{i}^{\ast }=(1,x_{i-1}^{\ast })^{\prime }
\end{equation*}%
Set $\mathcal{S}_{n}^{\ast }=\mathcal{S}_{n}^{\ast }(\hat{\theta}_{n})$ and
let $\xi _{i}^{\ast }=\xi _{i}^{\ast }(\hat{\theta}_{n})=\left( \varepsilon
_{i}^{\ast }-1\right) v_{i}^{\ast }/\psi _{i}^{\ast }(\hat{\theta}_{n}).$

\begin{lemma}
\label{lem-A5-CPR}Under the assumptions of Lemma \ref%
{lem-asym-normality-boot-of-estimator}, it holds as $n\rightarrow \infty $%
\begin{equation*}
n^{-1/2}\mathcal{S}_{n}^{\ast }\overset{d^{\ast }}{\rightarrow }_{\text{a.s.}%
}\mathcal{N}\left( 0,\Omega \right) ,
\end{equation*}%
with $\Omega =\mathbb{E}[v_{i}v_{i}^{\prime }/\psi _{i}^{2}]$ as in (\ref%
{eq:Omega}).
\end{lemma}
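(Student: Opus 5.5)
The plan is to prove Lemma~\ref{lem-A5-CPR} with a martingale central limit theorem applied conditionally on the data, for a deterministic $n$. Let $\mathcal{F}_{i}^{\ast }=\sigma (\varepsilon _{1}^{\ast },\ldots ,\varepsilon _{i}^{\ast })$, conditionally on the original sample. Since $x_{i}^{\ast }=\psi _{i}^{\ast }(\hat{\theta}_{n})\varepsilon _{i}^{\ast }$, we have $\xi _{i}^{\ast }=(\varepsilon _{i}^{\ast }-1)v_{i}^{\ast }/\psi _{i}^{\ast }(\hat{\theta}_{n})$. Here $v_{i}^{\ast }$ and $\psi _{i}^{\ast }(\hat{\theta}_{n})$ are $\mathcal{F}_{i-1}^{\ast }$-measurable. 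Moreover, $\varepsilon _{i}^{\ast }$ is independent of $\mathcal{F}_{i-1}^{\ast }$ under $\mathbb{P}^{\ast }$, with $\mathbb{E}^{\ast }[\varepsilon _{i}^{\ast }]=1$ by the scaling in \eqref{eq:scaled-residuals}. Hence $\{\xi _{i}^{\ast },\mathcal{F}_{i}^{\ast }\}$ is a martingale difference array under $\mathbb{P}^{\ast }$. By the Cramér--Wold device it suffices to treat $\lambda ^{\prime }\xi _{i}^{\ast }$ for fixed $\lambda \in \mathbb{R}^{2}$. I would then verify the two standard conditions (e.g.\ Brown's martingale CLT) $\mathbb{P}^{\ast }$-a.s. in the data: (a) the conditional variance converges, and (b) a conditional Lindeberg condition holds.

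For (a), the conditional variance is
\begin{equation*}
n^{-1}\sum_{i=1}^{n}\mathbb{E}^{\ast }[(\lambda ^{\prime }\xi _{i}^{\ast })^{2}\mid \mathcal{F}_{i-1}^{\ast }]=\hat{\sigma}_{\varepsilon }^{2}\,\lambda ^{\prime }\Big(n^{-1}\sum_{i=1}^{n}\frac{v_{i}^{\ast }v_{i}^{\ast \prime }}{\psi _{i}^{\ast }(\hat{\theta}_{n})^{2}}\Big)\lambda ,
\end{equation*}
where $\hat{\sigma}_{\varepsilon }^{2}=\mathbb{V}^{\ast }[\varepsilon _{i}^{\ast }]$ is the sample variance of the scaled residuals. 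Lemma~\ref{lem-A11-CPR} should give $\hat{\sigma}_{\varepsilon }^{2}\rightarrow _{\text{a.s.}}\mathbb{V}[\varepsilon _{i}]$, which equals $1$ in the exponential case. This is what makes the limit $\Omega$ rather than $\sigma _{\varepsilon }^{2}\Omega$.

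For the matrix average, I would first replace $\hat{\theta}_{n}$ by $\theta _{0}$ in the denominator. The difference is bounded by $C(|\hat{\omega}_{n}-\omega _{0}|+|\hat{\alpha}_{n}-\alpha _{0}|)$ uniformly in $i$. This uses that $v_{i}^{\ast }/\psi _{i}^{\ast }(\theta )$ is bounded entrywise by $\max (1/\omega _{L},1/\alpha _{L})$, together with the ratio bound \eqref{eq-bound-on-psi}, and the bound vanishes a.s.\ by consistency of $\hat{\theta}_{n}$ (Theorem~\ref{thm-main}). The remaining average $n^{-1}\sum_{i}f(x_{i-1}^{\ast })$, with the bounded continuous $f(x)=(1,x)^{\prime }(1,x)/(\omega _{0}+\alpha _{0}x)^{2}$, converges in $\mathbb{P}^{\ast }$-probability a.s.\ to $\mathbb{E}[f(x_{i-1})]=\Omega$ by Lemma~\ref{recursive-bootstrap-LLN} (cf.\ Remark~\ref{remark-candidate-functions}), exactly as for $A_{1n}^{\ast }$ and $A_{2n}^{\ast }$ in Lemma~\ref{lem-A2-CPR}.

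For (b), boundedness of $v_{i}^{\ast }/\psi _{i}^{\ast }(\hat{\theta}_{n})$ gives $|\lambda ^{\prime }\xi _{i}^{\ast }|\leq C|\varepsilon _{i}^{\ast }-1|$. The Lindeberg sum is therefore bounded by
\begin{equation*}
C\,\mathbb{E}^{\ast }\big[(\varepsilon _{1}^{\ast }-1)^{2}\mathbb{I}(|\varepsilon _{1}^{\ast }-1|>\delta \sqrt{n}/C)\big]=C\,n(T)^{-1}\sum_{j=1}^{n(T)}(\hat{\varepsilon}_{j}^{s}-1)^{2}\mathbb{I}(|\hat{\varepsilon}_{j}^{s}-1|>\delta \sqrt{n}/C).
\end{equation*}
I expect this to be the main obstacle: it is a uniform-integrability statement for the empirical distribution of the residuals. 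I would handle it by showing a.s.\ convergence of the empirical second moment of $\hat{\varepsilon}_{j}^{s}$ to $\mathbb{E}[\varepsilon _{i}^{2}]$, which I expect to follow from Lemma~\ref{lem-A11-CPR} or its proof. That gives a.s.\ weak convergence of the residual empirical law to the law of $\varepsilon _{i}$ together with convergence of second moments, and hence uniform integrability, so the truncated sum tends to $0$ a.s. The key point is that the ratios $\hat{\varepsilon}_{j}/\varepsilon _{j}=\psi _{j}(\theta _{0})/\hat{\psi}_{j}$ are uniformly close to $1$ a.s.\ by consistency.

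With (a) and (b) in hand, the conditional martingale CLT yields $n^{-1/2}\lambda ^{\prime }\mathcal{S}_{n}^{\ast }\overset{d^{\ast }}{\rightarrow }_{\text{a.s.}}\mathcal{N}(0,\lambda ^{\prime }\Omega \lambda )$, and Cramér--Wold gives the stated result. One caveat: convergence of the conditional variance is only in $\mathbb{P}^{\ast }$-probability a.s., which is the form Brown's theorem requires.
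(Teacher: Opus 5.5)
Your proposal is correct and follows essentially the paper's route. Both arguments use a conditional martingale CLT for $\xi_i^{\ast}=(\varepsilon_i^{\ast}-1)v_i^{\ast}/\psi_i^{\ast}(\hat{\theta}_n)$, and both handle the conditional variance via $\mathbb{V}^{\ast}[\varepsilon_i^{\ast}]\rightarrow_{\text{a.s.}}1$ from Lemma~\ref{lem-A11-CPR}, replacement of $\hat{\theta}_n$ by $\theta_0$, and the bootstrap LLN of Lemma~\ref{recursive-bootstrap-LLN}. The only difference is in the Lindeberg step. The paper uses the Lyapunov-type bound $\delta^{-2}C\,\mathbb{E}^{\ast}[(\varepsilon_i^{\ast}-1)^4]/n\rightarrow_{\text{a.s.}}0$, which needs $\mathbb{E}[\varepsilon_i^4]<\infty$. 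Your uniform-integrability argument needs only $\mathbb{E}[\varepsilon_i^2]<\infty$, so it would also extend (with limit variance $\sigma_{\varepsilon}^{2}\Omega$) to the Lomax designs with $s\in\{2.1,3\}$ used in the simulations.
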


\noindent\textsc{Proof:} With $\mathcal{F}_{i}^{\ast}=\sigma\left(
x_{i}^{\ast},x_{i-1}^{\ast},...,x_{0}^{\ast}\right) $, it follows that $%
\mathbb{E}^{\ast}[\xi_{i}^{\ast}|\mathcal{F}_{i-1}^{\ast}]=0$, and thus $%
\{\xi_{i}^{\ast}\}_{i=1}^{n}$ is a martingale difference sequence under $%
\mathbb{P}^{\ast}$ wrt. $\{\mathcal{F}_{i}^{\ast}\}_{i=1}^{n}$. The result
holds by verifying that%
\begin{align*}
\text{(i)}\text{:\ } & n^{-1}\sum_{i=1}^{n}\mathbb{E}^{\ast}[(\gamma
^{\prime}\xi_{i}^{\ast})^{2}|\mathcal{F}_{i-1}^{\ast}]\overset{p^{\ast }}{%
\rightarrow}_{\text{a.s.}}\gamma^{\prime}\Omega\gamma \\
\text{(ii)}\text{:\ } & n^{-1}\sum_{i=1}^{n}\mathbb{E}^{\ast}[(\gamma
^{\prime}\xi_{i}^{\ast})^{2}\mathbb{I}(|\gamma^{\prime}\xi_{i}^{\ast}|\geq
n^{1/2}\delta)]\rightarrow_{\text{a.s.}}0,\text{ \ for any }\delta>0
\end{align*}
for any $\gamma\in\mathbb{R}^{2}$ as $n\rightarrow\infty$.\medskip

\emph{Establishing (i):} It holds $(\gamma ^{\prime }\xi _{i}^{\ast
})^{2}=\left( \varepsilon _{i}^{\ast }-1\right) ^{2}(\gamma ^{\prime
}v_{i}^{\ast })^{2}/\psi _{i}^{\ast 2}$ and hence%
\begin{equation*}
n^{-1}\sum_{i=1}^{n}\mathbb{E}^{\ast }[(\gamma ^{\prime }\xi _{i}^{\ast
})^{2}|\mathcal{F}_{i-1}^{\ast }]=\frac{\mathbb{V}^{\ast }[\varepsilon
_{i}^{\ast }]}{n}\sum_{i=1}^{n}(\gamma ^{\prime }v_{i}^{\ast }/\psi
_{i}^{\ast }(\hat{\theta}_{n}))^{2},
\end{equation*}%
with $\mathbb{V}^{\ast }[\varepsilon _{i}^{\ast }]\rightarrow _{\text{a.s.}%
}1 $ by Lemma \ref{lem-A11-CPR}. Observe%
\begin{equation*}
\frac{1}{n}\sum_{i=1}^{n}\frac{(\gamma ^{\prime }v_{i}^{\ast })^{2}}{\psi
_{i}^{\ast }(\hat{\theta}_{n})^{2}}=\frac{1}{n}\sum_{i=1}^{n}\frac{(\gamma
^{\prime }v_{i}^{\ast })^{2}}{\psi _{i}^{\ast }(\theta _{0})^{2}}+\frac{1}{n}%
\sum_{i=1}^{n}\left( \frac{(\gamma ^{\prime }v_{i}^{\ast })^{2}}{\psi
_{i}^{\ast }(\hat{\theta}_{n})^{2}}-\frac{(\gamma ^{\prime }v_{i}^{\ast
})^{2}}{\psi _{i}^{\ast }(\theta _{0})^{2}}\right) \overset{p^{\ast }}{%
\rightarrow }_{\text{a.s.}}\gamma ^{\prime }\Omega \gamma
\end{equation*}%
using Lemma \ref{recursive-bootstrap-LLN} for the first term. The
convergence for the second term follows by the mean value theorem, $\hat{%
\theta}_{n}\rightarrow _{\text{a.s.}}\theta _{0}$ and $\left\vert \frac{%
\gamma ^{\prime }v_{i}^{\ast }}{\psi _{i}^{\ast }(\theta )}\right\vert \leq
C $, uniformly in $\theta $.

\bigskip

\emph{Establishing (ii)}:\ We have%
\begin{align*}
n^{-1}\sum_{i=1}^{n}\mathbb{E}^{\ast }[(\gamma ^{\prime }\xi _{i}^{\ast
})^{2}\mathbb{I}(|\gamma ^{\prime }\xi _{i}^{\ast }|\geq n^{1/2}\delta)]
&\leq \delta ^{-2}\frac{\mathbb{E}^{\ast }[\left( \varepsilon _{i}^{\ast
}-1\right) ^{4}]}{n}n^{-1}\sum_{i=1}^{n}\mathbb{E}^{\ast }\left[ \left(
\gamma ^{\prime }v_{i}^{\ast }/\psi _{i}^{\ast }(\hat{\theta}_{n})\right)
^{4}\right] \\
& \leq \delta ^{-2}C\frac{\mathbb{E}^{\ast }[\left( \varepsilon _{i}^{\ast
}-1\right) ^{4}]}{n}\rightarrow _{\text{a.s.}}0
\end{align*}%
by Lemma \ref{lem-A11-CPR}.\hfill $\square \bigskip $

Let $\mathcal{I}_{n}^{\ast }(\theta )=-\partial ^{2}\mathcal{L}_{n}^{\ast
}(\theta )/\partial \theta \partial \theta ^{\prime }$ denote the negative
second derivative of the bootstrap log-likelihood, $\mathcal{I}_{n}^{\ast
}(\theta )=\sum_{i=1}^{n}\zeta _{i}^{\ast }(\theta ),$ with%
\begin{equation*}
\zeta _{i}^{\ast }(\theta )=\left( 2\frac{x_{i}^{\ast }}{\psi _{i}^{\ast
}(\theta )}-1\right) \frac{v_{i}^{\ast }v_{i}^{\ast \prime }}{(\psi
_{i}^{\ast }(\theta ))^{2}}.
\end{equation*}%
Define $\mathcal{I}_{n}^{\ast }=\mathcal{I}_{n}^{\ast }(\hat{\theta}_{n})$
and let $\zeta _{i}^{\ast }=\zeta _{i}^{\ast }(\hat{\theta}_{n}).$

\begin{lemma}
\label{lem-A7-CPR}Under the assumptions of Lemma \ref%
{lem-asym-normality-boot-of-estimator}, it holds%
\begin{equation*}
n^{-1}\mathcal{I}_{n}^{\ast }\overset{p^{\ast }}{\rightarrow }_{\text{a.s.}%
}\Omega
\end{equation*}%
as $n\rightarrow \infty $ with $\Omega =\mathbb{E}[v_{i}v_{i}^{\prime }/\psi
_{i}^{2}]$ as in (\ref{eq:Omega}).
\end{lemma}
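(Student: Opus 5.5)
Since $\zeta_i^{\ast}=\zeta_i^{\ast}(\hat\theta_n)$ and $x_i^{\ast}=\psi_i^{\ast}(\hat\theta_n)\varepsilon_i^{\ast}$ (the bootstrap is generated with $\theta^{\ast}=\hat\theta_n$), I will decompose
\begin{equation*}
n^{-1}\mathcal{I}_{n}^{\ast}=n^{-1}\sum_{i=1}^{n}\frac{v_{i}^{\ast}v_{i}^{\ast\prime}}{\psi_{i}^{\ast}(\hat{\theta}_{n})^{2}}+2n^{-1}\sum_{i=1}^{n}(\varepsilon_{i}^{\ast}-1)\frac{v_{i}^{\ast}v_{i}^{\ast\prime}}{\psi_{i}^{\ast}(\hat{\theta}_{n})^{2}}.
\end{equation*}
This identity holds because $2\varepsilon_i^{\ast}-1=1+2(\varepsilon_i^{\ast}-1)$. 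It suffices to work elementwise, i.e.\ with $\gamma'(\cdot)\gamma$ for fixed $\gamma\in\mathbb{R}^2$, or directly with the three distinct entries.

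\emph{First term.} This is exactly the quantity already handled in step (i) of the proof of Lemma \ref{lem-A5-CPR}. There we replace $\hat\theta_n$ by $\theta_0$ in the denominator. The resulting average $n^{-1}\sum_i v_i^{\ast}v_i^{\ast\prime}/\psi_i^{\ast}(\theta_0)^2$ converges to $\Omega$ in $p^{\ast}$, a.s., by the recursive bootstrap LLN (Lemma \ref{recursive-bootstrap-LLN}), applied entrywise with the bounded functions $f(x)=1/(\omega_0+\alpha_0x)^2$, $x/(\omega_0+\alpha_0x)^2$ and $x^2/(\omega_0+\alpha_0x)^2$. The replacement error is controlled by a mean value argument. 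The derivative of $v_i^{\ast}v_i^{\ast\prime}/\psi_i^{\ast}(\theta)^2$ in $\theta$ is bounded by $C$ uniformly in $i$ and $\theta\in\Theta$, since each entry is a ratio of the form $x^a/(\omega+\alpha x)^{b}$ with $a\le b$ and $\Theta$ is compact and bounded away from zero. Hence the error is at most $C\|\hat\theta_n-\theta_0\|\to_{\text{a.s.}}0$.

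\emph{Second term.} The summands $(\varepsilon_i^{\ast}-1)w_i^{\ast}$ with $w_i^{\ast}=v_i^{\ast}v_i^{\ast\prime}/\psi_i^{\ast}(\hat\theta_n)^2$ form a martingale difference sequence under $\mathbb{P}^{\ast}$ with respect to $\mathcal{F}_i^{\ast}$. The reason is that $w_i^{\ast}$ is $\mathcal{F}_{i-1}^{\ast}$-measurable and $\mathbb{E}^{\ast}[\varepsilon_i^{\ast}]=1$ by the scaling in \eqref{eq:scaled-residuals}. Moreover $\|w_i^{\ast}\|\le C$ uniformly. By the conditional Chebyshev inequality, exactly as for $A_{3n}^{\ast}$ in the proof of Lemma \ref{lem-A2-CPR},
\begin{equation*}
\mathbb{P}^{\ast}\Big(\Big\|n^{-1}\sum_{i=1}^{n}(\varepsilon_i^{\ast}-1)w_i^{\ast}\Big\|>\delta\Big)\le C\delta^{-2}\,\mathbb{V}^{\ast}[\varepsilon_i^{\ast}]\,n^{-1}\to_{\text{a.s.}}0 .
\end{equation*}
Here I use $\mathbb{V}^{\ast}[\varepsilon_i^{\ast}]=O_{\text{a.s.}}(1)$ from Lemma \ref{lem-A11-CPR}.

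The main obstacle is the first term. It requires the bootstrap LLN for the recursively generated $x_i^{\ast}$, whose law depends on $\hat\theta_n$ and on the empirical residual distribution. That step is delegated to Lemma \ref{recursive-bootstrap-LLN}. The only thing to check is that the boundedness and continuity requirements of that lemma (cf.\ Remark \ref{remark-candidate-functions}) hold for the three functions above, which they do since each is bounded and Lipschitz on $[0,\infty)$.
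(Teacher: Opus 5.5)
Your proposal is correct and follows the paper's proof essentially line by line. You use the same split of $n^{-1}\mathcal{I}_n^{\ast}$ via $2\varepsilon_i^{\ast}-1=1+2(\varepsilon_i^{\ast}-1)$; you handle the first term by the bootstrap LLN plus a mean-value replacement of $\hat\theta_n$ by $\theta_0$, as in step (i) of Lemma \ref{lem-A5-CPR}; and you dispose of the second term with the conditional Chebyshev bound $C\delta^{-2}\mathbb{V}^{\ast}[\varepsilon_i^{\ast}]/n$. Your proposal also makes explicit two points the paper leaves implicit: that the summands form a $\mathbb{P}^{\ast}$-martingale difference sequence (which is what justifies summing the variances), and which three bounded Lipschitz entry functions the LLN is applied to.
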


\noindent \textsc{Proof:} As $x_{i}^{\ast }/\psi _{i}^{\ast }(\hat{\theta}%
_{n})=\varepsilon _{i}^{\ast }$, we have%
\begin{equation*}
\frac{1}{n}\mathcal{I}_{n}^{\ast }=\frac{1}{n}\sum_{i=1}^{n}\frac{%
v_{i}^{\ast }v_{i}^{\ast \prime }}{\psi _{i}^{\ast }(\hat{\theta}_{n})^{2}}+%
\frac{2}{n}\sum_{i=1}^{n}\left( \varepsilon _{i}^{\ast }-1\right) \frac{%
v_{i}^{\ast }v_{i}^{\ast \prime }}{\psi _{i}^{\ast }(\hat{\theta}_{n})^{2}}.
\end{equation*}%
As shown above, 
\begin{equation*}
n^{-1}\sum_{i=1}^{n}\frac{v_{i}^{\ast }v_{i}^{\ast \prime }}{\psi _{i}^{\ast
}(\hat{\theta}_{n})^{2}}\overset{p^{\ast }}{\rightarrow }_{\text{a.s.}%
}\Omega .
\end{equation*}%
Similarly, with 
\begin{equation*}
D_{n}^{\ast }=n^{-1}\sum_{i=1}^{n}(\varepsilon _{i}^{\ast }-1)\frac{%
v_{i}^{\ast }v_{i}^{\ast \prime }}{\psi _{i}^{\ast }(\hat{\theta}_{n})^{2}},
\end{equation*}%
we obtain%
\begin{align*}
\mathbb{P}^{\ast }\left( |\gamma ^{\prime }D_{n}^{\ast }\gamma |>\delta
\right) & \leq \delta ^{-2}\frac{\mathbb{V}^{\ast }[\varepsilon _{i}^{\ast }]%
}{n}\frac{1}{n}\sum_{i=1}^{n}\left( \gamma ^{\prime }v_{i}^{\ast
}v_{i}^{\ast \prime }\gamma /\psi _{i}^{\ast }(\hat{\theta}_{n})^{2}\right)
^{2} \\
& \leq C\delta ^{-2}\frac{\mathbb{V}^{\ast }[\varepsilon _{i}^{\ast }]}{n}%
\rightarrow _{\text{a.s.}}0,
\end{align*}%
and we conclude $D_{n}^{\ast }\overset{p^{\ast }}{\rightarrow }_{\text{a.s.}%
}0$. \hfill $\square \bigskip $

In the lemma below, we denote $(\theta_{1},\theta_{2})=(\omega,\alpha)$, $%
(x_{1,i},x_{2,i})=(1,x_{i})$ and $(x_{1,i}^{\ast},x_{2,i}^{\ast})=(1,x_{i}^{%
\ast}).$

\begin{lemma}
\label{lem-A9-CPR}Under the assumptions of Lemma \ref%
{lem-asym-normality-boot-of-estimator}, it holds%
\begin{equation*}
\max_{l,j,k=1,2}\sup_{\theta \in \Theta }\left\vert \frac{1}{n}\frac{%
\partial ^{3}\mathcal{L}_{n}^{\ast }(\theta )}{\partial \theta _{l}\partial
\theta _{j}\partial \theta _{k}}\right\vert \leq C_{n}^{\ast },\text{ \ \ }%
C_{n}^{\ast }\overset{p^{\ast }}{\rightarrow }_{\text{a.s.}}C
\end{equation*}%
for some constant $C<\infty $.
\end{lemma}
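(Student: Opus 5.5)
We bound the third derivatives of each summand $\ell_i^\ast(\theta)$ uniformly in $\theta\in\Theta$ by a quantity of the form $C(1+\varepsilon_i^\ast)$. The averaged bound then inherits a bootstrap law of large numbers from Lemma \ref{lem-A11-CPR}.

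\textbf{Derivatives of a summand.} Write $\ell_i^\ast(\theta)=-\log\psi_i^\ast(\theta)-x_i^\ast/\psi_i^\ast(\theta)$ with $\psi_i^\ast(\theta)=\theta_1x_{1,i-1}^\ast+\theta_2x_{2,i-1}^\ast$, which is linear in $\theta$. Direct differentiation gives
\begin{equation*}
\frac{\partial^3\ell_i^\ast(\theta)}{\partial\theta_l\partial\theta_j\partial\theta_k}
=\left(6\frac{x_i^\ast}{\psi_i^\ast(\theta)}-2\right)\frac{x_{l,i-1}^\ast x_{j,i-1}^\ast x_{k,i-1}^\ast}{\psi_i^\ast(\theta)^3}.
\end{equation*}
Since $\omega\geq\omega_L>0$ and $\alpha\geq\alpha_L>0$ on the compact $\Theta$, each ratio $x_{l,i-1}^\ast/\psi_i^\ast(\theta)$ is bounded uniformly in $\theta$ and $i$. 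Indeed, $1/\psi_i^\ast(\theta)\le 1/\omega_L$ and $x_{i-1}^\ast/\psi_i^\ast(\theta)\le 1/\alpha_L$. Hence the product of the three such factors is bounded by a constant $C$.

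\textbf{Controlling the ratio $x_i^\ast/\psi_i^\ast(\theta)$.} The bootstrap data satisfy $x_i^\ast=\psi_i^\ast(\hat\theta_n)\varepsilon_i^\ast$. Therefore
\begin{equation*}
\frac{x_i^\ast}{\psi_i^\ast(\theta)}=\varepsilon_i^\ast\frac{\psi_i^\ast(\hat\theta_n)}{\psi_i^\ast(\theta)}\le \varepsilon_i^\ast(\omega_U/\omega_L+\alpha_U/\alpha_L)
\end{equation*}
by \eqref{eq-bound-on-psi}. This holds because $\hat\theta_n\in\Theta$ by construction. Combining the two bounds, every third derivative of $n^{-1}\mathcal{L}_n^\ast(\theta)$ is bounded in absolute value, uniformly over $\theta$ and $l,j,k$, by
\begin{equation*}
C_n^\ast:=C\,n^{-1}\sum_{i=1}^n(1+\varepsilon_i^\ast).
\end{equation*}

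\textbf{Convergence of $C_n^\ast$.} It remains to show $C_n^\ast\overset{p^\ast}{\rightarrow}_{\text{a.s.}}2C$. Conditionally on the data, the $\varepsilon_i^\ast$ are i.i.d. with $\mathbb{E}^\ast[\varepsilon_i^\ast]=1$ exactly, by the scaling in \eqref{eq:scaled-residuals}. Their conditional variance satisfies $\mathbb{V}^\ast[\varepsilon_i^\ast]=O_{\text{a.s.}}(1)$ by Lemma \ref{lem-A11-CPR}. Chebyshev's conditional inequality then gives
\begin{equation*}
\mathbb{P}^\ast\left(\left|n^{-1}\sum_{i=1}^n(\varepsilon_i^\ast-1)\right|>\delta\right)\le\delta^{-2}\,\mathbb{V}^\ast[\varepsilon_i^\ast]\,n^{-1}\rightarrow_{\text{a.s.}}0 .
\end{equation*}

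\textbf{Main obstacle.} The only delicate point is ensuring that the bounds hold uniformly over $\Theta$ without any moment condition on $x_i^\ast$. This matters because for $\kappa<1$ even the mean of the durations is infinite. The difficulty is avoided because every power of $x_{i-1}^\ast$ in the derivative is matched by a power of $\psi_i^\ast(\theta)$ in the denominator, so only the innovation moments enter the bound.
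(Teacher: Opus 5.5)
Your proof is correct and follows the same route as the paper. You bound each summand uniformly in $\theta$ by $C(1+\varepsilon_i^\ast)$, using $\psi_i^\ast(\theta)\geq\omega_L+\alpha_L x_{i-1}^\ast$ and \eqref{eq-bound-on-psi}, and then conclude with the conditional Chebyshev inequality and $\mathbb{V}^\ast[\varepsilon_i^\ast]=O_{\text{a.s.}}(1)$ from Lemma \ref{lem-A11-CPR}. Your coefficient $\bigl(6x_i^\ast/\psi_i^\ast(\theta)-2\bigr)$ is in fact the correct third derivative; the paper's $\bigl(3x_i^\ast/\psi_i^\ast(\theta)-1\bigr)$ is off by a factor of two, which is harmless because it is absorbed into the constant.
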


\textsc{Proof:} For any $l,j,k=1,2$ 
\begin{align*}
\left\vert \frac{1}{n}\frac{\partial ^{3}\mathcal{L}_{n}^{\ast }(\theta )}{%
\partial \theta _{l}\partial \theta _{j}\partial \theta _{k}}\right\vert &
=\left\vert \frac{1}{n}\sum_{i=1}^{n}\left( 3\frac{x_{i}^{\ast }}{\psi
_{i}^{\ast }(\theta )}-1\right) \frac{x_{l,i-1}^{\ast }x_{j,i-1}^{\ast
}x_{k,i-1}^{\ast }}{\psi _{i}^{\ast 3}(\theta )}\right\vert \\
& \leq \frac{C}{n}\sum_{i=1}^{n}\left( 3\varepsilon _{i}^{\ast }+1\right) 
\overset{p^{\ast }}{\rightarrow }_{\text{a.s.}}C\text{,}
\end{align*}%
using $n^{-1}\sum_{i=1}^{n}\varepsilon _{i}^{\ast }\overset{p^{\ast }}{%
\rightarrow }_{\text{a.s.}}1$ by (the conditional) Chebyshev inequality as $%
\mathbb{V}^{\ast }[\varepsilon _{i}^{\ast }]=O_{\text{a.s.}}(1)$ as $%
n\rightarrow \infty $ by Lemma \ref{lem-A11-CPR}.\hfill $\square $

\subsection{Auxiliary bootstrap lemmas}

Without loss of generality, let $0<\omega _{L}\leq \omega \leq \omega
_{U}<\infty $, $0\leq \alpha _{L}\leq \alpha \leq \alpha _{U}<\infty $, with 
$\omega _{L}<\omega _{0}<\omega _{U}$ and $\alpha _{L}<\alpha _{0}<\alpha
_{U}$.

\begin{lemma}
\label{lem-A11-CPR}Under the assumptions of Theorem \ref{thm-bootstrap-main}%
, it holds for all $y\in\mathbb{R}$ that%
\begin{equation*}
\mathbb{P}^{\ast}(\varepsilon_{i}^{\ast}\leq y)=\frac{1}{n(T)}\sum
_{i=1}^{n(T)}\mathbb{I}(\hat{\varepsilon}_{i}^{s}\leq y)\rightarrow _{\text{%
a.s.}}\mathbb{P}(\varepsilon_{i}\leq y),
\end{equation*}
as $T\rightarrow\infty$. Furthermore, for any $p\geq0$ such that $\mathbb{E}%
[\varepsilon_{i}^{p}]<\infty$, 
\begin{equation*}
\mathbb{E}^{\ast}[(\varepsilon_{i}^{\ast})^{p}]=\tfrac{1}{n(T)}\sum
_{i=1}^{n(T)}(\hat{\varepsilon}_{i}^{s})^{p}\rightarrow_{\text{a.s.}}\mathbb{%
E}[\varepsilon_{i}^{p}],
\end{equation*}
as $T\rightarrow\infty$.
\end{lemma}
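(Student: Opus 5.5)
The plan is to compare the empirical distribution of the scaled residuals with that of the infeasible true innovations, and to show that the estimation error disappears uniformly in $y$.

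\emph{Step 1: the infeasible benchmark.} For a deterministic $n$, define
\begin{equation*}
F_n(y)=n^{-1}\sum_{i=1}^{n}\mathbb{I}(\varepsilon_i\le y).
\end{equation*}
Since $\{\varepsilon_i\}$ is i.i.d., the strong law (in fact Glivenko--Cantelli) gives $F_n(y)\to_{\text{a.s.}}\mathbb{P}(\varepsilon_i\le y)$. It also gives $n^{-1}\sum_{i\le n}\varepsilon_i^p\to_{\text{a.s.}}\mathbb{E}[\varepsilon_i^p]$ whenever this moment is finite. Because $n(T)\to_{\text{a.s.}}\infty$ by Lemma \ref{lem-properties-of-NT}, Gut (2009, Theorem 2.1) transfers both statements to the random index $n(T)$. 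It therefore suffices to show that replacing $\varepsilon_i$ by $\hat\varepsilon_i^s$ changes neither limit.

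\emph{Step 2: control of the residuals.} Write
\begin{equation*}
\hat\varepsilon_i=\varepsilon_i\,\psi_i(\theta_0)/\hat\psi_i .
\end{equation*}
Using $\hat\psi_i=\hat\omega+\hat\alpha x_{i-1}$ together with the bound \eqref{eq-bound-on-psi}, we get the ratio bound
\begin{equation*}
\left|\frac{\psi_i(\theta_0)}{\hat\psi_i}-1\right|\le \frac{|\hat\omega-\omega_0|}{\omega_L}+\frac{|\hat\alpha-\alpha_0|}{\alpha_L}=:\delta_T ,
\end{equation*}
which holds uniformly in $i$. The key point is that no moments of $x_i$ are needed: the ratio is bounded using only the parameter space. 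By Theorem \ref{thm-main}, $\hat\theta_{n(T)}\to_{\text{a.s.}}\theta_0$, so $\delta_T\to_{\text{a.s.}}0$. Consequently $\hat\varepsilon_i\in[(1-\delta_T)\varepsilon_i,(1+\delta_T)\varepsilon_i]$ for all $i$.

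Averaging, $n(T)^{-1}\sum\hat\varepsilon_i\to_{\text{a.s.}}\mathbb{E}[\varepsilon_i]=1$ by Step 1 applied with $p=1$. Hence $\hat\varepsilon_i^s=\varepsilon_i c_{i,T}$ with $\sup_i|c_{i,T}-1|\to_{\text{a.s.}}0$.

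\emph{Step 3: the cdf and the moments.} For the cdf, let $\eta>0$ be small. On the event $\sup_i|c_{i,T}-1|\le\eta$ we have the sandwich
\begin{equation*}
F_{n(T)}\big(y/(1+\eta)\big)\le n(T)^{-1}\sum_{i=1}^{n(T)}\mathbb{I}(\hat\varepsilon^s_i\le y)\le F_{n(T)}\big(y/(1-\eta)\big)
\end{equation*}
for $y>0$; the case $y\le0$ is trivial. By Step 1 the outer terms converge a.s. to $F(y/(1\pm\eta))$. Letting $\eta\downarrow0$ along a countable sequence and using continuity of the exponential cdf gives the first claim.

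For the moments, $(\hat\varepsilon_i^s)^p$ lies between $\varepsilon_i^p\,(1-\eta)^p$ and $\varepsilon_i^p\,(1+\eta)^p$. The same sandwich with Step 1 then yields the second claim whenever $\mathbb{E}[\varepsilon_i^p]<\infty$.

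\emph{Main obstacle.} The delicate point is the uniform multiplicative approximation in Step 2. With a deterministic $n$ this argument is routine. Here $n(T)$ is random and, for $\kappa<1$, the $x_{i-1}$ have infinite mean, so additive error bounds of the form $|\hat\alpha-\alpha_0|\,n^{-1}\sum x_{i-1}$ would fail. The ratio bound avoids this entirely, because its right-hand side does not depend on $x_{i-1}$.

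A second point needs care: the a.s. consistency of $\hat\theta_{n(T)}$ must hold jointly with the a.s. convergence of $F_{n(T)}$. Both are almost-sure statements on the same probability space, so intersecting the two probability-one events gives this.
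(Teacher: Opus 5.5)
Your proof is correct and follows the paper's argument. The SLLN for the true innovations is transferred to the random index $n(T)$ via Gut (2009). The residual error is then controlled through the uniform multiplicative bound $\hat{\varepsilon}_i^s=c_{i,T}\varepsilon_i$ with $\max_{i\le n(T)}|c_{i,T}-1|\to_{\text{a.s.}}0$, which comes from the parameter-space ratio bound and needs no moments of $x_{i-1}$.

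The only difference is the last step. You close with a monotone sandwich: for the cdf you rescale the threshold $y$ by $1\pm\eta$, and for the moments you bound $c_{i,T}^p$. The paper instead uses an additive indicator bound with a Markov step, and separate inequalities for $p<1$ and $p\ge 1$. Your version is slightly cleaner and is automatically two-sided. The paper's indicator inequality, as written, only covers the window $(y,y+\delta]$ rather than $(y-\delta,y+\delta]$.
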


\textsc{Proof}: Note that%
\begin{equation*}
\tfrac{1}{n(T)}\sum_{i=1}^{n(T)}\mathbb{I}(\hat{\varepsilon}_{i}^{s}\leq y)=%
\tfrac{1}{n(T)}\sum_{i=1}^{n(T)}\mathbb{I}(\varepsilon _{i}\leq y)+\tfrac{1}{%
n(T)}\sum_{i=1}^{n(T)}(\mathbb{I}(\hat{\varepsilon}_{i}^{s}\leq y)-\mathbb{I(%
}\varepsilon _{i}\leq y))
\end{equation*}%
By the strong law of large numbers (SLLN) for i.i.d. variables and Gut
(2009,\ Lemma 2.1) as $n(T)\rightarrow \infty $ a.s., 
\begin{equation*}
\tfrac{1}{n(T)}\sum_{i=1}^{n(T)}\mathbb{I}(\varepsilon _{i}\leq
y)\rightarrow _{\text{a.s.}}\mathbb{P}(\varepsilon _{i}\leq y).
\end{equation*}%
Next, for any $i$ and $\delta >0$%
\begin{align*}
\mathbb{I}\left( \hat{\varepsilon}_{i}^{s}\leq y\right) & =\mathbb{I}\left( 
\hat{\varepsilon}_{i}^{s}\leq y,|\hat{\varepsilon}_{i}^{s}-\varepsilon
_{i}|>\delta \right) +\mathbb{I}\left( \hat{\varepsilon}_{i}^{s}\leq y,|\hat{%
\varepsilon}_{i}^{s}-\varepsilon _{i}|\leq \delta \right) \\
& \leq \mathbb{I}\left( |\hat{\varepsilon}_{i}^{s}-\varepsilon _{i}|>\delta
\right) +\mathbb{I}(\varepsilon _{i}\leq y+\delta )
\end{align*}%
such that%
\begin{equation}
|\mathbb{I}(\hat{\varepsilon}_{i}^{s}\leq y)-\mathbb{I}(\varepsilon _{i}\leq
y)|\leq \mathbb{I}(|\hat{\varepsilon}_{i}^{s}-\varepsilon _{i}|>\delta )+%
\mathbb{I}(y<\varepsilon _{i}\leq y+\delta ).
\label{eq-upper-bound-proof-lem-A11}
\end{equation}%
For the first indicator, using that $\hat{\varepsilon}_{i}^{s}=\hat{%
\varepsilon}_{i}/\bar{\varepsilon}$, $\hat{\varepsilon}_{i}=\varepsilon
_{i}\psi _{i}/\hat{\psi}_{i},$ with $\psi _{i}=\psi _{i}\left( \theta
_{0}\right) $, $\hat{\psi}_{i}=\psi _{i}(\hat{\theta}_{n(T)})$ and $\bar{%
\varepsilon}=\tfrac{1}{n(T)}\sum_{i=1}^{n(T)}\hat{\varepsilon}_{i},$ 
\begin{align}
\mathbb{I}(|\hat{\varepsilon}_{i}^{s}-\varepsilon _{i}|& \geq \delta )\leq
\delta ^{-1}|\hat{\varepsilon}_{i}^{s}-\varepsilon _{i}|  \notag \\
& =\delta ^{-1}\varepsilon _{i}|(\psi _{i}/\hat{\psi}_{i})/\bar{\varepsilon}%
-1|  \notag \\
& \leq \delta ^{-1}\varepsilon _{i}(\bar{\varepsilon}|(\psi _{i}/\hat{\psi}%
_{i})-1|+|\bar{\varepsilon}-1|)  \notag \\
& =(\delta \bar{\varepsilon})^{-1}\varepsilon _{i}\left( \frac{|\psi _{i}-%
\hat{\psi}_{i}|}{\hat{\psi}_{i}}+\frac{|\bar{\varepsilon}-1|}{\bar{%
\varepsilon}}\right)  \notag \\
& \leq (\delta \bar{\varepsilon})^{-1}\varepsilon _{i}\left( \frac{|\omega
_{0}-\hat{\omega}_{n(T)}|+|\alpha _{0}-\hat{\alpha}_{n(T)}|x_{i-1}}{\omega
_{L}+\alpha _{L}x_{i-1}}+\frac{|\bar{\varepsilon}-1|}{\bar{\varepsilon}}%
\right)  \notag \\
& \leq (\delta \bar{\varepsilon})^{-1}\varepsilon _{i}\left( C||\hat{\theta}%
_{n(T)}-\theta _{0}||+|\bar{\varepsilon}-1|/\bar{\varepsilon}\right) .
\label{eq-upper-bound-proof-lem-A11-3}
\end{align}%
Inserting into (\ref{eq-upper-bound-proof-lem-A11}), it holds%
\begin{gather}
\tfrac{1}{n(T)}\sum_{i=1}^{n(T)}|\mathbb{I}(\hat{\varepsilon}_{i}^{s}\leq y)-%
\mathbb{I}(\varepsilon _{i}\leq y)|  \notag \\
\leq \tfrac{\left( C||\hat{\theta}_{n(T)}-\theta _{0}||+|\bar{\varepsilon}%
-1|/\bar{\varepsilon}\right) }{\bar{\varepsilon}\delta }\tfrac{1}{n(T)}%
\sum_{i=1}^{n(T)}\varepsilon _{i}+\tfrac{1}{n(T)}\sum_{i=1}^{n(T)}\mathbb{I}%
(y<\varepsilon _{i}\leq y+\delta ).  \label{eq-upper-bound-proof-lem-A11-4}
\end{gather}%
Note for $\bar{\varepsilon}=\tfrac{1}{n(T)}\sum_{i=1}^{n(T)}\hat{\varepsilon}%
_{i}$ that $\bar{\varepsilon}=\tfrac{1}{n(T)}\sum_{i=1}^{n(T)}\varepsilon
_{i}+\bar{\varepsilon}-\tfrac{1}{n(T)}\sum_{i=1}^{n(T)}\varepsilon
_{i}\rightarrow _{\text{a.s.}}1$, by a SLLN\ and using%
\begin{align*}
\left\vert \bar{\varepsilon}-\tfrac{1}{n(T)}\sum_{i=1}^{n(T)}\varepsilon
_{i}\right\vert & =\left\vert \tfrac{1}{n(T)}\sum_{i=1}^{n(T)}(\hat{%
\varepsilon}_{i}-\varepsilon _{i})\right\vert =\left\vert \tfrac{1}{n(T)}%
\sum_{i=1}^{n(T)}\varepsilon _{i}\frac{(\psi _{i}-\hat{\psi}_{i})}{\hat{\psi}%
_{i}}\right\vert \\
& \leq \tfrac{1}{n(T)}\sum_{i=1}^{n(T)}\varepsilon _{i}\frac{|\psi _{i}-\hat{%
\psi}_{i}|}{\hat{\psi}_{i}}\leq C||\hat{\theta}_{n(T)}-\theta _{0}||\tfrac{1%
}{n(T)}\sum_{i=1}^{n(T)}\varepsilon _{i},
\end{align*}%
which tends to zero a.s. as $\hat{\theta}_{n(T)}\rightarrow _{\text{a.s.}%
}\theta _{0}$ from the proof of Theorem \ref{thm-main}.

We can therefore conclude from (\ref{eq-upper-bound-proof-lem-A11-4}) that%
\begin{equation*}
\lim\sup_{T\rightarrow\infty}\tfrac{1}{n(T)}\sum_{i=1}^{n(T)}|\mathbb{I}(%
\hat{\varepsilon}_{i}^{s}\leq y)-\mathbb{I}(\varepsilon_{i}\leq y)|\leq 
\mathbb{P}(y<\varepsilon_{i}\leq y+\delta)
\end{equation*}
with probability one. Finally, by letting $\delta\rightarrow0$, the upper
bound can be made arbitrarily small as $\varepsilon_{i}$ has a density. This
proves $\tfrac{1}{n(T)}\sum_{i=1}^{n(T)}\mathbb{I}(\hat{\varepsilon}%
_{i}^{s}\leq y)\rightarrow_{\text{a.s.}}\mathbb{P}(\varepsilon_{i}\leq y).$

For the second statement, we similarly have%
\begin{equation*}
\tfrac{1}{n(T)}\sum_{i=1}^{n(T)}(\hat{\varepsilon}_{i}^{s})^{p}=\tfrac {1}{%
n(T)}\sum_{i=1}^{n(T)}\varepsilon_{i}^{p}+\tfrac{1}{n(T)}\sum_{i=1}^{n(T)}((%
\hat{\varepsilon}_{i}^{s})^{p}-\varepsilon_{i}^{p}),
\end{equation*}
with $\lim_{T\rightarrow\infty}\tfrac{1}{n(T)}\sum_{i=1}^{n(T)}\varepsilon
_{i}^{p}=\mathbb{E}[\varepsilon_{i}^{p}]$ by the SLLN and Gut (2009, Lemma
2.1).

Consider the second term. For $p<1$, note that for $x\geq y$, we have $%
x^{p}-y^{p}=(y+(x-y))^{p}-y^{p}\leq(x-y)^{p}$. Hence $|x^{p}-y^{p}|%
\leq|x-y|^{p}$ for any $x,y\geq0$ and thus 
\begin{equation*}
\tfrac{1}{n(T)}\sum_{i=1}^{n(T)}|(\hat{\varepsilon}_{i}^{s})^{p}-%
\varepsilon_{i}^{p}|\leq\tfrac{1}{n(T)}\sum_{i=1}^{n(T)}|\hat{\varepsilon }%
_{i}^{s}-\varepsilon_{i}|^{p}\rightarrow_{\text{a.s.}}0,
\end{equation*}
where the convergence follows by reusing the bound on $|\hat{\varepsilon}%
_{i}^{s}-\varepsilon_{i}|$ established in (\ref%
{eq-upper-bound-proof-lem-A11-3}).

For $p\geq1$, it follows by a mean value expansion%
\begin{align*}
\tfrac{1}{n(T)}\sum_{i=1}^{n(T)}|(\hat{\varepsilon}_{i}^{s})^{p}-%
\varepsilon_{i}^{p}| & \leq\tfrac{p}{n(T)}\sum_{i=1}^{n(T)}|\lambda_{i}\hat{%
\varepsilon}_{i}^{s}+(1-\lambda_{i})\varepsilon_{i}|^{p-1}|\hat {\varepsilon}%
_{i}^{s}-\varepsilon_{i}| \\
& \leq\tfrac{p}{n(T)}\sum_{i=1}^{n(T)}(\hat{\varepsilon}_{i}^{s}+%
\varepsilon_{i})^{p-1}|\hat{\varepsilon}_{i}^{s}-\varepsilon_{i}|
\end{align*}
where $\lambda_{i}\in(0,1)$. Next using twice that $(|z|+|w|)^{m}\leq
2^{m}(|z|^{m}+|w|^{m})$ for any $m\geq0$, we obtain%
\begin{align*}
\tfrac{1}{n(T)}\sum_{i=1}^{n(T)}(\hat{\varepsilon}_{i}^{s}+\varepsilon
_{i})^{p-1}|\hat{\varepsilon}_{i}^{s}-\varepsilon_{i}| & \leq C\tfrac {1}{%
n(T)}\sum_{i=1}^{n(T)}((\hat{\varepsilon}_{i}^{s})^{p-1}+(\varepsilon
_{i})^{p-1})|\hat{\varepsilon}_{i}^{s}-\varepsilon_{i}| \\
& =C\tfrac{1}{n(T)}\sum_{i=1}^{n(T)}((|\hat{\varepsilon}_{i}^{s}-%
\varepsilon_{i}+\varepsilon_{i}|)^{p-1}+(\varepsilon_{i})^{p-1})|\hat{%
\varepsilon}_{i}^{s}-\varepsilon_{i}| \\
& \leq C\tfrac{1}{n(T)}\sum_{i=1}^{n(T)}(|\hat{\varepsilon}%
_{i}^{s}-\varepsilon_{i}|^{p-1}+2^{p}(\varepsilon_{i})^{p-1})|\hat{%
\varepsilon }_{i}^{s}-\varepsilon_{i}|\rightarrow_{\text{a.s.}}0
\end{align*}
using again that $|\hat{\varepsilon}_{i}^{s}-\varepsilon_{i}|\leq \bar{%
\varepsilon}^{-1}C||\hat{\theta}_{n(T)}-\theta_{0}||\varepsilon_{i}$.\hfill$%
\square$

\subsection{Bootstrap law of large numbers}

\begin{lemma}
\label{recursive-bootstrap-LLN}Assume that $f:[0,\infty)\rightarrow\mathbb{R}
$ is a Lipschitz continuous function satisfying 
\begin{equation}
|f(x)|\leq x^{p}+c(p),  \label{app:-dom}
\end{equation}
for any $p>0$, and with the constant $c(p)\in\lbrack0,\infty)$. If $\mathbb{E%
}[\log(\alpha_{0}\varepsilon_{i})]<0$, it holds for the bootstrap process $%
\{x_{i}^{\ast}\}_{i=1}^{n(T)}$ in (\ref{eq-RCD}) with $n^{\ast }=n(T),$ as $%
T\rightarrow\infty$,%
\begin{align}
& \frac{1}{n(T)}\sum_{i=1}^{n(T)}f(x_{i}^{\ast})-\mathbb{E}[f(x_{i})]\overset%
{p^{\ast}}{\rightarrow}_{\text{a.s.}}0  \label{Res1} \\
& \frac{1}{n(T)}\sum_{i=1}^{n(T)}\mathbb{E}^{\ast}[f(x_{i}^{\ast })]-\mathbb{%
E}[f(x_{i})]\rightarrow_{\text{a.s.}}0,  \label{Res2}
\end{align}
\end{lemma}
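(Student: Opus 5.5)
The plan is to prove the lemma first along a deterministic index $n$ and then pass to $n(T)$. For deterministic $n$, we show
\[
n^{-1}\sum_{i=1}^{n}f(x_{i}^{\ast})-\mathbb{E}[f(x_{i})]\overset{p^{\ast}}{\rightarrow}_{\text{a.s.}}0,
\qquad
n^{-1}\sum_{i=1}^{n}\mathbb{E}^{\ast}[f(x_{i}^{\ast})]-\mathbb{E}[f(x_{i})]\rightarrow_{\text{a.s.}}0 .
\]
Lemma \ref{lem-replace-n-by-nT-new}(a), together with $n(T)\rightarrow_{\text{a.s.}}\infty$ from Lemma \ref{lem-properties-of-NT}, then gives (\ref{Res1}). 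For (\ref{Res2}), Gut (2009, Theorem 2.1) transfers the deterministic a.s. limit along $n(T)$.

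Write $\alpha^{\ast}=\hat{\alpha}_{n(T)}$ and $\omega^{\ast}=\hat{\omega}_{n(T)}$. The key device is a coupling of the bootstrap recursion with a stationary comparison process. Let $\bar{x}_{i}^{\ast}$ be the stationary solution of $\bar{x}_{i}^{\ast}=(\omega^{\ast}+\alpha^{\ast}\bar{x}_{i-1}^{\ast})\varepsilon_{i}^{\ast}$, driven by the same resampled innovations. It exists a.s. for large $T$ because
\[
\mathbb{E}^{\ast}\log(\alpha^{\ast}\varepsilon_{i}^{\ast})=\log\hat{\alpha}+n(T)^{-1}\sum\log\hat{\varepsilon}_{i}^{s}\rightarrow_{\text{a.s.}}\mathbb{E}\log(\alpha_{0}\varepsilon_{i})<0 .
\]
This convergence of log moments follows from the argument of Lemma \ref{lem-A11-CPR}, with care needed near zero; one may truncate $\log$ from below.

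The difference between the two processes is
\[
x_{i}^{\ast}-\bar{x}_{i}^{\ast}=\prod_{j=1}^{i}(\alpha^{\ast}\varepsilon_{j}^{\ast})\,(x_{0}-\bar{x}_{0}^{\ast}),
\]
which decays geometrically in $\mathbb{P}^{\ast}$-probability. By Lipschitz continuity of $f$, the Cesàro average of $|f(x_{i}^{\ast})-f(\bar{x}_{i}^{\ast})|$ is then $o_{p^{\ast}}(1)$. To obtain an $L^{1}$ version for (\ref{Res2}), I would choose a small $q>0$ with $\mathbb{E}(\alpha_{0}\varepsilon)^{q}<1$. Lemma \ref{lem-A11-CPR} and consistency of $\hat{\alpha}$ give $\mathbb{E}^{\ast}(\alpha^{\ast}\varepsilon^{\ast})^{q}\le\rho<1$ a.s. eventually, and the domination (\ref{app:-dom}) is then applied with $p\le q$.

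For the stationary bootstrap process, the law of large numbers in $\mathbb{P}^{\ast}$ follows from geometric ergodicity: it is a Markov chain with the contraction above. Concretely, I would bound $\mathbb{V}^{\ast}(n^{-1}\sum f(\bar{x}_{i}^{\ast}))$ through the $q$-contraction, using that covariances between $f(\bar{x}_{i}^{\ast})$ and $f(\bar{x}_{i+k}^{\ast})$ decay like $\rho^{k}$, via Lipschitz coupling of two copies. It remains to show $\mathbb{E}^{\ast}f(\bar{x}_{0}^{\ast})\rightarrow_{\text{a.s.}}\mathbb{E}f(x_{0})$. For this I would represent
\[
\bar{x}_{0}^{\ast}=\omega^{\ast}\sum_{k\ge0}\prod_{j=0}^{k}\alpha^{\ast k}\varepsilon_{-j}^{\ast}\ldots
\]
as a series functional of i.i.d. innovations. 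The distribution of $\varepsilon^{\ast}$ converges weakly a.s. to that of $\varepsilon$ by Lemma \ref{lem-A11-CPR}, and $(\omega^{\ast},\alpha^{\ast})\rightarrow\theta_{0}$ a.s. Truncating the series at $K$ terms gives weak convergence of the truncated sum by the continuous mapping theorem. The tails are controlled uniformly by the $q$-th moment contraction. Uniform integrability of $f(\bar{x}_{0}^{\ast})$ comes from (\ref{app:-dom}) with $p<q$ and a uniform bound on $\mathbb{E}^{\ast}(\bar{x}_{0}^{\ast})^{q}$.

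The main obstacle is this last step: uniform (in $T$) control of the contraction and the moments of the stationary bootstrap solution. The difficulty is that for $\kappa\le1$ only small moments are available, so every bound must be carried out in $L^{q}$ with $q<\kappa$ using $|a+b|^{q}\le|a|^{q}+|b|^{q}$. The domination (\ref{app:-dom}), valid for every $p>0$, is exactly what makes this work.
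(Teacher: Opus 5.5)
Your proposal is correct, but it decomposes the problem differently from the paper.

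\textbf{The paper's route.} The paper couples $x_i^{\ast}$ with the stationary solution $x_{i,\infty}^{\ast}$, which is driven by the bootstrap innovations but uses the \emph{true} parameter $\theta_0$. This requires four steps:
(i) a recursive $L^{s}$ bound for the effect of replacing $\hat\theta_{n(T)}$ by $\theta_0$;
(ii) geometric forgetting of $x_0$;
(iii) a Chebyshev/covariance-decay LLN for $f(x_{i,\infty}^{\ast})$;
(iv) $\mathbb{E}^{\ast}[f(x_{i,\infty}^{\ast})]\to\mathbb{E}[f(x_i)]$ by truncating the series, where only the innovation law varies with $T$.

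\textbf{Your route.} You couple instead with the stationary solution driven by $\hat\theta_{n(T)}$ itself. Step (i) then disappears, and the discrepancy due to the initial value is the exact product $\prod_{j\le i}(\hat\alpha_{n(T)}\varepsilon_j^{\ast})(x_0-\bar x_0^{\ast})$. The price is paid in your last step. There the truncated series depends jointly on $(\hat\omega_{n(T)},\hat\alpha_{n(T)})$ and on the innovation law. In addition, the contraction $\mathbb{E}^{\ast}[(\hat\alpha_{n(T)}\varepsilon^{\ast})^{q}]\le\rho$ must hold uniformly for a $T$-dependent stationary law. You handle both correctly, and the paper in fact establishes the same contraction inside its step (i).

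Similarly, you prove the deterministic-$n$ statement first and transfer it via Lemma~\ref{lem-replace-n-by-nT-new}(a) and Gut (2009). This is a legitimate alternative to the paper's direct treatment of $n(T)$. It also delivers the deterministic-$n$ form, which is how Lemmas~\ref{lem-A2-CPR} and~\ref{lem-A5-CPR} actually invoke the result.

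\textbf{Points to tighten when writing it out.}
\begin{enumerate}
\item The detour through $\mathbb{E}^{\ast}\log(\hat\alpha_{n(T)}\varepsilon^{\ast})$ is unnecessary. The $q$-moment contraction already gives existence of $\bar x^{\ast}$ by Jensen.
\item Lipschitz continuity alone does not control $L^{1}$ or $L^{2}$ differences. When $\kappa<1$ one has $\hat\alpha_{n(T)}>1$ eventually and $\mathbb{E}^{\ast}[\varepsilon^{\ast}]=1$, so $\mathbb{E}^{\ast}[\bar x_0^{\ast}]=\infty$. This affects two places:
\begin{itemize}
\item your $L^{1}$ step, where you already point to (\ref{app:-dom});
\item your covariance bound, where you appeal only to ``Lipschitz coupling''.
\end{itemize}
Both need the interpolation $|f(x)-f(y)|\le C(x^{q/2}+y^{q/2}+1)|x-y|^{q/2}$. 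This follows from Lipschitz continuity together with (\ref{app:-dom}) for small $p$, exactly as in the paper. The covariance decay is then geometric at a rate such as $\rho^{k/4}$ rather than $\rho^{k}$, which still suffices.
\item The series should read $\bar x_0^{\ast}=\hat\omega_{n(T)}\sum_{k\ge 0}\hat\alpha_{n(T)}^{k}\prod_{j=0}^{k}\varepsilon_{-j}^{\ast}$.
\end{enumerate}
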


Before we give the proof, note the following.

\begin{remark}
\label{remark-candidate-functions}Consider for any $\omega,\alpha\in
(0,\infty)$, $c,b\in\mathbb{R},$ and\ $k\in\mathbb{N}$, the functions 
\begin{equation}
f_{1}(x)=\log(\omega+\alpha x),\text{ \ \ and \ \ }f_{2}(x)=\left( \frac{c+bx%
}{\omega+\alpha x}\right) ^{k},\text{ \ \ }x\in\lbrack0,\infty).
\label{app:-examples}
\end{equation}
Note that $f_{i},i=1,2$ satisfy the properties required in Lemma \ref%
{recursive-bootstrap-LLN}. The Lipschitz property holds by the mean value
theorem%
\begin{align*}
|f_{1}(x)-f_{1}(y)| & \leq\frac{\alpha}{\omega+\alpha\bar{x}}|x-y|\leq C|x-y|
\\
|f_{2}(x)-f_{2}(y)| & \leq\left\vert k\left( \frac{c+b\bar{x}}{\omega+\alpha%
\bar{x}}\right) ^{k-1}\frac{1}{\omega+\alpha\bar{x}}\left( b-\alpha\frac{c+b%
\bar{x}}{\omega+\alpha\bar{x}}\right) \right\vert |x-y|\leq C|x-y|,
\end{align*}
with $\bar{x}$\ (generally different in the two expansions) a point between $%
x$ and $y$. The bound in (\ref{app:-dom}) is trivial for $f_{2}$, $%
|f_{2}(x)|\leq C$. For $f_{1}$, note that it is bounded from below and
increasing, and hence it suffices to show that%
\begin{equation*}
M_{p}=\inf\{M\geq0:\forall x\geq M:|f_{1}(x)|\leq x^{p}\}<\infty\text{.}
\end{equation*}
That is, $|f_{1}(x)|\leq x^{p}$ for all $x\geq M_{p}$, while $\max
_{x\in\lbrack0,M_{p}]}|f_{1}(x)|$ will be finite by continuity of $f_{1}$.
That $M_{p}<\infty$ follows using L'H\^{o}pital's rule%
\begin{equation*}
\lim_{x\rightarrow\infty}\frac{|f_{1}(x)|}{x^{p}}=\lim_{x\rightarrow\infty }%
\frac{\alpha}{px^{p-1}\omega+px^{p}\alpha}=0.
\end{equation*}
\end{remark}

\noindent \textsc{Proof of Lemma \ref{recursive-bootstrap-LLN}:}\ Throughout
the proof, we use $s$ to denote a constant $s\in (0,1)$ for which $\mathbb{E}%
[(\alpha _{0}\varepsilon _{i})^{s}]<1$. Such an $s$ exists since $\mathbb{E}%
[\log (\alpha _{0}\varepsilon _{i})]=\lim_{s\rightarrow 0}(\mathbb{E}%
[(\alpha _{0}\varepsilon _{i})^{s}]-1)/s$ and $\mathbb{E}[\log (\alpha
_{0}\varepsilon _{i})]<0$ by assumption. Next, introduce some notation.\ The
variable $x_{i}^{\ast }(\theta ;x)$ denotes%
\begin{equation}
x_{i}^{\ast }(\theta ;x)=(\omega +\alpha x_{i-1}^{\ast }(\theta
;x))\varepsilon _{i}^{\ast },\quad i\geq 1  \label{def-aux-boot-variables}
\end{equation}%
with $x_{0}^{\ast }(\theta ;x)=x$, and in particular that $x_{i}^{\ast
}=x_{i}^{\ast }(\hat{\theta}_{n(T)};x_{0})$. Next, we write $x_{i,\infty
}^{\ast }$ to denote%
\begin{equation*}
x_{i,\infty }^{\ast }=\omega _{0}\sum_{j=0}^{\infty }\alpha
_{0}^{j}\prod_{m=0}^{j}\varepsilon _{i-m}^{\ast },\quad i\geq 1,
\end{equation*}%
The process $\{x_{i,\infty }^{\ast }\}$ is well-defined for all $T$ large in
the sense that $\mathbb{E}^{\ast }[(x_{i,\infty }^{\ast })^{s}]=O_{\text{a.s.%
}}(1)$ as $T\rightarrow \infty $, as by Lemma \ref{lem-A11-CPR}, with
probability one,%
\begin{equation*}
\lim_{T\rightarrow \infty }\mathbb{E}^{\ast }[(\alpha _{0}\varepsilon
_{1}^{\ast })^{s}]=\mathbb{E}[(\alpha _{0}\varepsilon _{1})^{s}],\quad 
\mathbb{E}[(\alpha _{0}\varepsilon _{1})^{s}]<1,
\end{equation*}%
implying the existence (with probability one) of a random but finite $T_{0}$
such that $\mathbb{E}^{\ast }[(\alpha _{0}\varepsilon _{1}^{\ast
})^{s}]<\rho $, for some $0\leq \rho <1$, for all $T\geq T_{0}$. The
statement of the lemma holds as $T\rightarrow \infty $, and we use $T\geq
T_{0}$ throughout the proof. In particular,%
\begin{equation}
\mathbb{E}^{\ast }[(x_{i,\infty }^{\ast })^{s}]\leq (\omega _{0}/\alpha
_{0})^{s}\sum_{j=0}^{\infty }(\mathbb{E}^{\ast }[(\alpha _{0}\varepsilon
_{1}^{\ast })^{s}])^{j}\leq C.  \label{bound-on-boot-s-moment}
\end{equation}%
With this notation, observe%
\begin{gather*}
\frac{1}{n(T)}\sum_{i=1}^{n(T)}f(x_{i}^{\ast })-\mathbb{E}[f(x_{i})]=\frac{1%
}{n(T)}\sum_{i=1}^{n(T)}\left( f(x_{i}^{\ast })-f(x_{i}^{\ast }(\theta
_{0},x_{0})\right) \\
+\frac{1}{n(T)}\sum_{i=1}^{n(T)}\left( f(x_{i}^{\ast }(\theta
_{0},x_{0}))-f(x_{i,\infty }^{\ast })\right) +\frac{1}{n(T)}%
\sum_{i=1}^{n(T)}f(x_{i,\infty }^{\ast })-\mathbb{E}^{\ast }[f(x_{i,\infty
}^{\ast })] \\
+\mathbb{E}^{\ast }[f(x_{i,\infty }^{\ast })]-\mathbb{E}[f(x_{i})],
\end{gather*}%
and likewise,%
\begin{gather*}
\frac{1}{n(T)}\sum_{i=1}^{n(T)}\mathbb{E}^{\ast }[f(x_{i}^{\ast })]-\mathbb{E%
}[f(x_{i})]=\frac{1}{n(T)}\sum_{i=1}^{n(T)}\mathbb{E}^{\ast }[f(x_{i}^{\ast
})-f(x_{i}^{\ast }(\theta _{0},x_{0}))] \\
+\frac{1}{n(T)}\sum_{i=1}^{n(T)}\mathbb{E}^{\ast }[f(x_{i}^{\ast }(\theta
_{0},x_{0}))-f(x_{i,\infty }^{\ast })]+\mathbb{E}^{\ast }[f(x_{i,\infty
}^{\ast })]-\mathbb{E}[f(x_{i})].
\end{gather*}%
\newline
We show (\ref{Res1}) and (\ref{Res2}) by establishing, as $T\rightarrow
\infty $: 
\begin{align*}
& \text{(i)}\text{:}\text{ }\frac{1}{n(T)}\sum_{i=1}^{n(T)}\mathbb{E}^{\ast
}[|f(x_{i}^{\ast })-f(x_{i}^{\ast }(\theta _{0},x_{0}))|]\rightarrow _{\text{%
a.s.}}0 \\
& \text{(ii)}\text{:}\text{ }\frac{1}{n(T)}\sum_{i=1}^{n(T)}\mathbb{E}^{\ast
}[|f(x_{i}^{\ast }(\theta _{0},x_{0}))-f(x_{i,\infty }^{\ast })|]\rightarrow
_{\text{a.s.}}0 \\
& \text{(iii)}\text{:}\text{ }\frac{1}{n(T)}\sum_{i=1}^{n(T)}f(x_{i,\infty
}^{\ast })-\mathbb{E}^{\ast }[f(x_{i,\infty }^{\ast })]\overset{p^{\ast }}{%
\rightarrow }_{\text{a.s.}}0 \\
& \text{(iv)}\text{:}\text{ }\mathbb{E}^{\ast }[f(x_{i,\infty }^{\ast })]-%
\mathbb{E}[f(x_{i})]\rightarrow _{\text{a.s.}}0
\end{align*}

\emph{Establishing (i):} It holds for any $k\in\mathbb{N}$ that%
\begin{align*}
|f(x)-f(y)|^{k} & =|f(x)-f(y)|^{k-s/2}|f(x)-f(y)|^{s/2} \\
& \leq C(|f(x)|+|f(y)|)^{k-s/2}|x-y|^{s/2}
\end{align*}
using Lipschitz continuity. Next since $|f(x)|\leq|x|^{p}+c_{p}$ for any $%
p>0 $, setting $p=(s/2)/(k-s/2)$ yields 
\begin{equation*}
(|f(x)|+|f(y)|)^{k-s/2}\leq(|x|^{p}+|y|^{p}+C)^{(s/2)/p}\leq
C(|x|^{s/2}+|y|^{s/2}+C),
\end{equation*}
using twice that $(|z|+|w|)^{m}\leq2^{m}(|z|^{m}+|w|^{m})$ for any $m\geq0$.
Thus%
\begin{equation*}
|f(x)-f(y)|^{k}\leq C(|x|^{s/2}+|y|^{s/2}+C)|x-y|^{s/2}.
\end{equation*}
To save notation, set w.l.o.g. the first $C=1$ and the second $C=0$ for the
rest of the proof, such that%
\begin{equation*}
|f(x)-f(y)|^{k}\leq(|x|^{s/2}+|y|^{s/2})|x-y|^{s/2},
\end{equation*}
for any $k$. Using this, we obtain for any $i$%
\begin{align*}
\mathbb{E}^{\ast}[|f(x_{i}^{\ast})-f(x_{i}^{\ast}(\theta_{0},x_{0}))|^{k}] &
\leq\mathbb{E}^{\ast}[(|x_{i}^{\ast}|^{s/2}+|x_{i}^{\ast}(%
\theta_{0},x_{0})|^{s/2})|x_{i}^{\ast}-x_{i}^{\ast}(\theta_{0},x_{0})|^{s/2}]
\\
& \leq(\mathbb{E}^{\ast}[(|x_{i}^{\ast}|^{s/2}+|x_{i}^{\ast}(%
\theta_{0},x_{0})|^{s/2})^{2}])^{1/2} \\
& \cdot(\mathbb{E}^{\ast}[|x_{i}^{\ast}-x_{i}^{\ast}(%
\theta_{0},x_{0})|^{s}])^{1/2}
\end{align*}
by Cauchy-Schwarz inequality. Using again $(|z|+|w|)^{m}%
\leq2^{m}(|z|^{m}+|w|^{m})$ for $m>0$,%
\begin{equation*}
\mathbb{E}^{\ast}[(|x_{i}^{\ast}|^{s/2}+|x_{i}^{\ast}(%
\theta_{0},x_{0})|^{s/2})^{2}]\leq4(\mathbb{E}^{\ast}[|x_{i}^{\ast}|^{s}]+%
\mathbb{E}^{\ast }[|x_{i}^{\ast}(\theta_{0},x_{0})|^{s}]).
\end{equation*}
By substitution we have%
\begin{equation}
x_{i}^{\ast}(\theta,x)=\omega\sum_{j=0}^{i-1}\alpha^{j}\prod_{k=0}^{j}%
\varepsilon_{i-k}^{\ast}+x\alpha^{i}\prod_{k=0}^{i-1}\varepsilon
_{i-k}^{\ast},  \label{eq-xstar-substi}
\end{equation}
and hence as $\mathbb{E}^{\ast}[(\hat{\alpha}_{n(T)}\varepsilon_{1})^{s}]%
\rightarrow_{\text{a.s.}}\mathbb{E}[(\alpha_{0}\varepsilon_{1})^{s}]$ as $%
T\rightarrow\infty$ by $\hat{\alpha}_{n(T)}\rightarrow_{\text{a.s.}%
}\alpha_{0}$ and Lemma \ref{lem-A11-CPR}, and $\mathbb{E}[(\alpha
_{0}\varepsilon_{1})^{s}]<1$, it follows $\mathbb{E}^{\ast}[(\hat{\alpha }%
_{n(T)}\varepsilon_{1})^{s}]<\rho$ with probability one for all $T$
sufficiently large. Therefore uniformly in $i$ 
\begin{equation*}
\mathbb{E}^{\ast}[|x_{i}^{\ast}|^{s}]\leq(\hat{\omega}_{n(T)}/\hat{\alpha }%
_{n(T)})^{s}\frac{1}{1-\rho}+x_{0}^{s}=O_{\text{a.s.}}(1)\text{,\quad as }%
T\rightarrow\infty.
\end{equation*}
The same argument applies to $\mathbb{E}^{\ast}[|x_{i}^{\ast}(%
\theta_{0},x_{0})|^{s}]$. Collecting terms,%
\begin{equation}
\mathbb{E}^{\ast}[|f(x_{i}^{\ast})-f(x_{i}^{\ast}(\theta_{0},x_{0}))|^{k}]%
\leq C_{T}(\mathbb{E}^{\ast}[|x_{i}^{\ast}-x_{i}^{\ast}(%
\theta_{0},x_{0})|^{s}])^{1/2}  \label{eq-key-ineq}
\end{equation}
for some $C_{T}=O_{\text{a.s.}}(1)$. Before we proceed on $%
|x_{i}^{\ast}-x_{i}^{\ast}(\theta_{0},x_{0})|$, notice that (\ref%
{eq-key-ineq}) is valid for any bootstrap variables $X^{\ast}$ and $Y^{\ast}$
satisfying $\mathbb{E}^{\ast}[(X^{\ast})^{s}],\mathbb{E}^{\ast}[(Y^{%
\ast})^{s}]=O_{\text{a.s.}}(1)$ as $T\rightarrow\infty$, i.e.%
\begin{equation}
\mathbb{E}^{\ast}[|f(X^{\ast})-f(Y^{\ast})|^{k}]\leq C_{T}(\mathbb{E}^{\ast
}[|X^{\ast}-Y^{\ast}|^{s}])^{1/2},  \label{eq-key-boot-ineq}
\end{equation}
which we will use repeatedly. Considering $|x_{i}^{\ast}-x_{i}^{\ast}(%
\theta_{0},x_{0})|$, it follows that%
\begin{align*}
|x_{i}^{\ast}-x_{i}^{\ast}(\theta_{0},x_{0})| & =|(\hat{\omega}_{n(T)}+\hat{%
\alpha}_{n(T)}x_{i-1}^{\ast})\varepsilon_{i}^{\ast}-(\omega_{0}+%
\alpha_{0}x_{i-1}^{\ast}(\theta_{0},x_{0}))\varepsilon_{i}^{\ast}| \\
& \leq|\hat{\omega}_{n(T)}-\omega_{0}|\varepsilon_{i}^{\ast}+|\hat{\alpha }%
_{n(T)}-\alpha_{0}|\varepsilon_{i}^{\ast}x_{i-1}^{\ast}+\alpha_{0}%
\varepsilon_{i}^{\ast}|x_{i-1}^{\ast}-x_{i-1}^{\ast}(\theta_{0},x_{0})|,
\end{align*}
and hence as $\varepsilon_{i}^{\ast}$ and $(x_{i-1}^{\ast},x_{i-1}^{\ast
}(\theta_{0},x_{0}))$ are conditionally independent, it holds%
\begin{equation*}
\mathbb{E}^{\ast}[|x_{i}^{\ast}-x_{i}^{\ast}(\theta_{0},x_{0})|^{s}]\leq
C_{T}(|\hat{\omega}_{n(T)}-\omega_{0}|^{s}+|\hat{\alpha}_{n(T)}-\alpha
_{0}|^{s})+\mathbb{E}^{\ast}[(\alpha_{0}\varepsilon_{i}^{\ast})^{s}]\mathbb{E%
}^{\ast}[|x_{i-1}^{\ast}-x_{i-1}^{\ast}(\theta_{0},x_{0})|^{s}],
\end{equation*}
using again that $\mathbb{E}^{\ast}[(x_{i-1}^{\ast})^{s}]$ is almost surely
bounded uniformly in $i$. Again, since $\mathbb{E}^{\ast}[(\alpha
_{0}\varepsilon_{i}^{\ast})^{s}]<\rho<1$, we have by recursion%
\begin{equation*}
\mathbb{E}^{\ast}[|x_{i}^{\ast}-x_{i}^{\ast}(\theta_{0},x_{0})|^{s}]\leq
C_{T}\frac{|\hat{\omega}_{n(T)}-\omega_{0}|^{s}+|\hat{\alpha}%
_{n(T)}-\alpha_{0}|^{s}}{1-\rho}\rightarrow_{\text{a.s.}}0,
\end{equation*}
uniformly in $i$. We conclude $\frac{1}{n^{\ast}}\sum_{i=1}^{n^{\ast}}%
\mathbb{E}^{\ast}[|f(x_{i}^{\ast})-f(x_{i}^{\ast}(\theta_{0},x_{0}))|^{k}]%
\rightarrow_{\text{a.s.}}0$ for any $k\in\mathbb{N}$. To prove (i), we only
needed $k=1$, but we will reuse some of the established results later.

\emph{Establishing (ii)}: We know from the above that $\mathbb{E}^{\ast
}[|x_{i}^{\ast}(\theta_{0},x_{0})|^{s}]\leq C_{T}$ for some $C_{T}=O_{\text{%
a.s.}}(1)$ and that $\mathbb{E}^{\ast}[|x_{i,\infty}^{\ast}|^{s}]$ has the
same property. Hence using (\ref{eq-key-boot-ineq}), it holds%
\begin{equation*}
\mathbb{E}^{\ast}[|f(x_{i}^{\ast}(\theta_{0},x_{0}))-f(x_{i,\infty}^{\ast
})|]\leq C_{T}(\mathbb{E}^{\ast}[|x_{i,\infty}^{\ast}-x_{i}^{\ast}(\theta
_{0},x_{0})|^{s}])^{1/2}.
\end{equation*}
Reusing the substitution in (\ref{eq-xstar-substi}), 
\begin{equation*}
|x_{i,\infty}^{\ast}-x_{i}^{\ast}(\theta_{0},x_{0})|\leq\omega_{0}\sum
_{j=i}^{\infty}\alpha_{0}^{j}\prod_{m=0}^{j}\varepsilon_{i-m}^{\ast}+x_{0}%
\alpha_{0}^{i}\prod_{k=0}^{i-1}\varepsilon_{i-k}^{\ast},
\end{equation*}
and hence, as $\mathbb{E}^{\ast}[(\alpha_{0}\varepsilon_{1}^{\ast})^{s}]<%
\rho<1$, it holds with probability one%
\begin{equation*}
\mathbb{E}^{\ast}[|x_{i}^{\ast}(\theta_{0},x_{0})-x_{i,\infty}^{\ast}|^{s}]%
\leq\omega_{0}\sum_{j=i}^{\infty}\rho^{j}+x_{0}^{s}\rho^{i}=\rho ^{i}\left( 
\frac{\omega_{0}}{1-\rho}+x_{0}^{s}\right) .
\end{equation*}
Therefore by Cesaro convergence%
\begin{equation*}
\frac{1}{n^{\ast}}\sum_{i=1}^{n^{\ast}}\mathbb{E}^{\ast}[|f(x_{i}^{\ast
}(\theta_{0},x_{0}))-f(x_{i,\infty}^{\ast})|]\leq C_{T}\frac{1}{n^{\ast}}%
\sum_{i=1}^{n^{\ast}}\rho^{i/2}\rightarrow_{\text{a.s.}}0,
\end{equation*}
as $n^{\ast}=n(T)\rightarrow\infty$ a.s. as $T\rightarrow\infty$.

\emph{Establishing (iii):} By Chebyshev's (conditional) inequality%
\begin{equation*}
\mathbb{P}^{\ast}\left( \left\vert \frac{1}{n^{\ast}}\sum_{i=1}^{n^{%
\ast}}f(x_{i,\infty}^{\ast})-\mathbb{E}^{\ast}[f(x_{i,\infty}^{\ast})]\right%
\vert >\delta\right) \leq
\end{equation*}%
\begin{equation*}
\frac{1}{\left( \delta n^{\ast}\right) ^{2}}\sum_{i=1}^{n^{\ast}}\sum
_{j=1}^{n^{\ast}}\left( \mathbb{E}^{\ast}[f(x_{j,\infty}^{\ast})f(x_{i,%
\infty}^{\ast})]-\mathbb{E}^{\ast}[f(x_{j,\infty}^{\ast})]\mathbb{E}%
^{\ast}[f(x_{i,\infty}^{\ast})]\right)
\end{equation*}
By stationarity of $x_{i,\infty}^{\ast}$ conditional on the data, it holds
for any $i,j$ that 
\begin{equation}
\mathbb{E}^{\ast}[f(x_{j,\infty}^{\ast})f(x_{i,\infty}^{\ast})]-\mathbb{E}%
^{\ast}[f(x_{j,\infty}^{\ast})]\mathbb{E}^{\ast}[f(x_{i,\infty}^{\ast })]=%
\mathbb{E}^{\ast}[f(x_{m,\infty}^{\ast})f(x_{0,\infty}^{\ast})]-\left( 
\mathbb{E}^{\ast}[f(x_{0,\infty}^{\ast})]\right) ^{2},
\label{eq-initial-relation-iii}
\end{equation}
with $m=|i-j|$. Observe that by definition, 
\begin{equation*}
x_{m,\infty}^{\ast}=\omega_{0}\sum_{j=0}^{\infty}\alpha_{0}^{j}%
\prod_{k=0}^{j}\varepsilon_{m-k}^{\ast}=\omega_{0}\sum_{j=0}^{m-1}%
\alpha_{0}^{j}\prod_{k=0}^{j}\varepsilon_{m-k}^{\ast}+x_{0,\infty}^{\ast}%
\alpha_{0}^{m}\prod_{k=0}^{m-1}\varepsilon_{m-k}^{\ast},
\end{equation*}
such that by the definition of $x_{i}^{\ast}(\theta,x)$ in (\ref%
{def-aux-boot-variables})%
\begin{equation*}
x_{m,\infty}^{\ast}=x_{m}^{\ast}(\theta_{0},0)+x_{0,\infty}^{\ast}\alpha
_{0}^{m}\prod_{k=0}^{m-1}\varepsilon_{m-k}^{\ast},
\end{equation*}
where, importantly, $x_{m}^{\ast}(\theta_{0},0)$ is independent of $%
x_{0,\infty}^{\ast}$ under $\mathbb{P}^{\ast}$. By simple rewriting,%
\begin{align*}
\mathbb{E}^{\ast}[f(x_{m,\infty}^{\ast})f(x_{0,\infty}^{\ast})] &=\mathbb{E}%
^{\ast}[f(x_{m}^{\ast}(\theta_{0},0))] \mathbb{E}^{\ast}[f(x_{0,\infty}^{%
\ast})] \\
&\quad+\mathbb{E}^{\ast}\!\left[ f(x_{0,\infty}^{\ast})
\{f(x_{m,\infty}^{\ast})-f(x_{m}^{\ast}(\theta_{0},0))\} \right].
\end{align*}
Inserting into (\ref{eq-initial-relation-iii}), we obtain%
\begin{equation*}
\left\vert \mathbb{E}^{\ast}[f(x_{m,\infty}^{\ast})f(x_{0,\infty}^{\ast
})]-\left( \mathbb{E}^{\ast}[f(x_{0,\infty}^{\ast})]\right) ^{2}\right\vert
\leq R_{1,m}+R_{2,m}
\end{equation*}
with%
\begin{align*}
R_{1,m} & =\mathbb{E}^{\ast}[\left\vert f(x_{0,\infty}^{\ast})\right\vert
]\left\vert \mathbb{E}^{\ast}[f(x_{m}^{\ast}(\theta_{0},0))]-\mathbb{E}%
^{\ast }[f(x_{0,\infty}^{\ast})]\right\vert \\
R_{2,m} & =\mathbb{E}^{\ast}\left[ \left\vert f(x_{0,\infty}^{\ast
})\right\vert \left\vert f(x_{m,\infty}^{\ast})-f(x_{m}^{\ast}(\theta
_{0},0))\right\vert \right] .
\end{align*}
Considering initially $R_{1,m}$ recall that $|f(x)|\leq x^{s}+c_{s}$, and $%
\mathbb{E}^{\ast}[(x_{0,\infty}^{\ast})^{s}]=O_{\text{a.s.}}(1)$, such that%
\begin{equation*}
R_{1,m}\leq C_{T}\left\vert \mathbb{E}^{\ast}[f(x_{m}^{\ast}(\theta
_{0},0))]-\mathbb{E}^{\ast}[f(x_{0,\infty}^{\ast})]\right\vert .
\end{equation*}
Next since $\mathbb{E}^{\ast}[f(x_{0,\infty}^{\ast})]=\mathbb{E}^{\ast
}[f(x_{m,\infty}^{\ast})]$, it holds%
\begin{equation*}
R_{1,m}\leq C_{T}\left\vert \mathbb{E}^{\ast}[f(x_{m}^{\ast}(\theta
_{0},0))]-\mathbb{E}^{\ast}[f(x_{m,\infty}^{\ast})]\right\vert \leq C_{T}%
\mathbb{E}^{\ast}[\left\vert
f(x_{m,\infty}^{\ast})-f(x_{m}^{\ast}(\theta_{0},0))\right\vert ].
\end{equation*}
Noting that $x_{m}^{\ast}(\theta_{0},0)\leq x_{m,\infty}^{\ast}$ and $%
\mathbb{E}^{\ast}[(x_{m,\infty}^{\ast})^{s}]=O_{\text{a.s.}}(1)$, we use (%
\ref{eq-key-boot-ineq}) to obtain the first inequality 
\begin{align*}
\mathbb{E}^{\ast}[\left\vert f(x_{m,\infty}^{\ast})-f(x_{m}^{\ast}(\theta
_{0},0))\right\vert ] & \leq C_{T}\mathbb{E}^{\ast}\left[ |x_{m,\infty
}^{\ast}-x_{m}^{\ast}(\theta_{0},0)|^{s}\right] ^{1/2} \\
& \leq C_{T}\rho^{m/2}
\end{align*}
The second inequality follows from 
\begin{equation*}
\mathbb{E}^{\ast}\!\left[|x_{m,\infty}^{\ast}-x_{m}^{\ast}(\theta_{0},0)|^{s}%
\right] = \mathbb{E}^{\ast}[(x_{0,\infty}^{\ast})^{s}] \left(\mathbb{E}%
^{\ast}[(\alpha_{0}\varepsilon_{1}^{\ast})^{s}]\right)^{m} \leq \mathbb{E}%
^{\ast}[(x_{0,\infty}^{\ast})^{s}]\rho^{m},
\end{equation*}
and from $\mathbb{E}^{\ast}[(x_{0,\infty}^{\ast})^{s}]=O_{\text{a.s.}}(1)$.
Hence $R_{1,m}\leq C_{T}\rho^{m/2}$.

For the $R_{2,m}$-term, we note by Cauchy-Schwarz that%
\begin{align*}
R_{2,m} & =\mathbb{E}^{\ast}\left[ \left\vert f(x_{0,\infty}^{\ast
})\right\vert \left\vert f(x_{m,\infty}^{\ast})-f(x_{m}^{\ast}(\theta
_{0},0))\right\vert \right] \\
& \leq\mathbb{E}^{\ast}[|f(x_{0,\infty}^{\ast})|^{2}]^{1/2}\mathbb{E}^{\ast
}[|f(x_{m,\infty}^{\ast})-f(x_{m}^{\ast}(\theta_{0},0))|^{2}]^{1/2}.
\end{align*}
Again, since $(|f(x)|)^{2}\leq(x^{s/2}+c_{s/2})^{2}\leq4(x^{s}+c_{s/2}^{2}) $%
, we have $\mathbb{E}^{\ast}[|f(x_{0,\infty}^{\ast})|^{2}]=O_{\text{a.s.}%
}(1) $ as $T\rightarrow\infty$.

Similarly, since (\ref{eq-key-boot-ineq}) holds for any $k$, repeating the
arguments from $R_{1,m}$ yields%
\begin{equation*}
R_{2,m}\leq C_{T}\rho^{m/4}
\end{equation*}
In summary, recalling $m=|i-j|$ we have shown that%
\begin{equation*}
\frac{1}{\left( n^{\ast}\right) ^{2}}\sum_{i=1}^{n^{\ast}}\sum
_{j=1}^{n^{\ast}}\left\vert \mathbb{E}^{\ast}[f(x_{|i-j|,\infty}^{\ast
})f(x_{0,\infty}^{\ast})]-\left( \mathbb{E}^{\ast}[f(x_{0,\infty}^{\ast
})]\right) ^{2}\right\vert \leq C_{T}\frac{1}{\left( n^{\ast}\right) ^{2}}%
\sum_{i=1}^{n^{\ast}}\sum_{j=1}^{n^{\ast}}\rho^{|i-j|/4}\rightarrow _{\text{%
a.s.}}0
\end{equation*}
as $T\rightarrow\infty$, as desired.

\emph{Establishing (iv): }We establish (iv) by showing that for any $\delta
>0$%
\begin{equation*}
\lim \sup_{T\rightarrow \infty }|\mathbb{E}^{\ast }[f(x_{m,\infty }^{\ast
})]-\mathbb{E}[f(x_{m})]|\leq \delta
\end{equation*}%
with probability one. To show this, note that for any $m$%
\begin{equation*}
x_{m}^{\ast }(\theta _{0},0)=\omega _{0}\sum_{j=0}^{m-1}\alpha
_{0}^{j}\prod_{k=0}^{j}\varepsilon _{m-k}^{\ast }
\end{equation*}%
and introduce similarly the non-bootstrap analog%
\begin{equation*}
x_{m}(\theta _{0},0)=\omega _{0}\sum_{j=0}^{m-1}\alpha
_{0}^{j}\prod_{k=0}^{j}\varepsilon _{m-k}.
\end{equation*}%
Decomposing%
\begin{align*}
& |\mathbb{E}^{\ast }[f(x_{m,\infty }^{\ast })]-\mathbb{E}[f(x_{m})]| \\
& \leq \mathbb{E}^{\ast }\!\left[ |f(x_{m,\infty }^{\ast })-f(x_{m}^{\ast
}(\theta _{0},0))|\right] \\
& \quad +\left\vert \mathbb{E}^{\ast }[f(x_{m}^{\ast }(\theta _{0},0))]-%
\mathbb{E}[f(x_{m}(\theta _{0},0))]\right\vert +\mathbb{E}\!\left[
|f(x_{m}(\theta _{0},0))-f(x_{m})|\right] .
\end{align*}%
For the first term, we reuse $|f(x)-f(y)|^{k}\leq
(|x|^{s/2}+|y|^{s/2})|x-y|^{s/2}$, $x_{m}^{\ast }(\theta _{0},0)$ $\leq
x_{m,\infty }^{\ast }$ and Cauchy-Schwarz to obtain 
\begin{align*}
\mathbb{E}^{\ast }[|f(x_{m,\infty }^{\ast })-f(x_{m}^{\ast }(\theta
_{0},0))|]& \leq (\mathbb{E}^{\ast }[|x_{m,\infty }^{\ast }|^{s}])^{1/2}%
\mathbb{E}^{\ast }[|x_{m,\infty }^{\ast }-x_{m}^{\ast }(\theta
_{0},0)|^{s}]^{1/2} \\
& =\mathbb{E}^{\ast }[(x_{m,\infty }^{\ast })^{s}](\mathbb{E}^{\ast
}[(\alpha _{0}\varepsilon _{1}^{\ast })^{s}])^{m/2}\leq \mathbb{E}^{\ast
}[(x_{m,\infty }^{\ast })^{s}]\rho ^{m/2}.
\end{align*}%
using the results establishing (iii). Next, with $C_{T}=\mathbb{E}^{\ast
}[(x_{m,\infty }^{\ast })^{s}]$, note that%
\begin{equation*}
\limsup_{T\rightarrow \infty }\mathbb{E}^{\ast }[(x_{m,\infty }^{\ast
})^{s}]=C<\infty
\end{equation*}%
is a finite \emph{constant}, i.e. non-random. This implies 
\begin{equation*}
\limsup_{T\rightarrow \infty }\mathbb{E}^{\ast }[|f(x_{m,\infty }^{\ast
})-f(x_{m}^{\ast }(\theta _{0},0))|]\leq \rho ^{m/2}C\leq \delta /3
\end{equation*}%
with probability one for all $m$ sufficiently large, say $m\geq m_{0}$, with 
$m_{0}$ deterministic. By completely analogous arguments it holds for the
third, non-bootstrap term%
\begin{equation*}
\mathbb{E}[|f(x_{m}(\theta _{0},0))]-f(x_{m})|]\leq \delta /3
\end{equation*}%
for all $m$ sufficiently large, say $m\geq m_{1}$, again with $m_{1}$
non-random. Fixing $m=\max \{m_{0},m_{1}\}$, we now work to show for the
middle term%
\begin{equation*}
\limsup_{T\rightarrow \infty }|\mathbb{E}^{\ast }[f(x_{m}^{\ast }(\theta
_{0},0))]-\mathbb{E}[f(x_{m}(\theta _{0},0))]|\leq \delta /3
\end{equation*}%
to complete the proof. As a first step, observe that $x_{m}^{\ast }(\theta
_{0},0)$ is a continuous function of $(\varepsilon _{m}^{\ast },\varepsilon
_{m-1}^{\ast },\dots ,\varepsilon _{1}^{\ast })$ and similarly $x_{m}(\theta
_{0},0)$ is the \emph{same} function of $(\varepsilon _{m},\varepsilon
_{m-1},\dots ,\varepsilon _{1})$. Furthermore, $f$ is continuous, and hence%
\begin{equation*}
X^{\ast }=f(x_{m}^{\ast }(\theta _{0},0))
\end{equation*}%
is a continuous function of $(\varepsilon _{m}^{\ast },\varepsilon
_{m-1}^{\ast },\dots ,\varepsilon _{1}^{\ast })$, and $X=f(x_{m}(\theta
_{0},0))$ the same continuous function of $(\varepsilon _{m},\varepsilon
_{m-1},\dots ,\varepsilon _{1})$. Next, we establish the desired%
\begin{equation*}
\limsup_{T\rightarrow \infty }|\mathbb{E}^{\ast }[X^{\ast }]-\mathbb{E}%
[X]|\leq \delta /3.
\end{equation*}%
To do so, introduce%
\begin{align*}
t(x)& =x\mathbb{I}(|x|\leq M)+M(\mathbb{I}(x>M)-\mathbb{I}(x<-M)) \\
r(x)& =x\mathbb{I}(|x|>M)-M(\mathbb{I}(x>M)-\mathbb{I}(x<-M)),
\end{align*}%
for $M\geq 0,$ such that 
\begin{equation*}
\mathbb{E}^{\ast }[X^{\ast }]-\mathbb{E}[X]=\underset{T_{1}^{\ast }}{%
\underbrace{\mathbb{E}^{\ast }[t(X^{\ast })]-\mathbb{E}[t(X)]}}+\underset{%
T_{2}^{\ast }}{\underbrace{\mathbb{E}^{\ast }[r(X^{\ast })]-\mathbb{E}[r(X)]}%
},
\end{equation*}%
since $x=t(x)+r(x)$. For $T_{1}^{\ast }$, as $t\left( \cdot \right) $ is
continuous and bounded for any fixed $M$, 
\begin{equation}
\lim \sup_{T\rightarrow \infty }|\mathbb{E}^{\ast }[t(X^{\ast })]-\mathbb{E}%
[t(X)]|=0  \label{continuous-and-bounded-part}
\end{equation}%
with probability one, provided $X^{\ast }\overset{d^{\ast }}{\rightarrow }_{%
\text{a.s.}}X,$ as $T\rightarrow \infty $. As $X^{\ast }$ is a continuous
function of $(\varepsilon _{m}^{\ast },\varepsilon _{m-1}^{\ast },\dots
,\varepsilon _{1}^{\ast })$, and $X$ is the same function of $(\varepsilon
_{m},\varepsilon _{m-1},\dots ,\varepsilon _{1})$, this follows from 
\begin{equation*}
\mathbb{P}^{\ast }(\varepsilon _{m}^{\ast }\leq y_{m},\dots ,\varepsilon
_{1}^{\ast }\leq y_{1})\rightarrow _{\text{a.s.}}\mathbb{P}(\varepsilon
_{m}\leq y_{m},\dots ,\varepsilon _{1}\leq y_{1}),\text{\quad as }%
T\rightarrow \infty
\end{equation*}%
which as the $\varepsilon _{i}^{\ast }$'s are iid under $\mathbb{P}^{\ast }$%
, holds by the marginal convergence%
\begin{equation*}
\mathbb{P}^{\ast }(\varepsilon _{m}^{\ast }\leq y)\rightarrow _{\text{a.s.}}%
\mathbb{P}(\varepsilon _{m}\leq y),\text{\quad as }T\rightarrow \infty .
\end{equation*}%
using Lemma \ref{lem-A11-CPR}. Next, for $T_{2}^{\ast }$ note that%
\begin{equation*}
|r(x)|\leq (|x|+M)\mathbb{I}(|x|>M)
\end{equation*}%
and therefore%
\begin{align*}
|\mathbb{E}^{\ast }[r(X^{\ast })]-\mathbb{E}[r(X)]|& \leq \mathbb{E}^{\ast
}[(|X^{\ast }|+M)\mathbb{I}(|X^{\ast }|>M)]+\mathbb{E}[(|X|+M)\mathbb{I}%
(|X|>M)] \\
& \leq \frac{2(\mathbb{E}^{\ast }[|X^{\ast }|^{2}]+\mathbb{E}[|X|^{2}])}{M}.
\end{align*}%
Using $X^{\ast }=f(x_{m}^{\ast }(\theta _{0},0))$, and $|f(x)|^{2}\leq
(x^{s/2}+c_{s/2})^{2}\leq 4(x^{s}+c_{s/2}^{2})$, 
\begin{equation*}
\mathbb{E}^{\ast }[|X^{\ast }|^{2}]\leq 4(\mathbb{E}^{\ast }[(x_{m}^{\ast
}(\theta _{0},0))^{s}]+C)\leq 4(\mathbb{E}^{\ast }[(x_{m,\infty }^{\ast
})^{s}]+C)
\end{equation*}%
since $x_{m}^{\ast }(\theta _{0},0)\leq x_{m,\infty }^{\ast }$. In conclusion%
\begin{equation*}
\limsup_{T\rightarrow \infty }\mathbb{E}^{\ast }[|X^{\ast }|^{2}]\leq C
\end{equation*}%
with probability one. By similar arguments, $\mathbb{E}[|X|^{2}]\leq C$, and
we conclude 
\begin{equation*}
\lim \sup_{T\rightarrow \infty }|\mathbb{E}^{\ast }[r(X^{\ast })]-\mathbb{E}%
[r(X)]|\leq \frac{C}{M}\leq \delta /3
\end{equation*}%
with probability one for $T$ large enough, but fixed, $M>0.$\hfill $\square $

\bigskip

\end{document}